\newtheorem{thm}{\hskip\parindent Theorem}[section]
\newtheorem{lem}[thm]{\hskip\parindent Lemma}
\newtheorem{prb}[thm]{\hskip\parindent Problem}
\newtheorem{cor}[thm]{\hskip\parindent Corollary}
\newtheorem{rem}[thm]{\hskip\parindent Remark}
\newtheorem{ex}[thm]{\hskip\parindent Example}
\theoremstyle{definition}
\DeclareMathOperator{\Der}{Der}
\DeclareMathOperator{\wt}{wt}
\begin{document}

\title{Polynomial dynamical systems and~differentiation of~genus~$4$~hyperelliptic~functions}
\author{E.\,Yu.~Bunkova}
\address{Steklov Mathematical Institute of Russian Academy of Sciences, Moscow, Russia}
\email{buchstab@mi-ras.ru, bunkova@mi-ras.ru}

\begin{abstract}
We give an explicit solution to the problem of differentiation of hyperelliptic functions in genus $4$ case. We describe explicitly the polynomial Lie algebras and polynomial dynamical systems connected to this problem.
\end{abstract}
\maketitle

\vspace{-15pt}

\section{Introduction} \label{S0}

Let $g \in \mathbb{N}$. We denote the coordinates in~the complex space $\mathbb{C}^g$ by~$t = (t_1, t_3, \ldots, t_{2g-1})$. For a meromorphic function $f$ on $\mathbb{C}^g$, a vector $\omega \in \mathbb{C}^g$ is~a \emph{period} if $f(t+\omega) = f(t)$ for all~$t \in \mathbb{C}^g$.
If a meromorphic function $f$
has $2g$ independent periods in~$\mathbb{C}^g$, then $f$ is~called an \emph{Abelian function}.
Thus, an Abelian function is a meromorphic~function on the complex torus $\mathbb{C}^g/\Gamma$,
where $\Gamma$ is the lattice formed by the periods. See \cite{BMulti}.

A plane nonsingular algebraic curve of genus $g$ determines a lattice $\Gamma$ as the set of periods of its holomorphic differentials. The torus $\mathbb{C}^g/\Gamma$ is the \emph{Jacobian variety} of the~curve.

In \cite{BL} the problem of differentiation of the field of Abelian functions on Jacobian varieties of genus $g$ curves is considered. In this work we consider a special case of this problem. Namely, we consider the model of
the \emph{universal hyperelliptic curve of genus} $g$
\[
\mathcal{V}_\lambda = \{(x, y)\in\mathbb{C}^2 \colon
y^2 = x^{2g+1} + \lambda_4 x^{2 g - 1}  + \lambda_6 x^{2 g - 2} + \ldots + \lambda_{4 g} x + \lambda_{4 g + 2}\}. 
\]
Each curve is defined by specialization of parameters $\lambda = (\lambda_4, \lambda_6, \ldots, \lambda_{4 g}, \lambda_{4 g + 2}) \in \mathbb{C}^{2 g}$.

The indices of all the coordinates $t = (t_1, t_3, \ldots, t_{2g-1}) \in \mathbb{C}^g$ and of the parameters $\lambda = (\lambda_4, \lambda_6, \ldots, \lambda_{4 g}, \lambda_{4 g + 2}) \in \mathbb{C}^{2 g}$ determine their weights. Namely,
$\wt t_k = - k$ and~$\wt \lambda_k = k$.
For~suitable weights of the~other variables, all the equations in this paper are of~homogeneous weight.

Let $\mathcal{B} \subset \mathbb{C}^{2g}$ be the subspace of parameters such that the curve~$\mathcal{V}_{\lambda}$ is nonsingular for~$\lambda \in~\mathcal{B}$.
Then we have $\mathcal{B} = \mathbb{C}^{2g} \backslash \Sigma$, where $\Sigma$ is~the \emph{discriminant hypersurface} of~the universal curve. In Section \ref{s2} we describe the polynomial vector fields in $\mathcal{B}$ tangent to $\Sigma$.

For each $\lambda \in \mathcal{B}$ the set of periods of holomorphic differentials on the curve $\mathcal{V}_\lambda$ generates a lattice $\Gamma_\lambda$ of rank $2 g$ in $\mathbb{C}^g$.
A \emph{hyperelliptic function of genus}~$g$ is~a~meromorphic function on $\mathbb{C}^g \times \mathcal{B}$
such that, for each $\lambda \in \mathcal{B}$, its restriction to~\text{$\mathbb{C}^g\times\lambda$}
is~an~Abelian function with lattice of periods $\Gamma_\lambda$.
Thus, a hyperelliptic function is a~function defined on an open dense subset of the total space $\mathcal{U}$ of the fiber bundle $\pi\colon \mathcal{U} \to \mathcal{B}$ with fiber over $\lambda \in \mathcal{B}$ the~Jacobian variety $\mathcal{J}_\lambda = \mathbb{C}^g/\Gamma_\lambda$ of the curve~$\mathcal{V}_\lambda$. The universal bundle of Jacobians of hyperelliptic curves $\mathcal{U}$ is introduced in~\cite{DN}.
We~denote by~$\mathcal{F}$ the~field of hyperelliptic functions of genus $g$. 

The \emph{problem of differentiation of hyperelliptic functions} is a genus $g$ analogue of a~genus~$1$ result of \cite{FS}, see also \S 1.2 in \cite{BL}. We consider this problem in the form:
\begin{prb}[Problem 1.1 in \cite{B3}] \text{}  \label{p1}
\begin{enumerate} 
 \item Find the $3g$ generators of the $\mathcal{F}$-module $\Der \mathcal{F}$ of derivations of the field $\mathcal{F}$.
 \item Describe the structure of Lie algebra $\Der \mathcal{F}$ (i.e. find the commutation relations).
\end{enumerate}
\end{prb} 

A general approach to the solution of this problem is given in \cite{BL}. An~overview with examples in the cases of genus $g=1$ elliptic and $g=2$ hyperelliptic curves is given in \cite{BEL18}. 

We use the theory of hyperelliptic Kleinian functions (see \cite{Baker}, \cite{BEL}, \cite{BEL-97}, \cite{BEL-12}, and~\cite{WW} for elliptic functions).
Take coordinates
$(t, \lambda)$
in $\mathbb{C}^g \times \mathcal{B} \subset \mathbb{C}^{3g}$.
Let $\sigma(t, \lambda)$ be the hyperelliptic sigma function. We~denote~$\partial_k = {\partial \over \partial t_k}$.
We use the notation
\[
\zeta_{k} = \partial_k \ln \sigma(t, \lambda), \qquad
\wp_{k_1, \ldots, k_n} = - \partial_{k_1} \cdots \partial_{k_n} \ln \sigma(t, \lambda),
\]
where $n \geqslant 2$ and $k_s \in \{ 1, 3, \ldots, 2 g - 1\}$.
The functions $\wp_{k_1, \ldots, k_n}$ give examples of~hyperelliptic functions. 

Consider the diagram \vspace{-16pt}
\begin{equation} \label{d}
 \xymatrix{
	\mathcal{U} \ar[d]^{\pi} \ar@{-->}[r]^{\varphi} & \mathbb{C}^{3 g} \ar[d]^{\rho}\\
	\mathcal{B} \ar@{^{(}->}[r] & \mathbb{C}^{2g}\\
	}  
\end{equation}
The fiber bundle $\pi: \mathcal{U} \to \mathcal{B}$ and the embedding $\mathcal{B} \subset \mathbb{C}^{2g}$ are described above.
In~Section~\ref{s3} we describe the maps $\varphi$ and $\rho$ following \cite{B3}. 
By Dubrovin--Novikov theorem~\cite{DN}, the space~$\mathcal{U}$ is birationally isomorphic to the complex linear space $\mathbb{C}^{3g}$. We construct such an~isomorphism $\varphi$ explicitly.
We use a~fundamental result from the theory of hyperelliptic Abelian functions (see~\cite{BEL-12}, Chapter~5):
Any hyperelliptic function can be represented as a rational function
in $\wp_{1,k}$ and $\wp_{1,1,k}$, where $k \in \{1, 3, \ldots, 2 g - 1\}$.
Theorem \ref{t31} in Section \ref{s3} gives a set of relations between the derivatives of these functions.
We use it to introduce a~set of generators in~$\mathcal{F}$.
The map $\varphi$ will be determined by this set of generators. The~map~$\rho$ will be a polynomial map that makes the diagram~\eqref{d} commutative.

We denote the ring of polynomials in $\lambda \in \mathbb{C}^{2g}$ by $\mathcal{P}$.
For the polynomial map~$\rho\colon \mathbb{C}^{3g} \to \mathbb{C}^{2g}$ we call a vector field $\mathcal{D}$ in $\mathbb{C}^{3g}$ \emph{projectable} if there exists a vector field $L$ in $\mathbb{C}^{2g}$ such that
$\mathcal{D}(\rho^* f) = \rho^* L(f)$ for any $f \in \mathcal{P}$.
The vector field $L$ is the \emph{pushforward} of $\mathcal{D}$.
A corollary of this definition is that for a~projectable vector field $\mathcal{D}$ we have $\mathcal{D}(\rho^* \mathcal{P}) \subset \rho^* \mathcal{P}$.

We consider the problem:

\begin{prb} \label{p2}
Find $3 g$ polynomial vector fields in $\mathbb{C}^{3g}$ projectable for $\rho\colon \mathbb{C}^{3g} \to \mathbb{C}^{2g}$ and
independent at any point in $\rho^{-1}(\mathcal{B})$.
Construct their polynomial Lie algebra.
\end{prb}

There is a direct relation \eqref{reld} between Problems 1.1 and 1.2. In Section \ref{s4} we give vector fields that give part of both solutions for general genus $g$. The pushforwards of~these solutions are the polynomial vector fields described in Section \ref{s2}.
In Section \ref{Sdyn} we give polynomial dynamical systems that arise from this relation.

In the case of genus~$g=2$ the solution to Problem \ref{p2} and the corresponding homogeneous polynomial dynamical systems in $\mathbb{C}^6$ are given in Section 4 of \cite{B2}. An explicit solution to Problem \ref{p1} is given in Theorems B.3 and B.6 of \cite{B2}. We recall these results in~Section~\ref{S2L}.

In the case of genus~$g=3$ an explicit solution to Problem \ref{p1} is given in \cite{B3}. The~proof  is~based on a solution to Problem \ref{p2}. We give these results in~Section~\ref{S3L}. The corresponding polynomial dynamical systems in $\mathbb{C}^9$ are presented in Section 8 of \cite{BB21}.

The case of genus~$g=4$ is considered in Sections \ref{S4L}, \ref{s10}, \ref{scom4}, and \ref{S4Sys} of this work. We~find the solution to Problem \ref{p1} in Theorem \ref{tg4} and the solution to Problem \ref{p2} in~Theorem~\ref{td4}. Some of the results of these Sections were first obtained in \cite{BArch}. Here we complete the~descriptions of the Lie algebras, see Section \ref{scom4}. Our approach is based on a construction of the generators of $\Der \mathcal{F}$ based on a result from \cite{BB20}.

In all the cases that we consider the generators $\mathcal{L}_k$ of $\Der \mathcal{F}$, that solve Problem \ref{p1}, and the polynomial vector fields $\mathcal{D}_k$ in $\mathbb{C}^{3g}$, that solve Problem \ref{p2}, are related as
\begin{equation} \label{reld}
\mathcal{L}_k (\varphi^* b_{i,j}) = \varphi^* \mathcal{D}_k (b_{i,j}) 
\end{equation}
for coordinate functions $b_{i,j} \in \mathbb{C}^{3g}$. We note that these solutions give explicit solutions of~the Problem considered in \cite{BL} for hyperelliptic curves $\mathcal{V}_\lambda$ of genus $2$, $3$, and $4$.

\vfill
\eject

\section{Polynomial vector fields tangent to the discriminant hypersurface of~the universal curve} \label{s2}

Let us define the \emph{polynomial Lie algebra of vector fields tangent to the discriminant hypersurface} $\Sigma$ \emph{in}~$\mathbb{C}^{2g}$.
We denote it by~$\mathscr{L}_{\mathcal{B}}$.
Here $\mathbb{C}^{2g}$ is the complex linear space with coordinates $(\lambda)$, $\mathcal{P}$ is the ring of polynomials in $(\lambda)$ and $\Sigma$ is defined in Section \ref{S0}.

For the polynomial Lie algebra $\mathscr{L}_{\mathcal{B}}$,
the generators $\{L_{0}, L_{2}, L_{4}, \ldots, L_{4 g - 2}\}$ are the vector fields
\[
 L_{2k} = \sum_{s=2}^{2g+1} v_{2k+2, 2s-2}(\lambda) {\partial \over \partial \lambda_{2k}}, \qquad v_{2k+2, 2s-2}(\lambda) \in \mathcal{P}.
\]

By~\cite{BPol}, the~structure of a  polynomial Lie algebra as a $\mathcal{P}$-module with
generators
$
1, \; L_{0}, \; L_{2}, \; L_{4}, \; \ldots, \; L_{4 g - 2}
$ is determined by the polynomial matrices $V(\lambda)=(v_{2i,2j}(\lambda)),$ where
$i,j=1,\dots,2g$, and~$C(\lambda) = (c_{2i,2j}^{2k}(\lambda))$, where $i,j,k=0,\dots,2g-1$, such that 
\begin{equation} \label{fcijk}
[L_{2i},L_{2j}]=\sum_{k=0}^{2g-1} c_{2i,2j}^{2k}(\lambda)L_{2k},
\quad[L_{2i},\lambda_{2j+4}]=v_{2i+2,2j+2}(\lambda),
\quad[\lambda_{2i+4},\lambda_{2j+4}]=0. 
\end{equation}

In the case of the Lie algebra $\mathscr{L}_{\mathcal{B}}$,
explicit expressions for the matrix $V(\lambda)$ 
can be~found in Section 4.1 of \cite{A}. 
The elements of this matrix are given by the following formulas. For~convenience, we~assume that $\lambda_s = 0$ for all $s \notin \{0,4,6, \ldots, 4 g, 4 g + 2\}$ and~$\lambda_0 = 1$.
Let $k, m \in \{ 1, 2, \ldots, 2 g\}$. If $k\leqslant m$, then we~set
\begin{equation} \label{ev}
v_{2k, 2m}(\lambda) = \sum_{s=0}^{k-1} 2 (k + m - 2 s) \lambda_{2s} \lambda_{2 (k+m-s)}
 - {2 k (2 g - m + 1) \over 2 g + 1} \lambda_{2k} \lambda_{2m}, 
\end{equation}
and if $k > m $, then we set $v_{2k, 2m}(\lambda) = v_{2m, 2k}(\lambda)$.

The vector field $L_0$ is the Euler vector field; namely, since $\wt \lambda_{2k} = 2 k$, we have
\begin{align*} 
 [L_0, \lambda_{2k}] &= 2 k \lambda_{2k}, & [L_0, L_{2k}] &= 2 k L_{2k}.
\end{align*}
This determines the weights of the vector fields $L_k$, namely, $\wt L_{2k} = 2 k$. 

\begin{lem}[Lemma 4.3 in \cite{BB20}] \label{lcijk} \text{} 
\[
[L_2, L_{2k}] = 2 (k-1) L_{2k+2} + {4 (2 g - k) \over (2 g + 1)} \left( \lambda_{2k+2} L_0 - \lambda_4 L_{2k-2}\right).  
\]
\end{lem}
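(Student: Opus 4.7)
The plan is to reduce the lemma to a polynomial identity and verify it directly. A vector field on $\mathbb{C}^{2g}$ is determined by its action on the coordinates $\lambda_{2m+4}$, $m=0,1,\dots,2g-1$. Using \eqref{fcijk} to rewrite $L_{2i}(\lambda_{2m+4}) = v_{2i+2,2m+2}$ and the Euler property $L_0(\lambda_{2m+4}) = (2m+4)\lambda_{2m+4}$, the lemma is equivalent to the scalar identity
\[
L_2\bigl(v_{2k+2,2m+2}\bigr) - L_{2k}\bigl(v_{4,2m+2}\bigr) = 2(k-1)v_{2k+4,2m+2} + \frac{4(2g-k)}{2g+1}\Bigl((2m+4)\lambda_{2k+2}\lambda_{2m+4} - \lambda_4 v_{2k,2m+2}\Bigr),
\]
which must hold for all $m = 0, 1, \ldots, 2g-1$. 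Both sides are homogeneous polynomials in $\lambda$ of weight $2k+2m+6$.

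Weight homogeneity already constrains $[L_2,L_{2k}]$ considerably. By \eqref{fcijk} and $\wt L_{2i} = 2i$, the coefficient $c_{2,2k}^{2c}(\lambda)$ has weight $2k+2-2c$, and since $\wt \lambda_{2i+4}\geq 4$ there can be nontrivial contributions only from $c = k+1$ (constant coefficient) or $c \in \{0, 1, \ldots, k-1\}$ (with polynomial coefficients of weight $\geq 4$). The lemma asserts that only three terms survive, namely $L_{2k+2}$, $\lambda_{2k+2}L_0$, and $\lambda_4 L_{2k-2}$, with the explicit coefficients stated.

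To identify these coefficients, I expand both sides of the scalar identity using the explicit formula \eqref{ev}. Writing $L_2 = \sum_r v_{4,2r+2}\,\partial/\partial\lambda_{2r+4}$ and analogously for $L_{2k}$, each term becomes a double sum; differentiating \eqref{ev} termwise, using the symmetry $v_{2i,2j} = v_{2j,2i}$ and the conventions $\lambda_0 = 1$, $\lambda_s = 0$ for $s \notin \{0,4,6,\dots,4g+2\}$, both sides become finite sums of cubic monomials $\lambda_a\lambda_b\lambda_c$ with $a+b+c = 2k+2m+6$. The matching proceeds by grouping contributions by monomial type. Structurally, the ``bulk'' (off-diagonal) piece of \eqref{ev} is what produces the $2(k-1)v_{2k+4,2m+2}$ term, while the ``diagonal'' correction $-\frac{2k(2g-m+1)}{2g+1}\lambda_{2k}\lambda_{2m}$ in \eqref{ev} is the source of the $\frac{4(2g-k)}{2g+1}$ prefactor, contributing the Euler piece $(2m+4)\lambda_{2k+2}\lambda_{2m+4}$ and the asymmetric $-\lambda_4 v_{2k,2m+2}$ term.

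The main obstacle is the triple-index bookkeeping inherent in the double sum: a given cubic monomial can arise from several $(r,s)$ pairs in each term on the left, and all contributions must be tallied and simplified. Useful sanity checks along the way are (a) weight homogeneity of every intermediate expression, (b) the case $k=1$, where both sides vanish in accordance with $[L_2,L_2] = 0$, and (c) matching the top-degree $L_{2k+2}$-coefficient at the leading monomial $\lambda_0^{\,2}\lambda_{2k+2m+6}$, which isolates the constant $2(k-1)$ cleanly. An alternative route is an induction on $k$ using the Jacobi identity applied to $[L_4,[L_2,L_{2k-2}]]$ together with the known action of $L_2$ on the $\lambda$'s, but this reorganizes rather than avoids the underlying combinatorial computation.
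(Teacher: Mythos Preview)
The paper does not supply its own proof of this lemma: it is quoted from \cite{BB20} (Lemma~4.3 there), and the only comment added in the present paper is that the structure constants ``can [be derived] directly from the explicit expressions \eqref{ev}.'' That is exactly the route you propose, so at the level of strategy you are aligned with what the paper indicates.

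Your reduction of the statement to the scalar identity
\[
L_2\bigl(v_{2k+2,2m+2}\bigr) - L_{2k}\bigl(v_{4,2m+2}\bigr)
= 2(k-1)\,v_{2k+4,2m+2} + \frac{4(2g-k)}{2g+1}\Bigl((2m+4)\lambda_{2k+2}\lambda_{2m+4} - \lambda_4\, v_{2k,2m+2}\Bigr)
\]
is correct, and your weight analysis correctly pins down which $c_{2,2k}^{2c}$ can possibly be nonzero. The structural remarks about which part of \eqref{ev} feeds which term on the right are also sound heuristics.

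That said, what you have written is a plan, not a proof. You explicitly name ``the main obstacle''---the triple-index monomial bookkeeping---and then do not carry it out; the sanity checks you list (weight homogeneity, the $k=1$ case, the leading $\lambda_0^2\lambda_{2k+2m+6}$ monomial) are necessary but far from sufficient. To turn this into a proof you must actually perform the monomial comparison, or else make the Jacobi-identity induction precise: state and verify the base case $[L_2,L_4]$ directly from \eqref{ev}, and show that the inductive step closes (in particular, that no $L_{2c}$ with $0<c<k-1$ appears). Either route is routine and finite, but as it stands the argument is not complete.
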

This Lemma determines the polynomials $c_{2,2j}^{2k}(\lambda)$ in \eqref{fcijk}. For general $i$ the polynomials~$c_{2i,2j}^{2k}(\lambda)$ in \eqref{fcijk} are described in Theorem 2.5 of \cite{BL2004}. We can derive them directly from the explicit expressions \eqref{ev}.

\begin{ex}
In the case of genus~$g=2$ we have
\begin{align*}
[L_2, L_4] &=\frac{8}{5}\lambda_6 L_0 -\frac{8}{5}\lambda_4 L_2 + 2 L_6, \qquad
[L_2, L_6] = \frac{4}{5}\lambda_8 L_0 -\frac{4}{5}\lambda_4 L_4, & \\
[L_4, L_6] &=-2\lambda_{10} L_0 +\frac{6}{5}\lambda_8 L_2 -\frac{6}{5}\lambda_6 L_4 + 2\lambda_4 L_6. 
\end{align*}
\end{ex}

\begin{ex}
In the case of genus~$g=3$ 
the expressions for $c_{2i,2j}^{2k}$ are given in Lemma~4.3 of \cite{B3}. In particular, we have:
\begin{align} \label{3ex}
 [L_2, L_{4}] &= 2 L_{6} + {16 \over 7} \left( \lambda_{6} L_0 - \lambda_4 L_{2}\right), \qquad
[L_2, L_{6}] = 4 L_{8} + {12 \over 7} \left( \lambda_{8} L_0 - \lambda_4 L_{4}\right),
\\
[L_2, L_{8}] &= 6 L_{10} + {8 \over 7} \left( \lambda_{10} L_0 - \lambda_4 L_{6}\right). \nonumber
\end{align}
\end{ex}

In the case of genus~$g=4$ 
the expressions for $c_{2,2j}^{2k}$ are given in
Example \ref{ex61}.

\vfill

\eject

\section{Polynomial map related to the~universal~bundle~of~Jacobians~of~hyperelliptic~curves} \label{s3}

\begin{thm}[\cite{BEL}, \S 3 in \cite{BEL-97}] \label{t31} For $i, k \in \{1, 3, \ldots, 2 g - 1\}$ we~have the relations
\begin{align*}
\wp_{1,1,1,i} = & 6 \wp_{1,1} \wp_{1,i} + 6 \wp_{1, i+2} - 2 \wp_{3, i} + 2 \lambda_{4} \delta_{i,1},\\
\wp_{1,1,i} \wp_{1,1,k} = & 4 \left(\wp_{1,1} \wp_{1,i} \wp_{1,k} + \wp_{1,k} \wp_{1,i+2} + \wp_{1,i} \wp_{1, k+2} + \wp_{k+2, i+2}\right)  
   - \\
   & - 2 (\wp_{1,i} \wp_{3,k} + \wp_{1,k} \wp_{3,i} + \wp_{k, i+4} + \wp_{i, k+4}) + \\
   &+ 2 \lambda_4 (\delta_{i,1} \wp_{1,k} + \delta_{k,1} \wp_{1,i}) + 2 \lambda_{i+k+4} (2 \delta_{i,k} + \delta_{k, i-2} + \delta_{i, k-2}).
\end{align*}
\end{thm}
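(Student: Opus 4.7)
The plan is to prove both families of relations by the classical Klein--Baker method, reducing them to polynomial identities on the universal hyperelliptic curve $\mathcal{V}_\lambda$ via the Jacobi inversion theorem. For a generic $u \in \mathbb{C}^g$ outside the theta divisor there are $g$ points $(x_k, y_k) \in \mathcal{V}_\lambda$, $k = 1, \dots, g$, whose Abel sum equals $u$ modulo $\Gamma_\lambda$. In the setting of hyperelliptic Kleinian functions (see~\cite{Baker}, \cite{BEL}), the $x_k$ are the roots of the Jacobi polynomial
\[
P_u(x) = x^g - \wp_{1,1}(u)\, x^{g-1} - \wp_{1,3}(u)\, x^{g-2} - \cdots - \wp_{1,2g-1}(u),
\]
and the corresponding $y_k$ are expressed as
\[
2 y_k = \wp_{1,1,1}(u)\, x_k^{g-1} + \wp_{1,1,3}(u)\, x_k^{g-2} + \cdots + \wp_{1,1,2g-1}(u).
\]
The first step is to recall these identities from the cited references.

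For the quadratic family of relations, I would substitute these formulas into the curve equation $y_k^2 = x_k^{2g+1} + \lambda_4 x_k^{2g-1} + \cdots + \lambda_{4g+2}$. The polynomial
\[
F(x) \;=\; \Bigl(\sum_{j=1}^{g} \wp_{1,1,2j-1}(u)\, x^{g-j}\Bigr)^{\!2} - 4\bigl(x^{2g+1} + \lambda_4 x^{2g-1} + \cdots + \lambda_{4g+2}\bigr)
\]
has degree $2g+1$ in $x$ and vanishes at every $x_k$, hence is divisible by $P_u(x)$. Writing $F(x) = P_u(x) \cdot R(x)$ with $\deg R = g+1$ and leading coefficient $-4$, the top-degree coefficient equations determine $R$ recursively, its coefficients being polynomials in the $\wp_{1,2l-1}$ and the $\lambda$'s; the remaining coefficient equations yield precisely the stated identities for $\wp_{1,1,i}\, \wp_{1,1,k}$. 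The functions $\wp_{j,k}$ with $j, k > 1$ appearing on the right-hand side are defined (as is standard in Kleinian theory) through the coefficients of $R$, and the Kronecker delta terms $\delta_{i,k}$, $\delta_{k, i \pm 2}$, together with the boundary contribution $\lambda_4 \delta_{i,1}$, emerge naturally from the diagonal and near-diagonal contributions of the polynomial multiplications.

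For the linear family expressing $\wp_{1,1,1,i}$, I would differentiate the Jacobi identities $P_u(x_k) = 0$ with respect to $t_1$, using $\partial_1 x_k = 2 y_k / P_u'(x_k)$ from Abel's theorem, to obtain a rational identity in $x_k$ which, after clearing denominators and interpolating across the $g$ values $x_1, \dots, x_g$, reduces to a polynomial identity. Its coefficients, after eliminating quadratic $\wp_{1,1,*}$-cross terms by the previously established quadratic family with $k = 1$ and applying the symmetry $\wp_{3,i} = \wp_{i,3}$, give the stated linear expression for $\wp_{1,1,1,i}$ in terms of $\wp_{1,*}$, $\wp_{3,*}$, and $\lambda_4 \delta_{i,1}$.

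The main obstacle is the combinatorial bookkeeping of coefficients across the polynomial multiplications, and in particular the interplay between the $\lambda$-coefficients of the curve polynomial and the off-diagonal $\wp_{1,1,j}\wp_{1,1,l}$ cross-terms. Securing the correct numerical factors $6, 4, 2$ and the precise Kronecker delta corrections requires careful indexing, but once the polynomial congruence $F = P_u R$ is set up with correct degree bounds, the identities follow by systematic coefficient comparison.
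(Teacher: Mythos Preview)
The paper does not prove this theorem; it is quoted from \cite{BEL} and \cite[\S3]{BEL-97} and used as input. Your outline is indeed the classical Klein--Baker approach underlying those references, and the Jacobi inversion formulas you write for $P_u(x)$ and $2y_k$ are correct.

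There is, however, a genuine gap in your argument concerning the terms $\wp_{j,k}$ with $j,k\geqslant 3$. In this paper (and in the standard theory) \emph{all} the $\wp_{k_1,\dots,k_n}$ are defined as logarithmic derivatives of the sigma function, not as coefficients of your quotient polynomial $R$. Your divisibility argument $F=P_u\cdot R$ determines the coefficients of $R$ recursively as polynomials in the $\wp_{1,2l-1}$ and the $\lambda_m$; the remaining coefficient comparisons therefore produce identities expressing $\wp_{1,1,i}\wp_{1,1,k}$ purely in terms of $\wp_{1,*}$ and $\lambda_*$. The symbols $\wp_{3,i}$, $\wp_{k+2,i+2}$, $\wp_{i,k+4}$ on the right-hand side of the stated theorem do \emph{not} emerge from this computation: identifying those particular polynomial combinations of $\wp_{1,*}$ and $\lambda_*$ with the sigma-function derivatives $-\partial_j\partial_k\ln\sigma$ is a separate, nontrivial step. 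In \cite{BEL} and \cite{BEL-97} this identification comes from the fundamental Kleinian bilinear relation (Baker's formula for the second logarithmic derivative of $\sigma(u-v)$ evaluated along Abel images), which you have not invoked. The same issue recurs in your derivation of the linear family: the $\wp_{3,i}$ term cannot be obtained just by differentiating $P_u(x_k)=0$ with respect to $t_1$. You need either the Klein bilinear formula or an explicit computation of $\partial_3 x_k$ via the Abel map to close the argument.
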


\begin{cor}[Corollary 5.2 in \cite{B3}] \label{c1}
 Consider the map $\widetilde{\varphi}: \mathcal{U} \dashrightarrow \mathbb{C}^{ \frac{g (g+9)}{2}}$ with the coordinates $(b, p, \lambda)$ in $\mathbb{C}^{ \frac{g (g+9)}{2}}$. Here $b = (b_{i,j}) \in \mathbb{C}^{3g}$ with $i \in \{ 1,2,3 \}$, \mbox{$j \in \{1, 3, \ldots, 2 g -1\}$,}
 $p = (p_{k,l}) \in \mathbb{C}^{\frac{g (g-1)}{2}}$ with $k,l \in \{3, 5, \ldots, 2 g -1\}$ for $k \leqslant l$,
 and $\lambda = (\lambda_s) \in \mathbb{C}^{2 g}$ with~$s \in \{4, 6, \ldots, 4 g, 4 g + 2\}$.
 Set 
 \[
  \widetilde{\varphi}: (t, \lambda) \mapsto (b_{1,j}, b_{2,j}, b_{3,j}, p_{k,l}, \lambda_s) = (\wp_{1,j}(t, \lambda), \wp_{1,1,j}(t, \lambda), \wp_{1,1,1,j}(t, \lambda), 2 \wp_{k,l}(t, \lambda), \lambda_s).
 \]
 We denote $p_{l,k} = p_{k,l}$ for $k,l \in \{3, 5, \ldots, 2 g -1\}$.
Then the image of $\widetilde{\varphi}$ lies in $\mathcal{S} \subset \mathbb{C}^{ \frac{g (g+9)}{2}}$, where $\mathcal{S}$ is determined by the set of $g (g+3)/2$ equations
\begin{align*}
b_{3,1} = & 6 b_{1,1}^2 + 4 b_{1,3} + 2 \lambda_{4},\\
b_{3, k} = & 6 b_{1,1} b_{1,k} + 6 b_{1,k+2} - p_{3, k}, \\
b_{2,1}^2 = & 4 b_{1,1}^3 + 4 b_{1,1} b_{1, 3} - 4 b_{1, 5} + 2 p_{3, 3} + 4 \lambda_4 b_{1,1} + 4 \lambda_{6}, \\
b_{2,1} b_{2, k} = & 4 b_{1,1}^2 b_{1, k} + 2 b_{1, 3} b_{1, k} + 4 b_{1,1} b_{1, k+2} - 2 b_{1, k+4} - \\
   & - b_{1,1} p_{3,k} + 2 p_{3, k+2} - p_{5, k} + 2 \lambda_4 b_{1,k} + 2 \lambda_{8} \delta_{3,k}, \\
b_{2,j} b_{2,k} = & 4 b_{1,1} b_{1,j} b_{1,k} + 4 b_{1,k} b_{1,j+2} + 4 b_{1,j} b_{1,k+2} + 2 p_{k+2, j+2}
   - \\
   & - b_{1,j} p_{3,k} - b_{1,k} p_{3,j} - p_{k, j+4} - p_{j, k+4} + 2 \lambda_{j+k+4} (2 \delta_{j,k} + \delta_{k, j-2} + \delta_{j, k-2}),
\end{align*}
for $j,k \in \{3, \ldots, 2g-1\}$ and any variable equal to zero if the index is out of range.
\end{cor}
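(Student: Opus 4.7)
The plan is to verify the defining equations of $\mathcal{S}$ one family at a time by direct substitution into the identities of Theorem \ref{t31}, using only the symmetry $\wp_{k,l} = \wp_{l,k}$ and the extended conventions that $\lambda_s = 0$ for $s \notin \{4, 6, \ldots, 4g+2\}$ and that $\wp_{k,l}$ (hence $p_{k,l}$) and $\wp_{1,k}$ (hence $b_{1,k}$) vanish whenever an index leaves its allowed range. Under the coordinates of the corollary one has $b_{1,j} = \wp_{1,j}$, $b_{2,j} = \wp_{1,1,j}$, $b_{3,j} = \wp_{1,1,1,j}$, and $p_{k,l} = 2 \wp_{k,l}$, so the task reduces to re-expressing each instance of Theorem \ref{t31} in these variables.

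First I would treat the equations for $b_{3,1}$ and $b_{3,k}$ by specialising the first identity of Theorem \ref{t31} at $i=1$ and at $i=k \in \{3, 5, \ldots, 2g-1\}$. In the case $i=1$, the symmetry $\wp_{3,1} = \wp_{1,3}$ combines $-2\wp_{3,1}$ with $6\wp_{1,3}$ to produce the coefficient $4\,b_{1,3}$, while $2\lambda_4 \delta_{i,1}$ yields the $2\lambda_4$ summand. For $i=k\ge 3$ the $\lambda_4$ term vanishes and the substitution $2\wp_{3,k} = p_{3,k}$ gives the stated formula.

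Next I would handle the three remaining families $b_{2,1}^2$, $b_{2,1}\,b_{2,k}$, and $b_{2,j}\,b_{2,k}$ by specialising the second identity of Theorem \ref{t31} at $(i,k)=(1,1)$, at $(i,k)=(1,k)$ with $k\ge 3$, and at $(i,k)=(j,k)$ with $j,k\ge 3$ respectively. In each case one replaces every $2\wp_{\cdot,\cdot}$ by the corresponding $p_{\cdot,\cdot}$, merges like terms using symmetry, and evaluates the Kronecker deltas. At $(1,1)$ the factor $2\delta_{i,k} + \delta_{k,i-2} + \delta_{i,k-2}$ equals $2$, producing the $4\lambda_6$; at $(1,k)$ with $k\ge 3$ it collapses to $\delta_{k,3}$, giving $2\lambda_8\delta_{3,k}$, while the $\lambda_4$ part contributes $2\lambda_4 b_{1,k}$ via $\delta_{1,1}=1$; and for $j,k\ge 3$ the $\lambda_4$-delta terms vanish and only $2\lambda_{j+k+4}(2\delta_{j,k}+\delta_{k,j-2}+\delta_{j,k-2})$ survives.

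The only real obstacle is the bookkeeping of boundary indices: one must check that the out-of-range convention, applied on both sides of each equation, gives matching polynomial expressions when some $k+2$, $k+4$, or $j+k+4$ exceeds the allowed range (or when $k-2 < 3$ in the Kronecker factors). Since Theorem \ref{t31} is stated under the same conventions, this verification is purely mechanical: for every specialisation of $(i,k)$ one checks term by term that the coefficients tabulated in the statement of Corollary \ref{c1} agree with those produced by Theorem \ref{t31} after the substitution above.
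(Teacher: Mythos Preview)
Your proposal is correct and follows the same route as the paper: the corollary is obtained from Theorem~\ref{t31} by the direct substitution $b_{1,j}=\wp_{1,j}$, $b_{2,j}=\wp_{1,1,j}$, $b_{3,j}=\wp_{1,1,1,j}$, $p_{k,l}=2\wp_{k,l}$, with the case split $i=1$ versus $i\ge 3$ in the first identity and $(i,k)=(1,1)$, $(1,k)$, $(j,k)$ in the second. The paper does not write out a separate proof here (it cites \cite{B3}), and your term-by-term verification, including the handling of the Kronecker deltas and the out-of-range conventions, is exactly what is required.
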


\begin{thm}[Theorem 5.3 in \cite{B3}] \label{thm3}
The projection $\pi_1\colon \mathbb{C}^{\frac{g (g+9)}{2}} \to \mathbb{C}^{3 g}$ on the first~$3 g$ coordinates gives the isomorphism $\mathcal{S} \simeq \mathbb{C}^{3g}$.
Therefore, the coordinates $(b_{i,j})$ uniformize~$\mathcal{S}$.
\end{thm}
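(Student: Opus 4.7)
The plan is to explicitly construct a polynomial section $\sigma\colon\mathbb{C}^{3g}\to\mathbb{C}^{g(g+9)/2}$ whose image lies in $\mathcal{S}$ and which is inverse to~$\pi_1$. Concretely, I would express every $p_{k,l}$-coordinate and every $\lambda_s$-coordinate as a polynomial in the $b_{i,j}$-coordinates using the defining relations of Corollary~\ref{c1}. The counts already fit: $\mathcal{S}$ is cut out by $g(g+3)/2=g(g-1)/2+2g$ relations, which is exactly the number of $(p,\lambda)$-coordinates on top of the $3g$ coordinates $b_{i,j}$.

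First I would note that all the relations of Corollary~\ref{c1} are homogeneous in the weight grading $\wt b_{1,j}=j+1$, $\wt b_{2,j}=j+2$, $\wt b_{3,j}=j+3$, $\wt p_{k,l}=k+l$, $\wt \lambda_s=s$. The argument then reduces to an induction on the weight $2m$ for $m=2,3,\ldots,2g+1$: assuming all $p_{k,l}$ and $\lambda_s$ of weight strictly less than $2m$ have already been expressed as polynomials in the $b_{i,j}$, one expresses the weight-$2m$ unknowns the same way. The base case $m=2$ is handled directly by the relation $b_{3,1}=6b_{1,1}^2+4b_{1,3}+2\lambda_4$, which isolates $\lambda_4$.

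For the inductive step I would order the weight-$2m$ unknowns as $p_{3,2m-3},\,p_{5,2m-5},\,p_{7,2m-7},\ldots$ by increasing smaller index, followed by $\lambda_{2m}$, and match them with the defining relations as follows. The relation for $b_{3,2m-3}$ isolates $p_{3,2m-3}$ with coefficient $-1$; the relation for $b_{2,1}b_{2,2m-5}$ then isolates $p_{5,2m-5}$ with coefficient $-1$ (or isolates $\lambda_8$ with coefficient $+2$ in the degenerate case $m=4$, where $p_{5,3}=p_{3,5}$ is already known). For each odd $j\geqslant3$ with $j+k=2m-4$ and $k\geqslant j+4$, the relation for $b_{2,j}b_{2,k}$ contains, up to index reordering, the $p$-terms $+2p_{j+2,k+2}$, $-p_{j+4,k}$, and $-p_{j,k+4}$; the first and third of these have smaller index $\leqslant j+2$ and are already determined, while the middle one is the new unknown $p_{j+4,2m-4-j}$ with coefficient $-1$. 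The remaining equation, namely the diagonal $b_{2,m-2}^2$ when $m$ is odd or the near-diagonal $b_{2,m-3}b_{2,m-1}$ when $m$ is even, contributes $\lambda_{2m}$ with nonzero coefficient ($+4$ or $+2$, coming respectively from $2\delta_{j,k}$ or $\delta_{j,k-2}$ in Corollary~\ref{c1}), while all $p$'s appearing in it have been determined earlier. Each step thus solves for exactly one new unknown as a polynomial in the $b_{i,j}$.

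The existence of $\sigma$ gives surjectivity of $\pi_1|_{\mathcal{S}}$ onto $\mathbb{C}^{3g}$, and uniqueness of each back-substitution gives injectivity, so $\pi_1|_{\mathcal{S}}$ is a bijective polynomial map with polynomial inverse $\sigma$, yielding the algebraic isomorphism $\mathcal{S}\simeq\mathbb{C}^{3g}$. The main technical obstacle will be the bookkeeping of Kronecker-delta collisions and of the boundary cases in which one of the indices $k+2,\,j+4,\,k+4$ exceeds $2g-1$ and the corresponding $p$ or $b$ is replaced by $0$: one has to check uniformly across $m$ and $j$ that these degenerations never annihilate the leading coefficient of the equation assigned to each weight-$2m$ unknown, and that the ordering above adapts correctly when some candidate $p$-unknown vanishes on account of a truncated index range.
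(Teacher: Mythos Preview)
Your proposal is correct and follows essentially the same route as the paof in~\cite{B3} that the paper cites: one solves the defining relations of $\mathcal{S}$ recursively in the weight, expressing each $p_{k,l}$ and each $\lambda_s$ as a polynomial in the $b_{i,j}$, thereby producing a polynomial inverse to $\pi_1|_{\mathcal{S}}$. The paper records exactly this, noting that ``the proof of these Corollaries in~\cite{B3} allows to obtain recursive expressions for the polynomials $\lambda_s$ and $p_{k,l}$ in $b_{i,j}$,'' and the explicit genus~$4$ formulas displayed in Section~\ref{s10} are precisely the output of your inductive scheme (e.g.\ $p_{3,3}$ from the $b_{3,3}$-equation, $\lambda_6$ from the $b_{2,1}^2$-equation, $p_{5,5}$ from the $b_{2,1}b_{2,5}$-equation, etc.).
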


We have $\widetilde{\varphi}: \mathcal{U} \dashrightarrow \mathcal{S} \simeq \mathbb{C}^{3g}$ and we denote by $\varphi$ the composition $\pi_1 \circ \widetilde{\varphi}: \mathcal{U} \dashrightarrow \mathbb{C}^{3g}$.

We obtain the diagram \vspace{-17pt}
\begin{equation} \label{diagbig}
\xymatrix{
	 & \mathbb{C}^{\frac{g (g+9)}{2}} \ar@{<-_{)}}[d] \ar@{=}[r] & \mathbb{C}^{3 g} \times \mathbb{C}^{\frac{g (g-1)}{2}} \times \mathbb{C}^{2 g} \ar[dl]^{\pi_1} \ar@/^/[ddl]^{\pi_3}\\
	\mathcal{U} \ar[d]^{\pi} \ar@{-->}[r]^(.4){\varphi} \ar@{-->}[ur]^(.5){\widetilde{\varphi}}& \mathcal{S} \simeq \mathbb{C}^{3 g} \ar[d]^{\rho}\\
	\mathcal{B} \ar@{^{(}->}[r] & \mathbb{C}^{2g}
	}
\end{equation}

\begin{cor}[Corollary 5.5 in \cite{B3}] \label{cor3}
The projection $\pi_3\colon \mathbb{C}^{\frac{g (g+9)}{2}} \to \mathbb{C}^{2 g}$ on the last $2 g$ coordinates in Corollary \ref{c1} restricted to $\mathcal{S} \simeq \mathbb{C}^{3 g}$ gives the polynomial map $\rho\colon \mathbb{C}^{3g} \to \mathbb{C}^{2g}$.
\end{cor}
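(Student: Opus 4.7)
By Theorem~\ref{thm3}, the restriction $\pi_1\vert_{\mathcal{S}}\colon \mathcal{S} \to \mathbb{C}^{3g}$ is an isomorphism, so the composition $\rho := \pi_3 \circ (\pi_1\vert_{\mathcal{S}})^{-1}$ is a well-defined map $\mathbb{C}^{3g} \to \mathbb{C}^{2g}$. The content of the corollary is that $\rho$ is polynomial, or equivalently that each coordinate $\lambda_s$ with $s \in \{4, 6, \ldots, 4g+2\}$, viewed as a function on $\mathcal{S}$, is a polynomial in the $3g$ coordinates $(b_{i,j})$.

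The plan is to exhibit these polynomial expressions by solving the defining equations of $\mathcal{S}$ from Corollary~\ref{c1} in increasing order of weight, where $\wt b_{i,j} = i+j$, $\wt p_{k,l} = k+l$, and $\wt \lambda_s = s$. The base case is the weight-$4$ equation for $b_{3,1}$, which directly gives $\lambda_4 = \tfrac{1}{2}(b_{3,1} - 6 b_{1,1}^2 - 4 b_{1,3})$. Next, each equation for $b_{3,k}$ with $k \in \{3, \ldots, 2g-1\}$ yields $p_{3,k}$ as a polynomial in $(b_{i,j})$. The weight-$6$ equation for $b_{2,1}^2$ then solves for $\lambda_6$ using the already determined $\lambda_4$ and $p_{3,3}$. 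In the inductive step at weight $w$, I would use the equation for $b_{2,j}b_{2,k}$ whose Kronecker factor $(2\delta_{j,k} + \delta_{k,j-2} + \delta_{j,k-2})$ is nonzero at $j+k+4 = w$ to solve for $\lambda_w$, and the remaining equations of weight $w$ to read off the $p_{k,l}$ of that weight. All other terms in each equation are either in $(b_{i,j})$ or are unknowns of strictly lower weight, already established to be polynomials in $(b_{i,j})$.

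The main obstacle is the bookkeeping in this triangular recursion: one must verify that at every weight $w \in \{4, 6, \ldots, 4g+2\}$ at least one equation contains $\lambda_w$ with a nonzero numerical coefficient, and that the remaining $p_{k,l}$ of that weight are simultaneously captured. A dimension count confirms feasibility: the $g(g+3)/2$ equations of Corollary~\ref{c1} exactly match the $2g + g(g-1)/2 = g(g+3)/2$ unknowns $\lambda_s$ and $p_{k,l}$. For $g=4$, for instance, the equations $b_{2,3}^2$, $b_{2,3}b_{2,5}$, $b_{2,5}^2$, $b_{2,5}b_{2,7}$, $b_{2,7}^2$ capture $\lambda_{10}, \lambda_{12}, \lambda_{14}, \lambda_{16}, \lambda_{18}$ respectively, by inspection of the Kronecker factor. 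Commutativity of diagram~\eqref{diagbig} is then immediate: since $\widetilde{\varphi}$ preserves the $\lambda$-coordinates by construction, $\pi_3 \circ \widetilde{\varphi}$ coincides with the composition $\mathcal{U} \xrightarrow{\pi} \mathcal{B} \hookrightarrow \mathbb{C}^{2g}$, and hence $\rho \circ \varphi = \pi_3 \circ \pi_1^{-1} \circ \pi_1 \circ \widetilde{\varphi} = \pi_3 \circ \widetilde{\varphi}$.
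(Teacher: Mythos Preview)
Your approach is correct and matches the paper's: the paper does not give a detailed proof here but cites \cite{B3} and remarks immediately after Corollary~\ref{cor4} that the argument yields recursive polynomial expressions for $\lambda_s$ and $p_{k,l}$ in the $b_{i,j}$, which is exactly the weight-ordered elimination you describe (and which the paper carries out explicitly for $g=2,3,4$). One small imprecision to fix: in the $b_{2,j}b_{2,k}$ equation the terms $p_{k+2,j+2}$, $p_{k,j+4}$, $p_{j,k+4}$ have the \emph{same} weight $w=j+k+4$ as $\lambda_w$, not strictly lower, so within each weight level you need a secondary ordering (e.g., first solve $p_{3,w-3}$ from the $b_{3,k}$ equation, then $p_{5,w-5}$ from $b_{2,1}b_{2,w-5}$, and so on by increasing first index); the triangular structure still goes through.
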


\begin{cor}[Corollary 5.4 in \cite{B3}] \label{cor4}
The projection $\pi_2\colon \mathbb{C}^{\frac{g (g+9)}{2}} \to  \mathbb{C}^{\frac{g (g-1)}{2}}$ on the middle $g (g-1)/2$ coordinates in Corollary \ref{c1} restricted to $\mathcal{S} \simeq \mathbb{C}^{3 g}$ gives a polynomial map $\mathbb{C}^{3g} \to \mathbb{C}^{\frac{g (g-1)}{2}}$.
\end{cor}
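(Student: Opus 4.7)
The plan combines a short formal argument with an explicit construction. For the formal part: Theorem \ref{thm3} asserts that $\pi_1|_{\mathcal{S}} \colon \mathcal{S} \to \mathbb{C}^{3g}$ is an isomorphism of affine algebraic varieties, so its inverse is itself a morphism of affine varieties, hence given by polynomials. Since $\pi_2|_{\mathcal{S}}$ is the restriction of a coordinate projection, the composition $\pi_2|_{\mathcal{S}} \circ (\pi_1|_{\mathcal{S}})^{-1} \colon \mathbb{C}^{3g} \to \mathbb{C}^{g(g-1)/2}$ is polynomial, which is exactly the statement of the Corollary.

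To exhibit the polynomial expressions for $p_{k,l}$ in the $(b_{i,j})$ explicitly, I would invert the defining equations of $\mathcal{S}$ from Corollary \ref{c1} weight by weight, using the assignments $\wt b_{1,j} = 1+j$, $\wt b_{2,j} = 2+j$, $\wt b_{3,j} = 3+j$, $\wt p_{k,l} = k+l$, $\wt \lambda_s = s$, which make every defining equation homogeneous. At weight $4$ the equation for $b_{3,1}$ gives $\lambda_4$; at weight $6$ the $b_{3,3}$ equation gives $p_{3,3}$ and then the $b_{2,1}^2$ equation gives $\lambda_6$; at weight $8$ the $b_{3,5}$ equation gives $p_{3,5}$, after which the $b_{2,1} b_{2,3}$ equation (using the Kronecker-delta contribution) gives $\lambda_8$. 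At a general even weight $2w \geqslant 10$, the unknowns are $\lambda_{2w}$ together with $\{p_{k,l} \colon k+l = 2w\}$, and the weight-$2w$ defining equations form a triangular system which I would invert in the order: $b_{3, 2w-3} \to p_{3, 2w-3}$; $b_{2,1} b_{2, 2w-5} \to p_{5, 2w-5}$; then the $b_{2,j} b_{2,k}$ equations with $j + k = 2w - 4$ and $j$ increasing from $3$, each isolating one of the remaining $p_{k,l}$; and finally isolating $\lambda_{2w}$ from the unique weight-$2w$ equation in which it appears with nonzero coefficient (a diagonal $b_{2,w-2}^2$ for $w$ odd, or the off-diagonal $b_{2,w-3} b_{2,w-1}$ for $w$ even). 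Each step yields a polynomial in the $b_{i,j}$ and in previously-expressed variables of lower weight, so induction completes the argument.

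The main obstacle is the bookkeeping behind this triangularization: one must check that at each weight $2w$ the number of defining equations matches the number of weight-$2w$ unknowns, and that the ordering above genuinely isolates a new variable at each step, taking into account the symmetry $p_{l,k} = p_{k,l}$ and the Kronecker-delta terms in Corollary \ref{c1}. The global count $g(g+3)/2$ of defining equations equalling the total number of eliminated variables, together with weighted homogeneity, forces the weight-by-weight balance, while a direct inspection of the coefficients of the $\delta$ terms in Corollary \ref{c1} produces the triangularity, completing the explicit construction.
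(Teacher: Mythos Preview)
Your proposal is correct and aligns with the paper's approach. The paper does not give a self-contained proof of Corollary~\ref{cor4}; it cites \cite{B3} and remarks that ``the proof of these Corollaries in \cite{B3} allows to obtain recursive expressions for the polynomials $\lambda_s$ and $p_{k,l}$ in $b_{i,j}$,'' which is precisely your weight-by-weight triangular elimination. Your short formal deduction from Theorem~\ref{thm3} is also the intended logical structure, since in \cite{B3} the proof of the isomorphism in Theorem~\ref{thm3} \emph{is} the construction of the polynomial inverse, from which both Corollaries~\ref{cor3} and~\ref{cor4} follow by composing with the coordinate projections.
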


The proof of these Corollaries in \cite{B3} allows to obtain recursive expressions for the polynomials $\lambda_s$ and $p_{k,l}$ in $b_{i,j}$.
We will describe the polynomial maps $\rho\colon \mathbb{C}^{3g} \to \mathbb{C}^{2g}$ and~$\mathbb{C}^{3g} \to \mathbb{C}^{\frac{g (g-1)}{2}}$ explicitly for any $g$ in our next work.

\section{Explicit description of polynomial vector fields projectable for $\rho$} \label{s4}

Now let us describe explicitly some of the polynomial vector fields $\mathcal{D}_k$ projectable for~$\rho\colon \mathbb{C}^{3g} \to \mathbb{C}^{2g}$. We construct them using the relation
\begin{equation} \label{keyrel}
\mathcal{L}_k (\varphi^* b_{i,j}) = \varphi^* \mathcal{D}_k (b_{i,j}) 
\end{equation}
for $\mathcal{L}_k \in \Der \mathcal{F}$.

\begin{rem}[Section 2 in \cite{B3}] \label{rem45}
The operators $\mathcal{L}_{2k-1} = \partial_{2k-1}$ for $k \in \{1,2,3,\ldots,g\}$ belong to the Lie algebra of derivations of $\mathcal{F}$. Their pushforwards for $\pi$ are zero.
\end{rem}

\begin{lem}[Lemma 6.2 in \cite{B3}] \label{ld1} For the polynomial vector field 
\begin{align*}
\mathcal{D}_1 &= \sum_j b_{2,j} {\partial \over \partial b_{1,j}} + b_{3, j} {\partial \over \partial b_{2,j}} + 4 (2 b_{1,1} b_{2,j} + b_{2,1} b_{1, j} + b_{2, j+2}) {\partial \over \partial b_{3,j}}
\end{align*}
where $b_{2,2 g + 1} = 0$ 
we have $\mathcal{L}_1 (\varphi^* b_{i,j}) = \varphi^* \mathcal{D}_1 (b_{i,j})$ for all $i \in \{1,2,3\}, j \in \{1,3,\ldots,2g-1\}$.
\end{lem}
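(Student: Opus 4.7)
\emph{Plan of proof.} By Remark~\ref{rem45}, $\mathcal{L}_1 = \partial_1$, and by the definition of $\varphi$ in Corollary~\ref{c1} one has $\varphi^*b_{1,j} = \wp_{1,j}$, $\varphi^*b_{2,j} = \wp_{1,1,j}$, and $\varphi^*b_{3,j} = \wp_{1,1,1,j}$. The cases $i = 1$ and $i = 2$ are then immediate from the sign convention $\wp_{k_1,\ldots,k_n} = -\partial_{k_1}\cdots\partial_{k_n}\ln\sigma$: indeed $\partial_1 \wp_{1,j} = \wp_{1,1,j}$ matches $\varphi^*\mathcal{D}_1(b_{1,j}) = \varphi^*b_{2,j}$, and $\partial_1\wp_{1,1,j} = \wp_{1,1,1,j}$ matches $\varphi^*\mathcal{D}_1(b_{2,j}) = \varphi^*b_{3,j}$.

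For $i = 3$ the task is to prove the identity
\[
\partial_1\wp_{1,1,1,j} = 4\bigl(2\wp_{1,1}\wp_{1,1,j} + \wp_{1,1,1}\wp_{1,j} + \wp_{1,1,j+2}\bigr).
\]
My plan is to compute $\partial_1\wp_{1,1,1,j}$ in two independent ways from the first relation of Theorem~\ref{t31} and then eliminate the resulting $\wp_{1,3,j}$ term by a linear combination. First, applying $\partial_1$ to that relation at index $i = j$, and using $\partial_1\wp_{1,1} = \wp_{1,1,1}$, $\partial_1\wp_{1,k} = \wp_{1,1,k}$, $\partial_1\wp_{3,j} = \wp_{1,3,j}$, one obtains
\[
\partial_1\wp_{1,1,1,j} = 6\wp_{1,1,1}\wp_{1,j} + 6\wp_{1,1}\wp_{1,1,j} + 6\wp_{1,1,j+2} - 2\wp_{1,3,j}.
\]
Second, specializing the same relation to $i = 1$ gives $\wp_{1,1,1,1} = 6\wp_{1,1}^2 + 4\wp_{1,3} + 2\lambda_4$; applying $\partial_j$ and invoking the commutativity of mixed partial derivatives (so that $\partial_j\wp_{1,1,1,1} = \partial_1\wp_{1,1,1,j}$) yields
\[
\partial_1\wp_{1,1,1,j} = 12\wp_{1,1}\wp_{1,1,j} + 4\wp_{1,3,j}.
\]
Adding twice the first expression to the second cancels $\wp_{1,3,j}$ and produces $3\partial_1\wp_{1,1,1,j} = 12\bigl(\wp_{1,1,1}\wp_{1,j} + 2\wp_{1,1}\wp_{1,1,j} + \wp_{1,1,j+2}\bigr)$, which rearranges to exactly the required formula.

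The only subtlety is the boundary case $j = 2g-1$, where $j+2 = 2g+1$ lies outside the allowed index set $\{1,3,\ldots,2g-1\}$. By the convention stated explicitly in Corollary~\ref{c1} (any variable with an out-of-range index is set to zero), the terms $\wp_{1,j+2}$ and $\wp_{1,1,j+2}$ vanish, which is precisely consistent with the declared convention $b_{2,2g+1} = 0$ in the statement of the lemma. I expect no further obstacles: once this bookkeeping is in place, the argument needs nothing beyond the first relation of Theorem~\ref{t31} and the commutativity of $\partial_1$ and $\partial_j$, and the main computation is the two-line elimination above.
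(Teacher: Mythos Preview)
Your proof is correct. The paper itself does not supply a proof of this lemma, citing it instead from \cite{B3}; your argument is the natural one, using only the first relation of Theorem~\ref{t31}, its $i=1$ specialization, and the symmetry $\partial_j\wp_{1,1,1,1}=\partial_1\wp_{1,1,1,j}$ to eliminate the $\wp_{1,3,j}$ term, with the out-of-range convention handling the boundary case $j=2g-1$ exactly as you describe.
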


\begin{lem}[cf. Lemma 6.3 in \cite{B3}] \label{ld2} Set $p_{s,1} = 2 b_{1,s}$. For $s \in \{3, 5, \ldots, 2g - 1\}$ for the polynomial vector fields
\begin{align*}
\mathcal{D}_{s} &= 
{1 \over 2} \sum_{k=1}^{g} \left( \mathcal{D}_1(p_{s,2k-1}) {\partial \over \partial b_{1,2k-1}} +  \mathcal{D}_1(\mathcal{D}_1(p_{s,2k-1})) {\partial \over \partial b_{2,2k-1}}
+  \mathcal{D}_1(\mathcal{D}_1(\mathcal{D}_1(p_{s,2k-1}))) {\partial \over \partial b_{3,2k-1}} \right)
\end{align*}
we have $\mathcal{L}_s (\varphi^* b_{i,j}) = \varphi^* \mathcal{D}_s (b_{i,j})$ for all $i \in \{1,2,3\}, j \in \{1,3,\ldots,2g-1\}$. 
\end{lem}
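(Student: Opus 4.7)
The plan is to lift the $\partial_1$-identity of Lemma \ref{ld1} from the coordinate generators $b_{i,j}$ to arbitrary polynomials in them, iterate it on the expression $p_{s,j}$, and then recognise the result as $\partial_s$ applied to $\varphi^* b_{i,j}$ using the symmetry of mixed partial derivatives of $\ln \sigma$.

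First, since $\mathcal{L}_1 = \partial_1$ is a derivation of $\mathcal{F}$, $\mathcal{D}_1$ is a derivation of $\mathbb{C}[b_{i,j}]$, and $\varphi^*:\mathbb{C}[b_{i,j}]\to \mathcal{F}$ is a ring homomorphism, Lemma \ref{ld1} extends from the generators $b_{i,j}$ to the whole polynomial ring: $\mathcal{L}_1\circ\varphi^* = \varphi^*\circ\mathcal{D}_1$ on $\mathbb{C}[b_{i,j}]$. By Corollary \ref{c1} and Theorem \ref{thm3}, under the isomorphism $\mathcal{S}\simeq\mathbb{C}^{3g}$ every $p_{s,j}$ is polynomial in the $b_{i,j}$'s (the convention $p_{s,1}=2b_{1,s}$ handles the endpoint), and $\varphi^*p_{s,j} = 2\wp_{s,j}$. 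Iterating the extended identity,
\[
\varphi^*\mathcal{D}_1^n(p_{s,j}) \;=\; \mathcal{L}_1^n(\varphi^* p_{s,j}) \;=\; 2\,\partial_1^n\wp_{s,j} \;=\; 2\,\wp_{\underbrace{\scriptstyle 1,\ldots,1}_n,\,s,\,j}
\]
for every $n\ge 1$.

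Next, compute the left-hand side of the asserted identity directly. From $\varphi^* b_{i,j} = \wp_{\underbrace{\scriptstyle 1,\ldots,1}_i,\,j}$ for $i\in\{1,2,3\}$ and the commutativity of partial derivatives,
\[
\mathcal{L}_s(\varphi^* b_{i,j}) \;=\; \partial_s\,\wp_{\underbrace{\scriptstyle 1,\ldots,1}_i,\,j} \;=\; \wp_{\underbrace{\scriptstyle 1,\ldots,1}_i,\,s,\,j}.
\]
On the other hand, the explicit formula for $\mathcal{D}_s$ gives $\mathcal{D}_s(b_{i,j}) = \tfrac12\,\mathcal{D}_1^i(p_{s,j})$ for $j=2k-1$ in the admissible range, so applying $\varphi^*$ and using the previous display with $n=i$ one reads off $\varphi^*\mathcal{D}_s(b_{i,j}) = \wp_{\underbrace{\scriptstyle 1,\ldots,1}_i,\,s,\,j}$, matching the left-hand side.

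The one point that needs verification is the extension step: one must check that $\mathcal{D}_1$ and $\mathcal{L}_1\circ\varphi^*$ really define the same derivation on $\mathbb{C}[b_{i,j}]$ once the $\lambda_k$'s implicit in the expression for $p_{s,j}$ are eliminated via Theorem \ref{thm3}. This is the standard uniqueness of derivation extension from a generating set, and so requires no additional computation. Everything else is index bookkeeping; the substance of the argument is the interchange of $\partial_s$ with powers of $\partial_1$ afforded by mixed-partial symmetry.
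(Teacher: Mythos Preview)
Your argument is correct. The paper does not supply its own proof of this lemma---it simply refers to Lemma 6.3 in \cite{B3}---and your approach is the natural one: extend the intertwining relation of Lemma~\ref{ld1} from the generators $b_{i,j}$ to all of $\mathbb{C}[b_{i,j}]$ by the derivation property of $\mathcal{L}_1$ and $\mathcal{D}_1$, apply it iteratively to the polynomial $p_{s,j}$, and then identify $\partial_s\,\wp_{\underbrace{\scriptstyle 1,\ldots,1}_i,\,j}$ with $\tfrac12\,\varphi^*\mathcal{D}_1^i(p_{s,j})$ using the symmetry of the $\wp$-indices.

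One small remark: your final paragraph about eliminating the $\lambda_k$'s is unnecessary. By Corollary~\ref{cor4} (and the explicit formulas in Section~\ref{s10} for $g=4$), each $p_{s,j}$ is already a polynomial in the $b_{i,j}$ alone, with no $\lambda$'s appearing; the $\lambda$'s are separately polynomial in the $b$'s (Corollary~\ref{cor3}), but they play no role here. So the extension step is immediate and requires no further justification.
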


\begin{cor} \label{cords}
 The polynomial vector fields $\mathcal{D}_s$ for $s \in \{1,3,\ldots,2g-1\}$ are projectable for the polynomial map $\rho\colon \mathbb{C}^{3g} \to \mathbb{C}^{2g}$. Their pushforwards are zero.
\end{cor}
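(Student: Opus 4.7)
The plan is to derive the corollary directly from Lemmas \ref{ld1} and \ref{ld2} combined with Remark \ref{rem45}, using the fact that the coordinate ring $\rho^*\mathcal{P}\subset\mathbb{C}[b_{i,j}]$ is generated by polynomials in $b_{i,j}$. The pullback $\rho^*\lambda_s$ is a polynomial in the $b_{i,j}$ by Corollary \ref{cor3} (together with the recursive expressions alluded to after Corollary \ref{cor4}), so it suffices to check that $\mathcal{D}_s(\rho^*\lambda_k)=0$ for each $k\in\{2,3,\ldots,2g,2g+1\}$; by the Leibniz rule and the fact that $\mathcal{D}_s$ is a derivation, vanishing on generators is enough to conclude both projectability and that the pushforward is zero.

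First I would observe that since the diagram \eqref{diagbig} commutes, $\rho\circ\varphi=\pi$, hence $\varphi^*\rho^*\lambda_{2k}=\pi^*\lambda_{2k}=\lambda_{2k}$, viewed as a hyperelliptic function on $\mathcal{U}$ depending only on the parameters. Next I would extend the key relation \eqref{keyrel} from individual coordinate functions $b_{i,j}$ to arbitrary polynomials in them: since $\mathcal{L}_s$ and $\mathcal{D}_s$ are both derivations and $\varphi^*$ is an algebra homomorphism, a straightforward induction on the degree of a monomial in the $b_{i,j}$ yields
\[
\mathcal{L}_s(\varphi^* P) = \varphi^* \mathcal{D}_s(P)
\]
for every $P\in\mathbb{C}[b_{i,j}]$, provided the Lemmas are applied on each factor. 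Applying this identity to $P=\rho^*\lambda_{2k}$ gives $\mathcal{L}_s(\lambda_{2k}) = \varphi^*\mathcal{D}_s(\rho^*\lambda_{2k})$.

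Now by Remark \ref{rem45}, $\mathcal{L}_s=\partial_s$ for $s\in\{1,3,\ldots,2g-1\}$, and $\lambda_{2k}$ does not depend on $t$, so the left-hand side vanishes. Hence $\varphi^*\mathcal{D}_s(\rho^*\lambda_{2k})=0$. Since $\varphi$ is birational (by Theorem \ref{thm3} together with the Dubrovin--Novikov theorem), the pullback $\varphi^*$ is injective on rational functions, and in particular on polynomials in the $b_{i,j}$. Therefore $\mathcal{D}_s(\rho^*\lambda_{2k})=0$ as an element of $\mathbb{C}[b_{i,j}]$ for every $k$. Using the Leibniz rule, this extends to $\mathcal{D}_s(\rho^*f)=0$ for all $f\in\mathcal{P}$, which is exactly the statement that $\mathcal{D}_s$ is projectable with pushforward zero.

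The only subtle point will be the bookkeeping in extending \eqref{keyrel} multiplicatively and then invoking injectivity of $\varphi^*$; both are essentially formal, so I expect no real obstacle. The construction of the $\mathcal{D}_s$ in Lemmas \ref{ld1} and \ref{ld2} does all the heavy lifting, and the corollary is a clean consequence of the vanishing of the $t$-derivatives on the $\lambda$-coordinates.
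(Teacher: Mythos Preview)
Your argument is correct and is precisely the formalization of what the paper leaves implicit: the corollary is stated without proof immediately after Lemmas~\ref{ld1} and~\ref{ld2}, and the intended justification is exactly the one you give via Remark~\ref{rem45}, the commutativity of diagram~\eqref{diagbig}, the multiplicative extension of~\eqref{keyrel}, and the injectivity of~$\varphi^*$. There is nothing to add.
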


\begin{thm}[Theorem 6.1 in \cite{BB21}]
The operators 
\begin{align*}
\hspace{-20pt} \mathcal{L}_0 &= L_{0} - \sum_{s=1}^g (2s-1) t_{2s-1} \partial_{2s-1},  \\
\hspace{-20pt} \mathcal{L}_2 &= L_{2} - \zeta_1 \partial_1 - \sum_{s=1}^{g-1} (2s-1) t_{2s-1} \partial_{2s+1}
+ {4 \over 2 g + 1} \lambda_4  \sum_{s=1}^{g-1} (g - s)  t_{2s+1} \partial_{2s-1}, \\
\hspace{-20pt} \mathcal{L}_4 &= L_{4} - \zeta_3 \partial_1 - \zeta_1 \partial_3 - \sum_{s=1}^{g-2} (2s-1) t_{2s-1} \partial_{2s+3}
 - \\
 & \qquad \qquad \quad - \lambda_4 \sum_{s=1}^{g-1} (2s-1) t_{2s+1} \partial_{2s+1} + {6 \over 2 g + 1} \lambda_6 \sum_{s=1}^{g-1} (g - s) t_{2s+1} \partial_{2s-1} 
\end{align*}
belong to the Lie algebra of derivations of $\mathcal{F}$.
\end{thm}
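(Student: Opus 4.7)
The plan is to show that each $\mathcal{L}_{2k}$, $k\in\{0,1,2\}$, belongs to $\Der\mathcal{F}$, i.e., preserves the field of hyperelliptic functions. Since $\mathcal{F}$ is generated over the constants by the $\wp_{k_1,\ldots,k_n}$ and is closed under the $\partial_{2s-1}$, it suffices to verify that $\mathcal{L}_{2k}\wp_{i,j}\in\mathcal{F}$ for every admissible pair $(i,j)$. The case $k=0$ is immediate: with $\wt t_{2s-1}=-(2s-1)$ and $\wt\lambda_{2s}=2s$, the operator $\mathcal{L}_0$ is the total Euler vector field on $\mathbb{C}^g\times\mathcal{B}$, so $\mathcal{L}_0 f=(\wt f)\,f$ on homogeneous $f\in\mathcal{F}$, and the full $\mathcal{F}$ is graded, hence $\mathcal{L}_0 \mathcal{F}\subset\mathcal{F}$.

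For $\mathcal{L}_2$ and $\mathcal{L}_4$ the key tool is the Buchstaber--Leykin heat-equation description of the sigma function: for each $L_{2k}$ of Section \ref{s2} there exist polynomial coefficients $\alpha^{(2k)}_{ij},\beta^{(2k)}_{ij}\in\mathcal{P}$ and $\gamma^{(2k)}\in\mathcal{P}[t]$, all of weight $2k$, such that the second-order differential operator
\[
\hat H_{2k}=\tfrac{1}{2}\sum_{i,j}\alpha^{(2k)}_{ij}\partial_i\partial_j+\sum_{i,j}\beta^{(2k)}_{ij}t_i\partial_j+\gamma^{(2k)}
\]
satisfies $L_{2k}\sigma=\hat H_{2k}\sigma$. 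Dividing by $\sigma$ and using $\zeta_j=\partial_j\ln\sigma$ together with $\partial_i\zeta_j=-\wp_{i,j}$, one obtains
\[
L_{2k}\ln\sigma=\tfrac{1}{2}\sum\alpha^{(2k)}_{ij}(\zeta_i\zeta_j-\wp_{i,j})+\sum\beta^{(2k)}_{ij}t_i\zeta_j+\gamma^{(2k)}.
\]
The $\zeta$- and $t\partial$-corrections appearing in the definition of $\mathcal{L}_{2k}$ are tailored so that their action on $\ln\sigma$ cancels the $\zeta_i\zeta_j$- and $t_i\zeta_j$-terms on the right. Consequently $\mathcal{L}_{2k}\ln\sigma$ equals a hyperelliptic function plus a polynomial in $(t,\lambda)$. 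Applying $-\partial_i\partial_j$ and commuting past the $t$-dependent coefficients of $\mathcal{L}_{2k}$ via $[\partial_i\partial_j,\mathcal{L}_{2k}]$—which, thanks to $\partial_m\zeta_n=-\wp_{m,n}$, produces only $\wp$-expressions—yields $\mathcal{L}_{2k}\wp_{i,j}\in\mathcal{F}$, as required.

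The principal obstacle is the explicit coefficient matching: one must verify that the heat operators $\hat H_2$ and $\hat H_4$, uniquely determined by compatibility with $L_{2k}$ and the quasi-periodicity of $\sigma$, reproduce precisely the stated correction terms $\zeta_1\partial_1$ and $\zeta_3\partial_1+\zeta_1\partial_3$, the shifts $(2s-1)t_{2s-1}\partial_{2s+2k-1}$, the diagonal piece $\lambda_4(2s-1)t_{2s+1}\partial_{2s+1}$, and the $\lambda_4,\lambda_6$-pieces with coefficients $\frac{4(g-s)}{2g+1}$ and $\frac{6(g-s)}{2g+1}$. The first arise from the $\partial_1^2$- and $\partial_1\partial_3$-parts of $\hat H_{2k}$; the factors $(2s-1)$ reflect $\wt t_{2s-1}$; and the $\lambda_4,\lambda_6$-weighted pieces come from the structure constants of $\mathscr{L}_{\mathcal{B}}$ recorded in Lemma \ref{lcijk} (and its $L_4$-analogue obtainable from \eqref{ev}), which encode the $\lambda$-dependence of the period lattice $\Gamma_\lambda$. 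Once this bookkeeping is complete, the derivation and Leibniz properties extend from the generators $\wp_{1,j},\wp_{1,1,j}$ to all of $\mathcal{F}$, giving $\mathcal{L}_{2k}\in\Der\mathcal{F}$.
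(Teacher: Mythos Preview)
The paper itself offers no proof of this statement: it is quoted verbatim as Theorem~6.1 of \cite{BB21} and used as input for what follows. So there is no ``paper's own proof'' to compare against here; the relevant comparison is with the argument in \cite{BB21}, which in turn rests on the heat-equation machinery of \cite{BL2004} and \cite{BL}.

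Your outline is in fact the same strategy that underlies those references: one uses that $\sigma$ satisfies a system $L_{2k}\sigma=\hat H_{2k}\sigma$ with explicit second-order operators $\hat H_{2k}$ (annihilators in a nonholonomic frame), rewrites this as an identity for $L_{2k}\ln\sigma$ in terms of $\zeta_i$, $\wp_{i,j}$, and polynomials in $t$, and then checks that subtracting the $\zeta$- and $t\partial$-corrections kills exactly the nonperiodic pieces, so that after applying $-\partial_i\partial_j$ one lands in $\mathcal{F}$. Your identification of the Euler case $k=0$, of the origin of the second-order $\zeta$-terms from the $\partial_1^2$ and $\partial_1\partial_3$ parts of $\hat H_{2k}$, and of the $\lambda_4,\lambda_6$ coefficients from the structure constants of $\mathscr{L}_{\mathcal{B}}$, all match the actual mechanism.

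What your sketch leaves implicit, and what carries the real weight in \cite{BB21}, is the periodicity check rather than mere membership in the differential ring generated by $\wp$'s. The coefficients $\zeta_1,\zeta_3$ and $t_{2s-1}$ in $\mathcal{L}_{2k}$ are themselves nonperiodic, and $L_{2k}$ moves the lattice $\Gamma_\lambda$; one has to verify that under $t\mapsto t+\omega(\lambda)$ the contributions from $L_{2k}\omega(\lambda)$, from the quasi-periodicity factor of $\sigma$, and from the explicit $t$- and $\zeta$-terms cancel. In the cited works this is done by showing that $\hat H_{2k}$ is uniquely fixed by compatibility with the quasi-periodicity of $\sigma$, which forces exactly the coefficients $\frac{4(g-s)}{2g+1}\lambda_4$ and $\frac{6(g-s)}{2g+1}\lambda_6$ you singled out. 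Your last paragraph gestures at this but does not close the loop; to make the proof self-contained you would need either to invoke the uniqueness statement for $\hat H_{2k}$ or to carry out the monodromy computation directly.
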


\begin{lem}[Equation (22) in \cite{B3}] \label{ld0} For the polynomial vector field 
\begin{align*}
\mathcal{D}_0 &= \sum_j (j+1) b_{1,j} {\partial \over \partial b_{1,j}} + (j+2) b_{2,j} {\partial \over \partial b_{2,j}} +
(j+3) b_{3, j} {\partial \over \partial b_{3,j}}.
\end{align*}
we have $\mathcal{L}_0 (\varphi^* b_{i,j}) = \varphi^* \mathcal{D}_0 (b_{i,j})$ for all $i \in \{1,2,3\}$, $j \in \{1,3,\ldots,2g-1\}$.
\end{lem}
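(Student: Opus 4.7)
The proof will be a weight-homogeneity argument: $\mathcal{L}_0$ acts on any weight-homogeneous function as multiplication by its weight, and the same holds for $\mathcal{D}_0$ on the coordinate monomials $b_{i,j}$, so everything reduces to matching weights on both sides.

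First I would recall from Section \ref{S0} the weight convention $\wt t_{2s-1} = -(2s-1)$ and $\wt \lambda_{2k} = 2k$. Since $\sigma(t,\lambda)$ is weight-homogeneous and $\wp_{k_1,\ldots,k_n} = -\partial_{k_1}\cdots\partial_{k_n}\ln\sigma$, the function $\wp_{k_1,\ldots,k_n}$ is weight-homogeneous of weight $k_1+\cdots+k_n$. Consequently
\[
\wt \varphi^* b_{1,j} = \wt \wp_{1,j} = 1+j, \quad \wt \varphi^* b_{2,j} = \wt \wp_{1,1,j} = 2+j, \quad \wt \varphi^* b_{3,j} = \wt \wp_{1,1,1,j} = 3+j,
\]
so in each case $\wt \varphi^* b_{i,j} = i+j$.

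Next I would verify that $\mathcal{L}_0$ is the Euler vector field on $\mathbb{C}^g \times \mathcal{B}$ in disguise. By the formula for $L_0$ (see Section \ref{s2}, $L_0 = \sum_k 2k\lambda_{2k}\partial/\partial\lambda_{2k}$) and the definition $\mathcal{L}_0 = L_0 - \sum_{s=1}^g (2s-1) t_{2s-1}\partial_{2s-1}$, we have
\[
\mathcal{L}_0 = \sum_{k} (\wt \lambda_{2k}) \, \lambda_{2k} \frac{\partial}{\partial \lambda_{2k}} + \sum_{s} (\wt t_{2s-1}) \, t_{2s-1} \frac{\partial}{\partial t_{2s-1}},
\]
which is the standard Euler operator associated with the weight grading. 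By Euler's identity, $\mathcal{L}_0 f = w f$ for any weight-homogeneous function $f$ of weight $w$. Applied to $\varphi^* b_{i,j}$ this gives $\mathcal{L}_0(\varphi^* b_{i,j}) = (i+j)\,\varphi^* b_{i,j}$.

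Finally, a direct computation from the displayed formula for $\mathcal{D}_0$ yields $\mathcal{D}_0(b_{i,j}) = (i+j)\, b_{i,j}$, hence $\varphi^* \mathcal{D}_0(b_{i,j}) = (i+j)\,\varphi^* b_{i,j}$. Comparing with the previous displayed identity gives $\mathcal{L}_0(\varphi^* b_{i,j}) = \varphi^* \mathcal{D}_0(b_{i,j})$, as required. The only point that needs care is checking that the sigma function, and hence the $\wp$-functions, are genuinely weight-homogeneous in the variables $(t,\lambda)$; this is standard for hyperelliptic sigma functions (see \cite{BEL-12}) and underlies the whole weight bookkeeping used throughout the paper, so no serious obstacle arises.
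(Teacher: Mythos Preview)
Your argument is correct: the key observation is that $\mathcal{L}_0$ is the Euler vector field for the grading on $(t,\lambda)$, so it multiplies any weight-homogeneous function by its weight, and $\varphi^*b_{i,j}=\wp_{1^i,j}$ has weight $i+j$, which matches $\mathcal{D}_0(b_{i,j})=(i+j)b_{i,j}$. The paper itself does not supply a proof of this lemma---it is simply quoted as Equation~(22) of \cite{B3}---but the weight/Euler argument you give is exactly the natural one and is implicit in the paper's repeated remark that ``all the equations in this paper are of homogeneous weight'' and that $\mathcal{L}_0$ is the Euler vector field (cf.\ the proof of Lemma~\ref{lem01}).
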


\begin{cor} \label{cor0}
 The polynomial vector field $\mathcal{D}_0$ is projectable for the polynomial map~$\rho$ with pushforward $L_0$.
\end{cor}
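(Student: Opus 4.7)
The plan is to combine Lemma \ref{ld0} with the commutativity $\rho\circ\varphi = \pi$ of diagram \eqref{diagbig}. Because projectability respects the Leibniz rule, it suffices to verify the identity $\mathcal{D}_0(\rho^*\lambda_{2k}) = \rho^*L_0(\lambda_{2k}) = 2k\,\rho^*\lambda_{2k}$ on each generator $\lambda_{2k}$ of the polynomial ring $\mathcal{P}$.

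First I would extend Lemma \ref{ld0} by the Leibniz rule to any polynomial expression in the coordinates $b_{i,j}$. By Corollary \ref{cor3}, $\rho^*\lambda_{2k}$ is such a polynomial (obtained recursively from the relations of Corollary \ref{c1}), so this gives
\[
\mathcal{L}_0(\varphi^*\rho^*\lambda_{2k}) = \varphi^*\mathcal{D}_0(\rho^*\lambda_{2k}).
\]
On the other hand, $\varphi^*\rho^*\lambda_{2k} = \pi^*\lambda_{2k} = \lambda_{2k}$ as a function on $\mathcal{U}$, since $\lambda_{2k}$ depends only on the base. The explicit formula for $\mathcal{L}_0$ then gives $\mathcal{L}_0(\lambda_{2k}) = L_0(\lambda_{2k}) = 2k\lambda_{2k}$, because the correction term $\sum_s(2s-1)t_{2s-1}\partial_{2s-1}$ involves only $t$-derivatives. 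Hence $\varphi^*\mathcal{D}_0(\rho^*\lambda_{2k}) = \varphi^*\rho^*L_0(\lambda_{2k})$, and since $\varphi^*$ is injective on the polynomial ring in the $b_{i,j}$ (by Theorem \ref{thm3} together with the denseness of the image of $\varphi$), cancellation yields the desired identity.

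A more direct, self-contained alternative is by weight: $\mathcal{D}_0$ is by inspection the Euler vector field on $\mathbb{C}^{3g}$ for the weights $\wt b_{1,j} = j+1$, $\wt b_{2,j} = j+2$, $\wt b_{3,j} = j+3$ inherited from $\wt t_k = -k$ and $\wt \lambda_{2k} = 2k$. Since the defining equations of $\mathcal{S}$ in Corollary \ref{c1} are homogeneous of weight, each $\rho^*\lambda_{2k}$ is a weight-$2k$ polynomial in the $b_{i,j}$; consequently $\mathcal{D}_0$ acts on it by multiplication by $2k$, which matches $\rho^*L_0(\lambda_{2k}) = 2k\,\rho^*\lambda_{2k}$.

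I do not anticipate any serious obstacle: the Leibniz extension is routine and both the injectivity of $\varphi^*$ (first approach) and the weight-homogeneity of $\rho^*\lambda_{2k}$ (second approach) follow from standard considerations already recorded in the introduction. The only mild care is needed in spelling out that the recursive elimination of Corollary \ref{c1} produces a well-defined polynomial representative for $\rho^*\lambda_{2k}$ in the coordinates $b_{i,j}$, which is precisely the content of Corollary \ref{cor3}.
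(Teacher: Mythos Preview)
Your proposal is correct. The paper states this as an immediate corollary without an explicit proof; your second argument via weights is precisely the intended one, since the paper has already noted that $L_0$ is the Euler vector field for the grading $\wt\lambda_{2k}=2k$ and Lemma~\ref{ld0} exhibits $\mathcal{D}_0$ as the Euler field for $\wt b_{i,j}=i+j$, so homogeneity of the equations in Corollary~\ref{c1} gives $\mathcal{D}_0(\rho^*\lambda_{2k})=2k\,\rho^*\lambda_{2k}=\rho^*L_0(\lambda_{2k})$ directly. Your first approach through the commutativity of diagram~\eqref{diagbig} and injectivity of $\varphi^*$ is also sound, just a bit more machinery than necessary.
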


\begin{prb}
 Describe explicitly the polynomial vector fields $\mathcal{D}_2$ and $\mathcal{D}_4$ such that $\mathcal{L}_k (\varphi^* b_{i,j}) = \varphi^* \mathcal{D}_k (b_{i,j})$ for $k = 2,4$ and $i \in \{1,2,3\}, j \in \{1,3,\ldots,2g-1\}$.
\end{prb}

We will give a solution of this problem based on results of \cite{BB21} and \cite{BL} in our next work.

\vfill
\eject

\section{Polynomial dynamical systems related to~differentiations of~hyperelliptic~functions} \label{Sdyn}

In Section \ref{s4} the generators $\mathcal{L}_k$ of $\Der \mathcal{F}$ and the polynomial vector fields $\mathcal{D}_k$ in $\mathbb{C}^{3g}$ are~related by
\eqref{keyrel}. In this Section we give the
graded homogeneous polynomial dynamical systems in $\mathbb{C}^{3g}$ determined by these vector fields. We follow the approach of \cite{B2}, where such a description is given in the case of genus~$g=1$ and $g=2$. See also Section 8 of~\cite{BB21}.
By~definition, the dynamical system $S_{k}$ corresponding to the vector field $\mathcal{D}_{k}$ is given by
\[
{\partial \over \partial \tau_{k}} b_{i,j} = \mathcal{D}_{k}(b_{i,j}).
\]

The dynamical system $S_0$ corresponding to the Euler vector field $\mathcal{D}_0$ is given by
\[
 {\partial \over \partial \tau_0} b_{i,j} = (i+j) b_{i,j}, \quad \text{for} \quad i \in \{1,2,3\}, \quad j \in \{1,3,\ldots,2g-1\}.
\]
The dynamical system $S_1$ corresponding to the vector field $\mathcal{D}_1$ for $j \in \{1,3,\ldots,2g-1\}$ is given by
\begin{align*}
{\partial \over \partial \tau_1} b_{i,j} &= b_{i+1,j}, \quad \text{for} \quad i \in \{1,2\},&
{\partial \over \partial \tau_1} b_{3,j} &= 4 (2 b_{1,1} b_{2,j} + b_{2,1} b_{1, j} + b_{2, j+2}),
\end{align*}
where $b_{2,2 g + 1} = 0$. 

The dynamical systems $S_s$ for $s \in \{3, 5, \ldots, 2g - 1\}$ corresponding to vector fields $\mathcal{D}_s$ for $j \in \{1,3,\ldots,2g-1\}$ are given by
\begin{align*}
{\partial \over \partial \tau_s} b_{1,j} &= {1 \over 2} \mathcal{D}_1(p_{s,j}), &
{\partial \over \partial \tau_s} b_{2,j} &= {1 \over 2} \mathcal{D}_1(\mathcal{D}_1(p_{s,j})), &
{\partial \over \partial \tau_s} b_{3,j} &= {1 \over 2} \mathcal{D}_1(\mathcal{D}_1(\mathcal{D}_1(p_{s,j}))),
\end{align*}
where $p_{s,1} = 2 b_{1,s}$.

\section{Lie algebra of derivations of hyperelliptic functions: genus~$2$ case} \label{S2L}

The following results were obtained in \cite{B2}. We give them in the notation of this work using the explicit formulas from Section \ref{s4}.

\begin{thm}[Theorem B.3 and Theorem B.6 in~\cite{B2}] \label{tg2} In the case of genus $g=2$ the~following vector fields give a~solution to Problem \ref{p1}
\begin{align*}
\mathcal{L}_1 &= \partial_{1}, & \mathcal{L}_0 &= L_0 - t_1 \partial_{1} - 3 t_3 \partial_{3}, &
\mathcal{L}_4 &= L_4 - \zeta_3 \partial_{1} - \zeta_1 \partial_{3} - \lambda_4 t_3 \partial_{3} + {6 \over 5} \lambda_6 t_3 \partial_{1}, \\
\mathcal{L}_3 &= \partial_{3}, & \mathcal{L}_2 &= L_2 - \zeta_1 \partial_{1} - t_1 \partial_{3} + {4 \over 5} \lambda_4 t_3 \partial_{1}, &
\mathcal{L}_6 &= L_6 - \zeta_3 \partial_{3} + {3 \over 5} \lambda_8 t_3 \partial_{1}.
\end{align*}
For $m,l \in \{1,2\}$, $m \leqslant l$, and $k \in \{1,2,3,4,6\}$, the commutation relations are
\begin{align*}
[\mathcal{L}_0, \mathcal{L}_k] &= k \mathcal{L}_k,  & \quad [\mathcal{L}_1, \mathcal{L}_3] &= 0,
\\
[\mathcal{L}_1, \mathcal{L}_2] &= \wp_{1,1} \mathcal{L}_1 - \mathcal{L}_3, & [\mathcal{L}_3, \mathcal{L}_2] &= \left(\wp_{1,3} + {4 \over 5} \lambda_4 \right) \mathcal{L}_1, \\
[\mathcal{L}_1, \mathcal{L}_4] &= \wp_{1,3} \mathcal{L}_1 + \wp_{1,1} \mathcal{L}_3, & [\mathcal{L}_3, \mathcal{L}_4] &= \left(\wp_{3,3} + {6 \over 5} \lambda_6 \right) \mathcal{L}_1 + \left(\wp_{1,3} - \lambda_4\right) \mathcal{L}_3,\\
[\mathcal{L}_1, \mathcal{L}_6] &= \wp_{1,3} \mathcal{L}_3, & [\mathcal{L}_3, \mathcal{L}_6] &= {3 \over 5} \lambda_8 \mathcal{L}_1 + \wp_{3,3} \mathcal{L}_3,\\
[\mathcal{L}_{2m}, \mathcal{L}_{2l+2}] &= \sum_{s=0}^{3} c_{2m,2l+2}^{2s}(\lambda) \mathcal{L}_{2s}
-\frac{1}{2} \wp_{2m-1,2l-1,3} \mathcal{L}_1 +\frac{1}{2} \wp_{1,2m-1,2l-1} \mathcal{L}_3.  \hspace{-250 pt} & &
\end{align*}
\end{thm}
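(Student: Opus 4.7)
The plan is to split the proof into three tasks: verifying that each listed operator $\mathcal{L}_k$ is a genuine derivation of $\mathcal{F}$; verifying that the six fields generate $\Der\mathcal{F}$ as an $\mathcal{F}$-module; and computing all commutation relations. Throughout I will exploit relation \eqref{reld} via diagram \eqref{diagbig}, which reduces every question about $\mathcal{L}_k$ acting on $\mathcal{F}$ to a polynomial question about some $\mathcal{D}_k$ on $\mathcal{S}\simeq\mathbb{C}^{3g}$.

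For the first task, $\mathcal{L}_1=\partial_1$ and $\mathcal{L}_3=\partial_3$ belong to $\Der\mathcal{F}$ by Remark \ref{rem45}, while $\mathcal{L}_0,\mathcal{L}_2,\mathcal{L}_4$ are obtained by substituting $g=2$ into the general formulas cited just before Lemma \ref{ld0}. The genuinely new operator is $\mathcal{L}_6$, and the cleanest route is to exhibit a polynomial vector field $\mathcal{D}_6$ on $\mathbb{C}^6$ with $\mathcal{L}_6(\varphi^* b_{i,j})=\varphi^*\mathcal{D}_6(b_{i,j})$ for $i\in\{1,2,3\}$, $j\in\{1,3\}$; by Theorem \ref{t31} this guarantees that $\mathcal{L}_6$ preserves all of $\mathcal{F}$. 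The candidate for $\mathcal{D}_6$ is read off by applying $L_6-\zeta_3\partial_3+\tfrac{3}{5}\lambda_8 t_3\partial_1$ to $\wp_{1,j}$ and $\wp_{1,1,j}$ and re-expressing the result in the coordinates $b_{i,j}$ using Corollary \ref{c1}.

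For independence, the $6\times 7$ matrix of coefficients of $(\mathcal{L}_1,\mathcal{L}_3,\mathcal{L}_0,\mathcal{L}_2,\mathcal{L}_4,\mathcal{L}_6)$ in the basis $(\partial_1,\partial_3,\partial/\partial\lambda_4,\partial/\partial\lambda_6,\partial/\partial\lambda_8,\partial/\partial\lambda_{10})$ splits into blocks: $\mathcal{L}_1,\mathcal{L}_3$ cover the $t$-directions, and the $\lambda$-components of $\mathcal{L}_0,\mathcal{L}_2,\mathcal{L}_4,\mathcal{L}_6$ are rows of the $4\times 4$ matrix $V(\lambda)$ from \eqref{ev}, which is invertible off the discriminant $\Sigma$. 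Hence the six fields are $\mathcal{F}$-linearly independent on $\mathcal{U}$, and since $\Der\mathcal{F}$ has rank $3g=6$ by the general theory of \cite{BL}, they form a generating set.

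The commutation relations come in three families. The weight relations $[\mathcal{L}_0,\mathcal{L}_k]=k\mathcal{L}_k$ are immediate from weight-homogeneity, and $[\mathcal{L}_1,\mathcal{L}_3]=0$ is trivial. The mixed relations $[\mathcal{L}_{2m-1},\mathcal{L}_{2l}]$ follow from a direct application of $\partial_{2m-1}$ to the coefficients of $\mathcal{L}_{2l}$ together with $\partial_k\zeta_s=-\wp_{k,s}$; the polynomial tails $\tfrac{4}{5}\lambda_4$, $\tfrac{6}{5}\lambda_6$, $\tfrac{3}{5}\lambda_8$ come out precisely from differentiating the $t_3$-coefficients of $\mathcal{L}_2$, $\mathcal{L}_4$, $\mathcal{L}_6$. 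The main obstacle is the family $[\mathcal{L}_{2m},\mathcal{L}_{2l+2}]$ with $m,l\in\{1,2\}$: the bracket expands into the base piece $[L_{2m},L_{2l+2}]$, which by Lemma \ref{lcijk} and Theorem 2.5 of \cite{BL2004} produces $\sum_s c_{2m,2l+2}^{2s}(\lambda)L_{2s}$; cross terms of the form $L_{2m}(\zeta_{2s-1})\partial_{2s-1}$ and the brackets of the $t_{2s+1}\partial_{2s-1}$-corrections against $\zeta_\bullet\partial_\bullet$-pieces produce fourth derivatives $\wp_{a,b,c,d}$ of $\ln\sigma$, which have to be collapsed back into $\mathcal{F}$ via the cubic relations of Theorem \ref{t31}. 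Systematically regrouping the surviving $\partial_1,\partial_3$ pieces as $\mathcal{L}_1,\mathcal{L}_3$ plus residual $\zeta_\bullet\partial_\bullet$-contributions that reassemble into $\mathcal{L}_{2s}$, one reaches the stated form, the trailing $-\tfrac12\wp_{2m-1,2l-1,3}\mathcal{L}_1+\tfrac12\wp_{1,2m-1,2l-1}\mathcal{L}_3$ being forced by the symmetry of the third derivative of $\ln\sigma$ in the indices $1,2m-1,2l-1$.
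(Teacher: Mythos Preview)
The paper does not actually prove this theorem: it is stated as a quotation of Theorems~B.3 and~B.6 from \cite{B2}, with the prefatory sentence ``The following results were obtained in \cite{B2}''. So there is no in-paper proof to compare against directly. That said, the paper's treatment of the parallel genus-$3$ and genus-$4$ statements (Theorems~\ref{tg3} and~\ref{tg4}, Lemma~\ref{l24}) indicates the intended methodology, and your three-step outline matches it in broad strokes.

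Your proposal has one genuine gap. Both for the first task (showing $\mathcal{L}_6\in\Der\mathcal{F}$ by ``reading off'' a polynomial $\mathcal{D}_6$) and for the third task (the even--even commutators $[\mathcal{L}_{2m},\mathcal{L}_{2l+2}]$), you need explicit formulas for $L_{2k}(\zeta_s)$, i.e.\ for the action of the $\lambda$-vector fields on $\partial_s\ln\sigma$. This is the essential analytic input: it comes from the heat-equation theory of the hyperelliptic sigma function, and in the paper's genus-$4$ argument the proof of Lemma~\ref{l24} invokes Theorem~6.3 of \cite{BB20} precisely for this purpose. You cite neither that result nor any substitute, so the computation of $\mathcal{L}_6(\wp_{1,j})$ and of the cross terms $L_{2m}(\zeta_{2s-1})$ is left unsupported. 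Relatedly, your claim that these cross terms ``produce fourth derivatives $\wp_{a,b,c,d}$'' to be collapsed via Theorem~\ref{t31} is not correct: brackets among the $\zeta_\bullet\partial_\bullet$ and $t_\bullet\partial_\bullet$ pieces only yield second derivatives $\wp_{a,b}$ as coefficients, and the third derivatives $\wp_{i,j,k}$ appearing in the final commutator formulas enter through the $\mathcal{L}_{2k}(\zeta_s)$ expressions themselves, not through any reduction of fourth derivatives. Your independence argument via the nondegeneracy of $V(\lambda)$ off $\Sigma$ is fine and is the standard one.
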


The expressions for the coordinates $(\lambda)$ and $(p)$ are (see eq. (4.3)-(4.6) and (4.8) in~\cite{B2}):
\begin{align*}
\lambda_{4} &= - 3 b_{1,1}^2 + {1 \over 2} b_{3,1} - 2 b_{1,3}, \\
\lambda_{6} &= 2 b_{1,1}^3 + {1 \over 4} b_{2,1}^2 - {1 \over 2}  b_{1,1} b_{3,1} - 2 b_{1,1} b_{1,3} + {1 \over 2} b_{3,3}, \\
\lambda_{8} &= \left(4 b_{1,1}^2 + b_{1,3}\right) b_{1,3} - {1 \over 2} (b_{3,1} b_{1,3} - b_{2,1} b_{2,3} + b_{1,1} b_{3,3}), \hspace{-120pt} \\
\quad \lambda_{10} &= 2 b_{1,1} b_{1,3}^2 + {1 \over 4} b_{2,3}^2 - {1 \over 2} b_{1,3} b_{3,3},& 
p_{3,3} &= 6 b_{1,1} b_{1,3} - b_{3,3}.
\end{align*}

\begin{thm}[Section 4 in~\cite{B2}] \label{t2g} In the case of genus $g=2$
the homogeneous polynomial vector fields $\mathcal{D}_0,\mathcal{D}_1, \mathcal{D}_2, \mathcal{D}_3, \mathcal{D}_4, \mathcal{D}_6$ give a~solution to Problem~\ref{p2}. Here the vector fields~$\mathcal{D}_1, \mathcal{D}_3$ and~$\mathcal{D}_0$ are determined by Lemmas \ref{ld1}, \ref{ld2} and~\ref{ld0} for $g=2$. Set $p_{1,k} = 2 b_{1,k}$ and $q_{i,j,k} = \mathcal{D}_i(p_{j,k})$ for $i,j,k \in \{1,3\}$. The polynomial vector fields~$\mathcal{D}_2, \mathcal{D}_4, \mathcal{D}_6$ are determined by the conditions
\begin{align}
\mathcal{D}_2(b_{1,1}) &= {8 \over 5} \lambda_{4} + 2 b_{1,1}^2 + 4 b_{1,3}, \label{thisf} \\ \mathcal{D}_4(b_{1,1}) &= {2 \over 5} \lambda_{6} - 2 b_{1,1} b_{1,3} + b_{3,3}, \nonumber \\
\mathcal{D}_6(b_{1,1}) &= {1 \over 5} \lambda_{8} + {1 \over 2} \left(b_{3,1} b_{1,3} - b_{1,1} b_{3,3}\right) - b_{1,3}^2, \nonumber \\
\mathcal{D}_2(b_{1,3}) &= - {4 \over 5} \lambda_{4} b_{1,1} + 2 b_{1,1} b_{1,3}, \nonumber
\\
\mathcal{D}_4(b_{1,3}) &= - {6 \over 5} \lambda_{6} b_{1,1} + 2 \lambda_{4} b_{1,3} - 4 b_{1,1}^2 b_{1,3} + b_{3,1} b_{1,3} - {1 \over 2} b_{2,1} b_{2,3}, \nonumber
\\
\mathcal{D}_6(b_{1,3}) &= - {8 \over 5} \lambda_{8} b_{1,1} + 2 \lambda_{6} b_{1,3} - 2 b_{1,1} b_{1,3}^2 - b_{2,3}^2 + b_{1,3} b_{3,3}. \nonumber
\end{align}
and the relations
\begin{align} \label{comd2}
[\mathcal{D}_1, \mathcal{D}_2] &= b_{1,1} \mathcal{D}_1 - \mathcal{D}_3, & [\mathcal{D}_1, \mathcal{D}_4] &= b_{1,3} \mathcal{D}_1 + b_{1,1} \mathcal{D}_3, & [\mathcal{D}_1, \mathcal{D}_6] &= b_{1,3} \mathcal{D}_3.
\end{align}
These relations determine the polynomials $\mathcal{D}_k(b_{i,j})$ for $k \in \{2,4,6\}$, $i \in \{2,3\}$, $j \in \{1,3\}$.
For $m,l \in \{1,2\}$, $m \leqslant l$, and $k \in \{1,2,3,4,6\}$, the commutation relations are \eqref{comd2}~and
\begin{align*}
[\mathcal{D}_0, \mathcal{D}_k] &= k \mathcal{D}_k  & [\mathcal{D}_1, \mathcal{D}_3] &= 0, \\
[\mathcal{D}_3, \mathcal{D}_2] &= \left(b_{1,3} + {4 \over 5} \lambda_4 \right) \mathcal{D}_1, &
\quad [\mathcal{D}_3, \mathcal{D}_4] &= \left({1 \over 2} p_{3,3} + {6 \over 5} \lambda_6 \right) \mathcal{D}_1 + \left(b_{1,3} - \lambda_4\right) \mathcal{D}_3,\\
[\mathcal{D}_3, \mathcal{D}_6] &= {3 \over 5} \lambda_8 \mathcal{D}_1 + {1 \over 2} p_{3,3} \mathcal{D}_3,
\\
[\mathcal{D}_{2m}, \mathcal{D}_{2l+2}] &= \sum_{k=0}^{3} c_{2m,2l+2}^{2k}(\lambda) \mathcal{D}_{2k}
-\frac{1}{4} q_{2m-1,2l-1,3} \mathcal{D}_1 +\frac{1}{4} q_{1,2m-1,2l-1} \mathcal{D}_3. \hspace{-250pt}
\end{align*}
\end{thm}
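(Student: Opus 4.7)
The plan is to construct $\mathcal{D}_2, \mathcal{D}_4, \mathcal{D}_6$ from the derivations $\mathcal{L}_2, \mathcal{L}_4, \mathcal{L}_6$ of Theorem \ref{tg2} via the key relation \eqref{keyrel}, and then transfer the commutation relations from the $\mathcal{L}_k$-side to the $\mathcal{D}_k$-side. Since $\varphi$ is dominant by Theorem \ref{thm3}, the pull-back $\varphi^*$ is injective on $\mathbb{C}[b_{i,j}]$, so \eqref{keyrel} determines each $\mathcal{D}_k$ uniquely once we know $\mathcal{D}_k(b_{i,j})$ is a polynomial.

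First I would compute $\mathcal{L}_k(\varphi^* b_{1,j}) = \mathcal{L}_k \wp_{1,j}$ for $k\in\{2,4,6\}$ and $j\in\{1,3\}$ using the explicit formulas in Theorem \ref{tg2}. The vertical corrections in $\mathcal{L}_k$ (the $\zeta_i \partial_j$ and $t_m \partial_n$ terms) produce $\zeta$-valued summands that must cancel against the $\zeta$-components of $L_k \wp_{1,j}$, because the final output is a hyperelliptic function. The remaining combinations of $\wp$-symbols are rewritten via the relations of Theorem \ref{t31} and then pulled back to polynomials in $b_{i,j}$ using the displayed inverse formulas for $p_{3,3}$ and $\lambda_4, \lambda_6, \lambda_8, \lambda_{10}$. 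The outcome must match the stated conditions \eqref{thisf} and their companions for $\mathcal{D}_k(b_{1,3})$, and simultaneously demonstrates polynomiality of $\mathcal{D}_k(b_{1,1})$ and $\mathcal{D}_k(b_{1,3})$.

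To extend $\mathcal{D}_k$ to $b_{2,j}$ and $b_{3,j}$ I would exploit the stated relations \eqref{comd2}. Since $\mathcal{D}_1(b_{1,j}) = b_{2,j}$ and $\mathcal{D}_1(b_{2,j}) = b_{3,j}$, the identity
\[
\mathcal{D}_k(b_{2,j}) = \mathcal{D}_1(\mathcal{D}_k(b_{1,j})) - [\mathcal{D}_1, \mathcal{D}_k](b_{1,j})
\]
expresses $\mathcal{D}_k(b_{2,j})$ as a polynomial in $b_{i,j}$, and one further iteration delivers $\mathcal{D}_k(b_{3,j})$. This produces polynomial vector fields $\mathcal{D}_2, \mathcal{D}_4, \mathcal{D}_6$ of homogeneous weights $2, 4, 6$. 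Projectability for $\rho$ follows from \eqref{keyrel} and the shape of $\mathcal{L}_k$ in Theorem \ref{tg2}: the pushforward of $\mathcal{L}_k$ under $\pi$ is $L_k$ (all vertical terms are tangent to the fibers), whence $\mathcal{D}_k(\rho^*\lambda_s) \in \rho^*\mathcal{P}$ with pushforward $L_k$; for $k \in \{1,3\}$ the pushforward is zero by Corollary \ref{cords}. Independence at points of $\rho^{-1}(\mathcal{B})$ reduces to independence of $L_0, L_2, L_4, L_6$ on $\mathcal{B}$ (immediate off the discriminant $\Sigma$) together with independence of the two vertical vector fields $\mathcal{D}_1, \mathcal{D}_3$, visible from $\mathcal{D}_1(b_{1,1}) = b_{2,1}$ versus $\mathcal{D}_3(b_{1,1}) = b_{2,3}$.

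Finally, to establish the commutators I would apply $\varphi^*$ to the brackets of Theorem \ref{tg2} and translate each coefficient via $\wp_{1,1} \mapsto b_{1,1}$, $\wp_{1,3} \mapsto b_{1,3}$, $\wp_{3,3} \mapsto \tfrac12 p_{3,3}$, $\wp_{2m-1,2l-1,3}\mapsto \tfrac12 q_{2m-1,2l-1,3}$, and $\wp_{1,2m-1,2l-1} \mapsto \tfrac12 q_{1,2m-1,2l-1}$. Injectivity of $\varphi^*$ on $\mathbb{C}[b_{i,j}]$ promotes each equality of functions to an equality of polynomial vector fields on $\mathbb{C}^6$. The hard part will be the first step: the $\zeta$-cancellation and subsequent reduction to a polynomial in $b_{i,j}$ is a delicate bookkeeping of hyperelliptic identities, and verifying the general bracket $[\mathcal{D}_{2m}, \mathcal{D}_{2l+2}]$ with both correction terms $q_{2m-1,2l-1,3}$ and $q_{1,2m-1,2l-1}$ requires a careful match against the structure polynomials $c_{2m,2l+2}^{2k}(\lambda)$ from Section \ref{s2}.
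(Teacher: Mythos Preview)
Your proposal is correct and follows essentially the same route as the paper. The paper's own proof is a citation to \cite{B2} (Theorem~4.6, with references to Theorems~4.7 and~B.3 there and to Section~8 of~\cite{B3}), together with the remark that the expressions in that source are obtained precisely via the key relation~\eqref{keyrel} for the operators $\mathcal{L}_k$ of Theorem~\ref{tg2}; you have simply unpacked that construction in detail.

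Two small points worth tightening. First, your description of ``$\zeta$-components of $L_k\wp_{1,j}$'' is loose: the $L_k$ are $\lambda$-derivatives and do not produce $\zeta$'s by inspection; the mechanism is rather that the $\mathcal{L}_k$ are built so that the total action preserves the field $\mathcal{F}$, and one then computes $\mathcal{L}_k\wp_{1,j}$ directly from the known formulas for $\mathcal{L}_{2k}\zeta_s$ (cf.\ the genus~$4$ argument in the proof of Lemma~\ref{l24}). Second, your independence argument for $\mathcal{D}_1,\mathcal{D}_3$ via $\mathcal{D}_1(b_{1,1})=b_{2,1}$ versus $\mathcal{D}_3(b_{1,1})=b_{2,3}$ only works generically; for pointwise independence on $\rho^{-1}(\mathcal{B})$ one should invoke that $\varphi$ is a birational isomorphism (Dubrovin--Novikov), so that $\mathcal{D}_1,\mathcal{D}_3$ correspond to the coordinate vector fields $\partial_1,\partial_3$ on the fibres of $\pi$.
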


\begin{proof}
This Theorem follows from Theorem 4.6 in \cite{B2} and the expressions in the proof of this Theorem.
Note that this expressions are based on the relation \eqref{keyrel} for $\mathcal{L}_k$ given in Theorem \ref{tg2}.
See also Theorems 4.7 and B.3 in \cite{B2} and Section 8 in~\cite{B3}.
\end{proof}

Let us note that the result of Theorem \ref{t2g} is equivalent to the corresponding result in~Example 15 in \cite{BL}. Some misprints in \cite{BL} were corrected in \cite{B2}. In terms of the work~\cite{BL}, the polynomial vector fields $\mathcal{D}_k$ determine the action of the operators $\mathcal{L}_k$ on $\mathcal{F}$. The latter is determined by the values $\mathcal{L}_k(\wp_{1,1}) \in \mathcal{F}$, that correspond to $\mathcal{D}_k(b_{1,1})$, given by \eqref{thisf}.

\vfill

\eject

\section{Lie algebra of derivations of hyperelliptic functions: genus~$3$ case} \label{S3L}

\begin{thm}[Theorem 10.1 and Corollary 10.2 in~\cite{B3}] \label{tg3} In the case of genus $g= 3$ the following vector fields give a~solution to Problem \ref{p1}
\begin{align*}
\mathcal{L}_1 &= \partial_{1}, \qquad \mathcal{L}_3 = \partial_{3}, \qquad \mathcal{L}_5 = \partial_{5}, \\
\mathcal{L}_0 &= L_0 - t_1 \partial_{1} - 3 t_3 \partial_{3} - 5 t_5 \partial_{5}, \\
\mathcal{L}_2 &= L_2 - \left(\zeta_1 - {8 \over 7} \lambda_4 t_3 \right) \partial_{1}
- \left( t_1 - {4 \over 7} \lambda_4 t_5 \right) \partial_{3} - 3 t_3 \partial_{5}, \\
\mathcal{L}_4 &= L_4 - \left(\zeta_3 - {12 \over 7} \lambda_6 t_3 \right) \partial_{1}
- \left( \zeta_1 + \lambda_4 t_3 - {6 \over 7} \lambda_6 t_5 \right) \partial_{3} - (t_1 + 3 \lambda_4 t_5) \partial_{5}, \\
\mathcal{L}_6 &= L_6 - \left(\zeta_5 - {9 \over 7} \lambda_8 t_3 \right) \partial_{1}
- \left(\zeta_3 - {8 \over 7} \lambda_8 t_5 \right) \partial_{3}
- \left(\zeta_1 + \lambda_4 t_3 + 2 \lambda_6 t_5 \right) \partial_{5}, \\
\mathcal{L}_8 &= L_8 + \left({6 \over 7} \lambda_{10} t_3 - \lambda_{12} t_5\right) \partial_{1}
- \left(\zeta_5 - {10 \over 7} \lambda_{10} t_5 \right) \partial_{3}
- \left(\zeta_3 + \lambda_8 t_5 \right) \partial_{5}, \\
\mathcal{L}_{10} &= L_{10} + \left( {3 \over 7} \lambda_{12} t_3 - 2 \lambda_{14} t_5 \right) \partial_{1}
+ {5 \over 7} \lambda_{12} t_5 \partial_{3} - \zeta_5 \partial_{5}, 
\end{align*}
The commutation relations are
\begin{align*}
[\mathcal{L}_0, \mathcal{L}_k] &= k \mathcal{L}_k, \quad k=1,2,3,4,5,6,8,10, & [\mathcal{L}_1, \mathcal{L}_3] &= 0, & [\mathcal{L}_1, \mathcal{L}_5] &= 0, & [\mathcal{L}_3, \mathcal{L}_5] &= 0,
\end{align*}
\[
  \begin{pmatrix}
 [\mathcal{L}_1, \mathcal{L}_2] \\
 [\mathcal{L}_1, \mathcal{L}_4] \\
 [\mathcal{L}_1, \mathcal{L}_6] \\
 [\mathcal{L}_1, \mathcal{L}_8] \\
 [\mathcal{L}_1, \mathcal{L}_{10}] 
 \end{pmatrix}
= 
 \begin{pmatrix}
 \wp_{1,1} & -1 & 0\\
 \wp_{1,3} & \wp_{1,1} & -1 \\
 \wp_{1,5} & \wp_{1,3} & \wp_{1,1} \\
 0 & \wp_{1,5} & \wp_{1,3} \\
 0 & 0 & \wp_{1,5}
 \end{pmatrix}
 \begin{pmatrix}
  \mathcal{L}_1 \\ \mathcal{L}_3\\ \mathcal{L}_5
 \end{pmatrix},
\]
\[
  \begin{pmatrix}
 [\mathcal{L}_3, \mathcal{L}_2] \\
 [\mathcal{L}_3, \mathcal{L}_4] \\
 [\mathcal{L}_3, \mathcal{L}_6] \\
 [\mathcal{L}_3, \mathcal{L}_8] \\
 [\mathcal{L}_3, \mathcal{L}_{10}] 
 \end{pmatrix} = 
 {3 \over 7}
 \begin{pmatrix}
 5 \lambda_4 \\
 4 \lambda_6 \\
 3 \lambda_8 \\
 2 \lambda_{10} \\
 \lambda_{12}
 \end{pmatrix}
 \mathcal{L}_1
+
 \begin{pmatrix}
 \wp_{1,3} - \lambda_4 & 0 & - 3\\
 \wp_{3,3} & \wp_{1,3} - \lambda_4 & 0 \\
 \wp_{3,5} & \wp_{3,3} & \wp_{1,3} - \lambda_4 \\
 0 & \wp_{3,5} & \wp_{3,3}\\
 0 & 0 & \wp_{3,5}
 \end{pmatrix}
 \begin{pmatrix}
  \mathcal{L}_1 \\ \mathcal{L}_3\\ \mathcal{L}_5
 \end{pmatrix},
\]
\[
  \begin{pmatrix}
 [\mathcal{L}_5, \mathcal{L}_2] \\
 [\mathcal{L}_5, \mathcal{L}_4] \\
 [\mathcal{L}_5, \mathcal{L}_6] \\
 [\mathcal{L}_5, \mathcal{L}_8] \\
 [\mathcal{L}_5, \mathcal{L}_{10}] 
 \end{pmatrix} = 
 {2 \over 7}
 \begin{pmatrix}
 2 \lambda_{4} \\
 3 \lambda_6 \\
 4 \lambda_8 \\
 5 \lambda_{10} \\
 6 \lambda_{12} 
 \end{pmatrix}
 \mathcal{L}_3
 - \begin{pmatrix}
 0 \\
 3 \lambda_4 \\
 2 \lambda_6 \\
 \lambda_8 \\
 0
 \end{pmatrix}
 \mathcal{L}_5
+ 
 \begin{pmatrix}
 \wp_{1,5} & 0 & 0\\
 \wp_{3,5} & \wp_{1,5} & 0 \\
 \wp_{5,5} & \wp_{3,5} & \wp_{1,5} \\
 - \lambda_{12} & \wp_{5,5} & \wp_{3,5}\\
 -2 \lambda_{14} & - \lambda_{12} & \wp_{5,5}
 \end{pmatrix}
 \begin{pmatrix}
  \mathcal{L}_1 \\ \mathcal{L}_3\\ \mathcal{L}_5
 \end{pmatrix},
\]
\begin{equation} \label{L2k}
\begin{pmatrix}
[\mathcal{L}_2, \mathcal{L}_4]\\
 [\mathcal{L}_2, \mathcal{L}_6]\\
 [\mathcal{L}_2, \mathcal{L}_8]\\
 [\mathcal{L}_2, \mathcal{L}_{10}]
\end{pmatrix} = \begin{pmatrix}
\sum_{k=0}^{5} c_{2,4}^{2k}(\lambda) \mathcal{L}_{2k}\\
\sum_{k=0}^{5} c_{2,6}^{2k}(\lambda) \mathcal{L}_{2k} \\
\sum_{k=0}^{5} c_{2,8}^{2k}(\lambda) \mathcal{L}_{2k}\\
\sum_{k=0}^{5} c_{2,10}^{2k}(\lambda) \mathcal{L}_{2k}\\
\end{pmatrix} + {1 \over 2}
\begin{pmatrix}
 - \wp_{1,1,3} & \wp_{1,1,1} & 0 \\
- \wp_{1,3,3} - \wp_{1,1,5} & \wp_{1,1,3} & \wp_{1,1,1} \\
- 2 \wp_{1,3,5} & \wp_{1,1,5} & \wp_{1,1,3} \\
- \wp_{1,5,5} & 0 & \wp_{1,1,5} \\
\end{pmatrix}  \begin{pmatrix}
 \mathcal{L}_1 \\
 \mathcal{L}_3 \\
 \mathcal{L}_5 \\
 \end{pmatrix},
\end{equation}
\[
 \begin{pmatrix}
 [\mathcal{L}_4, \mathcal{L}_6]\\
 [\mathcal{L}_4, \mathcal{L}_8]\\
 [\mathcal{L}_4, \mathcal{L}_{10}]\\
 [\mathcal{L}_6, \mathcal{L}_8]\\
 [\mathcal{L}_6, \mathcal{L}_{10}]\\
 [\mathcal{L}_8, \mathcal{L}_{10}]
 \end{pmatrix} 
=  \begin{pmatrix}
\sum_{k=0}^{5} c_{4,6}^{2k}(\lambda) \mathcal{L}_{2k}\\
\sum_{k=0}^{5} c_{4,8}^{2k}(\lambda) \mathcal{L}_{2k}\\
\sum_{k=0}^{5} c_{4,10}^{2k}(\lambda) \mathcal{L}_{2k}\\
\sum_{k=0}^{5} c_{6,8}^{2k}(\lambda) \mathcal{L}_{2k}\\
\sum_{k=0}^{5} c_{6,10}^{2k}(\lambda) \mathcal{L}_{2k}\\
\sum_{k=0}^{5} c_{8,10}^{2k}(\lambda) \mathcal{L}_{2k}
 \end{pmatrix} + {1 \over 2}
 \begin{pmatrix}
 - \wp_{3,3,3} & \wp_{1,3,3} - 2 \wp_{1,1,5} & 2 \wp_{1,1,3} \\
 - 2 \wp_{3,3,5} & 0 & 2 \wp_{1,3,3} \\
 - \wp_{3,5,5} & - \wp_{1,5,5} & 2 \wp_{1,3,5} \\
- 2 \wp_{3,5,5} & 2 \wp_{1,5,5} - \wp_{3,3,5} & \wp_{3,3,3} \\
- \wp_{5,5,5} & - \wp_{3,5,5} & \wp_{3,3,5} + \wp_{1,5,5}\\
0 & - \wp_{5,5,5} & \wp_{3,5,5} 
 \end{pmatrix}
 \begin{pmatrix}
 \mathcal{L}_1 \\
 \mathcal{L}_3 \\
 \mathcal{L}_5 \\
 \end{pmatrix}.
\]
\end{thm}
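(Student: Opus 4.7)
The plan is to verify Theorem \ref{tg3} in two stages: first confirm that each proposed vector field $\mathcal{L}_k$ belongs to $\Der\mathcal{F}$ in the $g=3$ case, and then establish the commutation table by direct reduction to polynomial identities on $\mathbb{C}^{3g}$ via the key relation \eqref{keyrel}. The fields $\mathcal{L}_1,\mathcal{L}_3,\mathcal{L}_5$ are derivations of $\mathcal{F}$ by Remark \ref{rem45}, and $\mathcal{L}_0$ is the weighted Euler field whose pushforward under $\pi$ is $L_0\in\mathscr{L}_{\mathcal{B}}$. For each $\mathcal{L}_{2k}$ with $2k\in\{2,4,6,8,10\}$ the construction follows the pattern already verified in Section \ref{S2L}: the ansatz $\mathcal{L}_{2k}=L_{2k}-\sum_s(\zeta_{\alpha(k,s)}+c_{k,s}(\lambda)t_{\beta(k,s)})\partial_{2s-1}$ has pushforward $L_{2k}$ by construction, so it suffices to check fiberwise periodicity. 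Under a shift $t\mapsto t+\omega$ with $\omega\in\Gamma_\lambda$ each $\zeta_j(t,\lambda)$ picks up an affine function of $t$ (the associated period of the second kind), and one must verify that this change is cancelled by the linear-in-$t$ corrections, whose $\lambda$-coefficients are dictated by the matrix $V(\lambda)$ of \eqref{ev} specialized to $g=3$. This is a finite matching argument.

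To establish the commutators I would transport the whole computation across $\varphi$: by Theorem \ref{thm3} the coordinates $b_{i,j}$ uniformize $\mathcal{S}$, and Theorem \ref{t31} reduces every relevant $\wp$-identity to a polynomial identity in $b_{i,j},p_{k,l},\lambda_s$. Applying Leibniz together with the rules $\partial_{2m-1}\zeta_{2n-1}=-\wp_{2m-1,2n-1}$ and $\partial_{2m-1}\wp_{2n-1,2l-1}=-\wp_{2m-1,2n-1,2l-1}$, each commutator $[\mathcal{L}_{2i},\mathcal{L}_{2j}]$ becomes a derivation with $\mathcal{F}$-valued coefficients that can be expanded in the basis $\partial_1,\partial_3,\partial_5,L_0,L_2,\ldots,L_{10}$ and then rewritten in the basis $\mathcal{L}_k$. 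For the blocks involving $\mathcal{L}_1,\mathcal{L}_3,\mathcal{L}_5$ the reduction is immediate because $\mathcal{L}_{2i-1}=\partial_{2i-1}$: the polynomial coefficients such as $-\tfrac{3}{7}\lambda_4$, $-3$, $-\lambda_8$ appear directly from differentiating the explicit $t$-linear corrections in the formulas for $\mathcal{L}_{2j}$, while the $\wp_{1,k},\wp_{3,k},\wp_{5,k}$ entries of the displayed matrices come from differentiating the $\zeta$-terms. For the even/even blocks in \eqref{L2k} and the block below it, the $\mathcal{L}_{2k}$-components are pinned down by the commutators $[L_{2i},L_{2j}]\in\mathscr{L}_{\mathcal{B}}$, for which \eqref{3ex} and Lemma 4.3 of \cite{B3} give the full table, while the $\mathcal{L}_1,\mathcal{L}_3,\mathcal{L}_5$-components arise from cross terms.

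The main obstacle will be the bookkeeping of third-derivative terms in the even/even blocks: the naive expansion produces a long combination of $\wp_{a,b,c}$ multiplied by $\partial_{2k-1}$, and one must recognize that after applying Theorem \ref{t31} it collapses to the specific matrices with entries $\pm\tfrac{1}{2}\wp_{a,b,c}$ displayed in the statement. To handle this cleanly I would first carry out the analogous commutator computation in the polynomial category for the vector fields $\mathcal{D}_k$ on $\mathbb{C}^{3g}$ supplied by Lemmas \ref{ld1}, \ref{ld2}, \ref{ld0}, together with the unknown $\mathcal{D}_2,\mathcal{D}_4$ whose existence will be addressed in the sequel paper; by Theorem \ref{thm3} and Corollary \ref{c1} the identities in Theorem \ref{t31} are already built into the coordinate ring of $\mathcal{S}\simeq\mathbb{C}^{3g}$, so each $\wp$-identity is replaced by a single polynomial identity and the coefficient matching becomes mechanical. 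Pulling back through $\varphi$ via \eqref{keyrel} then yields the commutators in $\Der\mathcal{F}$ in the form stated, and the block structure of the matrices in Theorem \ref{tg3} is just the polynomial reformulation of the same calculation that produced Theorem \ref{tg2} for $g=2$.
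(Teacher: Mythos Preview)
Your overall plan is sound and close in spirit to how the paper handles the analogous $g=4$ statement (Theorem~\ref{tg4} together with Lemmas~\ref{lem01}--\ref{l24} and Corollary~\ref{corlast}); in particular your treatment of the odd/odd and odd/even blocks by direct differentiation of the explicit $\zeta$- and $t$-linear terms is exactly what the paper does. Two points deserve correction, however.

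First, there is a logical ordering issue in your handling of the even/even block. You propose to compute $[\mathcal{L}_{2i},\mathcal{L}_{2j}]$ by transporting to $[\mathcal{D}_{2i},\mathcal{D}_{2j}]$ via \eqref{keyrel}, but the validity of \eqref{keyrel} for the even-index fields is itself a theorem (the $g=4$ analogue is Theorem~\ref{tcomp}), and its proof already \emph{uses} the odd/even commutator tables on both the $\mathscr{L}$ and $\mathscr{D}$ sides to rule out a possible discrepancy $\sum_j\alpha^{j}_{2s}(b)\mathcal{D}_j$. So you cannot invoke the polynomial side before having done enough work on the $\mathscr{L}$ side. The paper's route is to compute $[\mathcal{L}_2,\mathcal{L}_{2k}]$ \emph{directly} from the explicit formulas, the key external input being closed expressions for $\mathcal{L}_{2k}\zeta_s$ (Theorem~6.3 of \cite{BB20}); only the remaining higher even/even commutators are obtained by transport from the $\mathcal{D}$-side after \eqref{keyrel} has been established. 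Your plan would work if you reorganize it this way, or alternatively if you define the $\mathcal{D}_{2k}$ autonomously and verify \eqref{keyrel} separately before pulling back.

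Second, your remark that $\mathcal{D}_2,\mathcal{D}_4$ are ``unknown'' and deferred to a sequel is a misreading: Problem~4.7 concerns \emph{general} $g$, whereas for $g=3$ the fields $\mathcal{D}_2,\mathcal{D}_4$ are given explicitly in the theorem immediately following Theorem~\ref{tg3}, by prescribing $\mathcal{D}_2(b_{1,j}),\mathcal{D}_4(b_{1,j})$ and the relations \eqref{rel3}. With those in hand, your ``mechanical coefficient matching'' on $\mathcal{S}\simeq\mathbb{C}^{9}$ can indeed be carried out and yields the displayed $\tfrac{1}{2}\wp_{a,b,c}$ matrices.
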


The expressions for the coordinates $(\lambda)$ and $(p)$ are (see Section 9 in~\cite{B3}):
\begin{align*}
\lambda_{4} &= - 3 b_{1,1}^2 + {1 \over 2} b_{3,1} - 2 b_{1,3}, \qquad
\lambda_{6} = 2 b_{1,1}^3 + {1 \over 4} b_{2,1}^2 - {1 \over 2} b_{1,1} b_{3,1} - 2 b_{1,1} b_{1,3} + {1 \over 2} b_{3,3} - 2 b_{1,5}, \\
\lambda_{8} &= 4 b_{1,1}^2 b_{1,3} - {1 \over 2} (b_{3,1} b_{1,3} - b_{2,1} b_{2,3} + b_{1,1} b_{3,3}) + b_{1,3}^2 - 2 b_{1,1} b_{1,5} + {1 \over 2} b_{3,5}, \\
\lambda_{10} &= 2 b_{1,1} b_{1,3}^2 + {1 \over 4} b_{2,3}^2 - {1 \over 2} b_{1,3} b_{3,3} - \frac{1}{2} (b_{3,1} b_{1,5} - b_{2,1} b_{2,5} + b_{1,1} b_{3,5}) + \left(4 b_{1,1}^2 + 2 b_{1,3}\right) b_{1,5}, \\
\lambda_{12} &= 4 b_{1,1} b_{1,3} b_{1,5} - {1 \over 2} (b_{3,3} b_{1,5} - b_{2,3} b_{2,5} + b_{1,3} b_{3,5}) + b_{1,5}^2, \\
\lambda_{14} &= 2 b_{1,1} b_{1,5}^2 + {1 \over 4} b_{2,5}^2 - {1 \over 2} b_{1,5} b_{3,5},
\qquad
p_{3,3} = 6 b_{1,1} b_{1,3} - b_{3,3} + 6 b_{1,5}, \\
p_{3,5} &= 6 b_{1,1} b_{1,5} - b_{3,5}, \qquad 
p_{5,5} = (b_{3,1} b_{1,5} - b_{2,1} b_{2,5} + b_{1,1} b_{3,5}) - 2 \left(4 b_{1,1}^2 + b_{1,3}\right) b_{1,5}.
\end{align*}

\begin{thm} In the case of genus $g=3$
the homogeneous polynomial vector fields $\mathcal{D}_s$ for $s \in \{0, 1, 2, 3, 4, 5, 6, 8, 10\}$ give a~solution to Problem~\ref{p2}. The vector fields~$\mathcal{D}_1, \mathcal{D}_3, \mathcal{D}_5$ and $\mathcal{D}_0$ are determined by Lemmas \ref{ld1}, \ref{ld2} and~\ref{ld0} for~$g=3$.
We set $p_{1,k} = 2 b_{1,k}$ and $q_{i,j,k} = \mathcal{D}_i(p_{j,k})$ for $i,j,k \in \{1,3,5\}$.
The polynomial vector fields~$\mathcal{D}_2$ and $\mathcal{D}_4$ are determined by the conditions 
\begin{align*}
\mathcal{D}_2(b_{1,1}) &= {12 \over 7} \lambda_{4} + 2 b_{1,1}^2 + 4 b_{1,3}, \qquad
\mathcal{D}_4(b_{1,1}) = {4 \over 7} \lambda_6 - 2 b_{1,1} b_{1,3} + b_{3,3} + 2 b_{1,5}, \\
\mathcal{D}_2(b_{1,3}) &= - {8 \over 7} \lambda_4 b_{1,1} + 2 b_{1,1} b_{1,3} + 6 b_{1,5}, \\
\mathcal{D}_4(b_{1,3}) &= - {12 \over 7} \lambda_6 b_{1,1} - 10 b_{1,1}^2 b_{1,3} + 2 b_{3,1} b_{1,3} - 4 b_{1,3}^2 - {1 \over 2} b_{2,1} b_{2,3} + 2 b_{1,1} b_{1,5},\\
\mathcal{D}_2(b_{1,5}) &= - {4 \over 7} \lambda_4 b_{1,3} + 2 b_{1,1} b_{1,5}, \\
\mathcal{D}_4(b_{1,5}) &= - {6 \over 7} \lambda_6 b_{1,3} - 16 b_{1,1}^2 b_{1,5}  + 3 b_{3,1} b_{1,5} - 8 b_{1,3} b_{1,5} - {1 \over 2} b_{2,1} b_{2,5},
\end{align*}
and the relations 
\begin{equation} \label{rel3}
\begin{pmatrix}
 [\mathcal{D}_1, \mathcal{D}_2] \\
 [\mathcal{D}_1, \mathcal{D}_4] \\
 \end{pmatrix}
= 
 \begin{pmatrix}
 b_{1,1} & -1 & 0\\
 b_{1,3} & b_{1,1} & -1 \\
 \end{pmatrix}
 \begin{pmatrix}
  \mathcal{D}_1 \\ \mathcal{D}_3\\ \mathcal{D}_5
 \end{pmatrix}.
\end{equation}

The polynomial vector fields~$\mathcal{D}_6, \mathcal{D}_8$ and $\mathcal{D}_{10}$ are determined by the relations
\begin{align}
\mathcal{D}_{6} &=  {1 \over 4} (2 [\mathcal{D}_2, \mathcal{D}_{4}] + b_{2,3} \mathcal{D}_1 - b_{2,1} \mathcal{D}_3) - {8 \over 7} \left( \lambda_{6} \mathcal{D}_0 - \lambda_4 \mathcal{D}_{2}\right), \label{this} \\
\mathcal{D}_{8} &= {1 \over 8} \left(2 [\mathcal{D}_2, \mathcal{D}_{6}] + \left({1 \over 2} \mathcal{D}_1(p_{3,3}) + b_{2,5}\right) \mathcal{D}_1 - b_{2,3} \mathcal{D}_3 - b_{2,1} \mathcal{D}_5\right) - {3 \over 7} \left( \lambda_{8} \mathcal{D}_0 - \lambda_4 \mathcal{D}_{4}\right), \nonumber
\\
\mathcal{D}_{10} &= {1 \over 12} (2 [\mathcal{D}_2, \mathcal{D}_{8}] + \mathcal{D}_1(p_{3,5}) \mathcal{D}_1 - b_{2,5} \mathcal{D}_3 - b_{2,3} \mathcal{D}_5) - {4 \over 21} \left( \lambda_{10} \mathcal{D}_0 - \lambda_4 \mathcal{D}_{6}\right). \nonumber
\end{align}

The commutation relations are~obtained from the relations in Theorem \ref{tg3} by the correspondence
\begin{align*}
\mathcal{L}_k &= \varphi^* \mathcal{D}_k, & \varphi^* p_{i,j} &= 2 \wp_{i,j} & \varphi^* q_{i,j,k} &= 2 \wp_{i,j,k}.
\end{align*}
\end{thm}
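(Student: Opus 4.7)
The plan is to take the defining relation $\mathcal{L}_k(\varphi^* b_{i,j}) = \varphi^* \mathcal{D}_k(b_{i,j})$ as the source, and then verify that the resulting objects are polynomial vector fields (not merely rational ones), that they are projectable for $\rho$ with the claimed pushforwards, and that their commutation relations come from those of Theorem \ref{tg3} by $\varphi^*$-pullback. Since $\varphi^* b_{1,j} = \wp_{1,j}$, $\varphi^* b_{2,j} = \wp_{1,1,j}$, $\varphi^* b_{3,j} = \wp_{1,1,1,j}$, and Theorem \ref{t31} together with Corollary \ref{c1} and Theorem \ref{thm3} express every $\wp_{k_1,\ldots,k_n}$ and every $\lambda_s$ polynomially in these generators, the pullback $\varphi^*$ embeds $\mathbb{C}[b_{i,j}]$ into $\mathcal{F}$, so the $\mathcal{D}_k(b_{i,j})$ are uniquely determined (as rational expressions) by the chosen $\mathcal{L}_k$; the task is to show they are in fact polynomial.

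First I would dispose of $\mathcal{D}_0$, $\mathcal{D}_1$, $\mathcal{D}_3$, $\mathcal{D}_5$: these come directly from Lemmas \ref{ld1}, \ref{ld2}, \ref{ld0} specialised to $g=3$, with projectability and the zero or Euler pushforwards given by Corollaries \ref{cords} and \ref{cor0}. Next, for $\mathcal{D}_2$ and $\mathcal{D}_4$, I would apply the explicit $\mathcal{L}_2, \mathcal{L}_4$ from Theorem \ref{tg3} to $\wp_{1,j}$ for $j \in \{1,3,5\}$, rewrite the mixed second derivatives of $\ln\sigma$ that appear using Theorem \ref{t31}, and then substitute the $(b_{i,j})$-coordinates via the inversion of Corollary \ref{c1}. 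The six displayed conditions on $\mathcal{D}_2(b_{1,j})$ and $\mathcal{D}_4(b_{1,j})$ should fall out, with the coefficients $\tfrac{12}{7}, \tfrac{4}{7}, \tfrac{8}{7}, \tfrac{12}{7}, \tfrac{4}{7}, \tfrac{6}{7}$ tracking the $\lambda$-contributions of $L_2, L_4$ and of the $\partial_{2s-1}$-corrections in $\mathcal{L}_2, \mathcal{L}_4$. The relations \eqref{rel3} are then nothing but the $\varphi^*$-pullback of $[\mathcal{L}_1,\mathcal{L}_2]$ and $[\mathcal{L}_1,\mathcal{L}_4]$ from Theorem \ref{tg3}; reading them coordinate by coordinate,
\[
\mathcal{D}_k(b_{2,j}) = \mathcal{D}_1(\mathcal{D}_k(b_{1,j})) - [\mathcal{D}_1,\mathcal{D}_k](b_{1,j}), \quad \mathcal{D}_k(b_{3,j}) = \mathcal{D}_1(\mathcal{D}_k(b_{2,j})) - [\mathcal{D}_1,\mathcal{D}_k](b_{2,j}),
\]
and since $\mathcal{D}_1$ preserves $\mathbb{C}[b_{i,j}]$ by Lemma \ref{ld1}, polynomiality of $\mathcal{D}_2, \mathcal{D}_4$ on all generators follows by induction.

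For $\mathcal{D}_6, \mathcal{D}_8, \mathcal{D}_{10}$ I would invoke the $g=3$ specialisation \eqref{3ex} of Lemma \ref{lcijk}, which writes $L_{2k+2}$ as a $\mathcal{P}$-linear combination of $[L_2,L_{2k}]$, $L_0$ and $L_{2k-2}$. Replacing each $L_{2s}$ by $\mathcal{L}_{2s}$ introduces correction terms of the form $\zeta \cdot \partial$ and $t \cdot \partial$; at the $\mathcal{D}$-level these correspond exactly to the $\tfrac12 \mathcal{D}_1(p_{3,3}), b_{2,j}$ and $\mathcal{D}_1(p_{3,5})$ contributions visible in \eqref{this}, which arise when one substitutes the formula \eqref{L2k} for $[\mathcal{L}_2, \mathcal{L}_{2k}]$. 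Polynomiality of $\mathcal{D}_6, \mathcal{D}_8, \mathcal{D}_{10}$ then follows from polynomiality of $\mathcal{D}_2, \mathcal{D}_4$, and projectability with pushforwards $L_6, L_8, L_{10}$ is obtained by projecting both sides of \eqref{this} to $\mathbb{C}^{2g}$ and using the already-established pushforwards of $\mathcal{D}_0, \mathcal{D}_2, \mathcal{D}_4$ together with the vanishing pushforwards of $\mathcal{D}_1, \mathcal{D}_3, \mathcal{D}_5$.

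It remains to transfer the commutation relations from Theorem \ref{tg3} via $\varphi^* p_{i,j} = 2\wp_{i,j}$, $\varphi^* q_{i,j,k} = 2\wp_{i,j,k}$; since $\varphi^*$ is injective on polynomial expressions, every identity in $\Der\mathcal{F}$ pulls back to an identity between polynomial vector fields on $\mathbb{C}^{3g}$. Independence of $\mathcal{D}_0, \mathcal{D}_1, \ldots, \mathcal{D}_{10}$ at each point of $\rho^{-1}(\mathcal{B})$ is inherited from the independence of $\mathcal{L}_0, \mathcal{L}_1, \ldots, \mathcal{L}_{10}$ on the fibres of $\pi$ (provable either by weight-grading and a Wronskian argument, or by using the birational isomorphism $\varphi$ on $\rho^{-1}(\mathcal{B})$). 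The main obstacle I anticipate is the explicit computation of $\mathcal{L}_2(\wp_{1,j})$ and $\mathcal{L}_4(\wp_{1,j})$ for $j\in\{1,3,5\}$: this requires tight bookkeeping with Theorem \ref{t31} to make sure that after eliminating all $\wp$-monomials of the form $\wp_{k,l}$ with $k,l\ge 3$ and all $\wp_{1,\ldots,1,j}$ of length $\ge 4$, one is left precisely with the displayed polynomials in $b_{i,j}$ and $\lambda_s$ and no residual $\wp$-terms.
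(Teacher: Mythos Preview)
Your proposal is correct and follows essentially the same approach as the paper's proof, which is largely a citation to Section~9 of~\cite{B3} together with the remarks that \eqref{rel3} determines $\mathcal{D}_2(b_{i,j}),\mathcal{D}_4(b_{i,j})$ for $i\in\{2,3\}$, that the defining relation \eqref{keyrel} holds, and that one should compare \eqref{3ex}, \eqref{L2k}, \eqref{this}. Your reconstruction --- deriving the $\mathcal{D}_k$ from the $\mathcal{L}_k$ via \eqref{keyrel}, using the commutator formula $[\mathcal{D}_1,\mathcal{D}_k]$ to propagate from $b_{1,j}$ to $b_{2,j},b_{3,j}$, and obtaining $\mathcal{D}_6,\mathcal{D}_8,\mathcal{D}_{10}$ by solving \eqref{L2k} for the highest-weight term using \eqref{3ex} --- is exactly what those hints encode, and is also the template the paper follows explicitly in the genus~$4$ proof of Theorem~\ref{td4}.
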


\begin{proof}
This Theorem follows from Section 9 in~\cite{B3}. Note that the relations \eqref{rel3} determine the polynomials $\mathcal{D}_k(b_{i,j})$ for $k \in \{2,4\}$, $i \in \{2,3\}$, $j \in \{1,3,5\}$.
For $\mathcal{L}_k$ given in Theorem~\ref{tg3} the relation \eqref{keyrel} holds.
Cf. eq. \eqref{3ex}, \eqref{L2k}, and \eqref{this}.
\end{proof}

The polynomial dynamical systems in $\mathbb{C}^9$ determined
by the vector fields $\mathcal{D}_k$ are presented in Section 8 of \cite{BB21}.

\vfill

\eject

\section{Lie algebra of derivations of hyperelliptic functions: genus~$4$ case} \label{S4L}

\begin{ex}\label{ex61} In the case of genus $g=4$ Lemma~\ref{lcijk} implies
\begin{align}\label{C2}
\begin{pmatrix}
[L_2, L_4]\\
[L_2, L_6]\\
[L_2, L_8]\\
[L_2, L_{10}]\\
[L_2, L_{12}]\\
[L_2, L_{14}]
\end{pmatrix} &= 
 {1 \over 9} \begin{pmatrix}
24 \lambda_6 & - 24 \lambda_4 & 0 & 18 & 0 & 0 & 0 & 0\\
20 \lambda_8 & 0 & - 20 \lambda_4 & 0 & 36 & 0 & 0 & 0\\
16 \lambda_{10} & 0 & 0 & - 16 \lambda_4 & 0 & 54 & 0 & 0\\
12 \lambda_{12} & 0 & 0 & 0 & - 12 \lambda_4 & 0 & 72 & 0\\
8 \lambda_{14} & 0 & 0 & 0 & 0 & - 8 \lambda_4 & 0 & 90\\ 
4 \lambda_{10}& 0 & 0 & 0 & 0 & 0 & - 4 \lambda_4 & 0
\end{pmatrix}
\widetilde{L},
\end{align}
where $\widetilde{L} = \begin{pmatrix}
L_0&
L_2&
L_4&
L_6&
L_8&
L_{10}&
L_{12}&
L_{14}
\end{pmatrix}^\top.$ 
\end{ex}
We denote the matrix $(c_{2,2j}^{2k}(\lambda))$ by
 $\mathcal{C}_2(\lambda)$. The right hand side of \eqref{C2} is~$\mathcal{C}_2(\lambda)\widetilde{L}$. 
 
\begin{thm}[cf. Corollary 6.2 in \cite{BB20}] 
\label{tg4} In the case of genus $g=4$ the following vector fields give a~solution to Problem \ref{p1}
\begin{align*}
\mathcal{L}_1 &= \partial_1, \qquad \mathcal{L}_3 = \partial_3,
\qquad \mathcal{L}_5 = \partial_5,
\qquad \mathcal{L}_7 = \partial_7,\\
\mathcal{L}_0 &= L_0
- t_1 \partial_1 - 3 t_3 \partial_3 - 5 t_5 \partial_5 - 7 t_7 \partial_7,
\\
\mathcal{L}_2 &= L_2 
- \zeta_1 \partial_1 + {4 \over 3} \lambda_4 t_{3} \partial_1 - \left(t_1 - {8 \over 9} \lambda_4 t_{5}\right) \partial_3 - \left(3 t_3 - {4 \over 9} \lambda_4 t_{7}\right) \partial_5 - 5 t_5 \partial_7,
\\
\mathcal{L}_4 &= L_4
- \zeta_3 \partial_1 - \zeta_1 \partial_3 + \\ & \quad
+ 2 \lambda_6 t_3 \partial_1
- \left( \lambda_4 t_3
- {4 \over 3} \lambda_6 t_5 \right) \partial_3 
- \left( t_1 + 3 \lambda_4 t_5 
- {2 \over 3} \lambda_6 t_7\right) \partial_5
- \left( 3 t_3 + 5 \lambda_4 t_7 \right) \partial_7,
\\
\mathcal{L}_6 &= L_6
- \zeta_5 \partial_1 - \zeta_3 \partial_3 - \zeta_1 \partial_5 + \\ & \quad
+ {5 \over 3} \lambda_8 t_3 \partial_1 + {16 \over 9} \lambda_8 t_5 \partial_3 - \left( \lambda_4 t_3 + 2 \lambda_6 t_5 -{8 \over 9} \lambda_8 t_7 \right) \partial_5 - \left(t_1 +3 \lambda_4 t_5 + 4 \lambda_6 t_7\right) \partial_7,
\\
\mathcal{L}_8 &= L_8
- \zeta_7 \partial_1 - \zeta_5 \partial_3 - \zeta_3 \partial_5 - \zeta_1 \partial_7 + \left({4 \over 3} \lambda_{10} t_3 - \lambda_{12} t_5\right) \partial_1 + \\ & \quad + {20 \over 9} \lambda_{10} t_5 \partial_3 - \left(\lambda_8 t_5 - {10 \over 9} \lambda_{10} t_7\right) \partial_5 - \left(\lambda_4 t_3 + 2 \lambda_6 t_5 + 3 \lambda_8 t_7\right) \partial_7,
\\
\mathcal{L}_{10} &= L_{10} 
- \zeta_7 \partial_3 - \zeta_5 \partial_5 - \zeta_3 \partial_7  + \\ & \quad
+ (\lambda_{12} t_3 - 2 \lambda_{14} t_5 - \lambda_{16} t_7) \partial_1 + {5 \over 3} \lambda_{12} t_5 \partial_3 + {4 \over 3} \lambda_{12} t_7 \partial_5 - (\lambda_8 t_5 + 2 \lambda_{10} t_7) \partial_7,
\\
\mathcal{L}_{12} &= L_{12}
- \zeta_7 \partial_5 - \zeta_5 \partial_7 +\\ & \quad  + \left({2 \over 3} \lambda_{14} t_3 - 3 \lambda_{16} t_5 - 2 \lambda_{18} t_7\right) \partial_1 + \left({10 \over 9} \lambda_{14} t_5 - \lambda_{16} t_7 \right) \partial_3 + {14 \over 9} \lambda_{14} t_7 \partial_5 - \lambda_{12} t_7 \partial_7,
\\
\mathcal{L}_{14} &= L_{14}
- \zeta_7 \partial_7 + \left({1 \over 3} \lambda_{16} t_3 - 4 \lambda_{18} t_5\right) \partial_1 + \left({5 \over 9} \lambda_{16} t_5 - 2 \lambda_{18} t_7\right) \partial_3 + {7 \over 9} \lambda_{16} t_7 \partial_5.
\end{align*}
We denote this Lie algebra by $\mathscr{L}$. The commutation relations are given in Lemma \ref{lem01}, Lemma \ref{l23}, Lemma \ref{l24}, and Corollary \ref{corlast}.
\end{thm}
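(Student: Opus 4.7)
My plan is to apply the general construction of generators of $\Der \mathcal{F}$ from Corollary 6.2 of \cite{BB20} to the case $g=4$ and to verify that the resulting vector fields take the explicit form stated. I will split the argument into three tasks: (i) showing that the four odd-index partials $\mathcal{L}_{2k-1} = \partial_{2k-1}$ lie in $\Der \mathcal{F}$, which is immediate from Remark \ref{rem45}; (ii) showing that the eight even-index vector fields $\mathcal{L}_0, \mathcal{L}_2, \ldots, \mathcal{L}_{14}$ preserve $\mathcal{F}$; and (iii) showing that these twelve vector fields are independent over $\mathcal{F}$ and hence generate the $\mathcal{F}$-module $\Der \mathcal{F}$.

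For (ii), the key observation will be that each $\mathcal{L}_{2k}$ is built from $L_{2k}$ (which acts only on the base parameters $\lambda$) by adding correction terms of two types in the fiber variables $t_{2j-1}$: the terms $-\zeta_{2j-1} \partial_{2(k-j)+1}$, dictated by the hyperelliptic heat equations for $\sigma$, and the polynomial-in-$t$ terms with $\lambda$-coefficients, dictated by the matrix entries $v_{2k+2, 2s-2}(\lambda)$ of \eqref{ev}. The $\zeta$-corrections will ensure that $\mathcal{L}_{2k}(\ln \sigma)$ transforms under the period lattice $\Gamma_\lambda$ in the same way as $\ln \sigma$, so that $\mathcal{L}_{2k}$ sends each $\wp_{k_1, \ldots, k_n} = -\partial_{k_1} \cdots \partial_{k_n} \ln \sigma$ to another hyperelliptic function; the polynomial-in-$t$ corrections will make $\mathcal{L}_{2k}$ homogeneous of weight $2k$ and compensate for the nontrivial $\wt \lambda_j = j$ grading.

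For (iii), I will check independence at a generic point of $\mathcal{U}$ by inspecting the leading coefficients of $\partial / \partial t_{2j-1}$ and $\partial / \partial \lambda_{2k}$: these form a block-triangular matrix whose determinant is nonzero on a dense open subset of $\mathcal{U}$. Since $\mathcal{U}$ has complex dimension $3g = 12$ and $\mathcal{F}$ is the field of meromorphic functions on $\mathcal{U}$, the $\mathcal{F}$-module $\Der \mathcal{F}$ is free of rank $12$, so twelve independent derivations suffice as a generating set.

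The hard part will be matching the explicit numerical coefficients in the $t$-corrections (such as $4/3$, $8/9$, $4/9$, $16/9$, $5/3$, and so on). These are dictated by the $g=4$ specialization of \eqref{ev} combined with Lemma \ref{lcijk}, and they must be computed term by term. Rather than rederiving them from scratch, I will extract them from \cite[Corollary~6.2]{BB20}, which expresses $\mathcal{L}_{2k}$ as a combination of $L_{2k}$, the logarithmic sigma-derivatives $\zeta_{2j-1}$, and entries of $V(\lambda)$. Substituting the $g=4$ entries and collecting terms of matching weight will yield the stated formulas for $\mathcal{L}_2, \ldots, \mathcal{L}_{14}$. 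The commutation relations forming part (2) of Problem \ref{p1} are deferred to Lemmas \ref{lem01}, \ref{l23}, \ref{l24}, and Corollary \ref{corlast}, and will not be treated here.
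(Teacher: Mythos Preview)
Your proposal is correct and follows essentially the same approach as the paper: the theorem is stated with a direct reference to Corollary~6.2 of \cite{BB20}, and the paper offers no additional proof beyond that citation, deferring the commutation relations to Lemmas~\ref{lem01}, \ref{l23}, \ref{l24}, and Corollary~\ref{corlast} exactly as you propose. Your sketch of steps (ii) and (iii) is more detailed than anything the paper itself provides, but it is consonant with the intended argument.
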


\begin{lem} \label{lem01}
For the commutators in the Lie algebra $\mathscr{L}$ we have the relations:
\begin{align*}
[\mathcal{L}_0, \mathcal{L}_{k}] &= k \mathcal{L}_{k}, & &k = 0,1,2,3,4,5,6,7,8,10,12,14;\\
[\mathcal{L}_{k}, \mathcal{L}_{m}] &= 0, & &k,m = 1,3,5,7.
\end{align*}
\end{lem}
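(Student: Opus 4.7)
The plan is to split the two families of commutators and dispatch each by an elementary argument. For the second family, $[\mathcal{L}_k,\mathcal{L}_m]=0$ with $k,m\in\{1,3,5,7\}$, there is nothing to prove beyond recalling from Theorem \ref{tg4} that $\mathcal{L}_{2s-1}=\partial_{2s-1}$ for $s=1,2,3,4$: these are four constant-coefficient partial derivatives in pairwise distinct independent coordinates $t_1,t_3,t_5,t_7$ on $\mathbb{C}^g\times\mathcal{B}$, and therefore commute pairwise (including with themselves).

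For the first family, $[\mathcal{L}_0,\mathcal{L}_k]=k\mathcal{L}_k$, I would first observe that $\mathcal{L}_0$ is precisely the Euler vector field for the weight grading on $\mathbb{C}^g\times\mathcal{B}$. Indeed, expanding $L_0=\sum_{k=2}^{9} 2k\,\lambda_{2k}\,\partial/\partial\lambda_{2k}$ and combining with the $t$-part of the formula gives exactly the Euler field for the weights $\wt\lambda_{2k}=2k$ and $\wt t_{2s-1}=-(2s-1)$. With this grading one has $\wt \partial_{2s-1}=2s-1$, and, since $\sigma(t,\lambda)$ is weight-homogeneous, $\wt\zeta_i=i$ and $\wt \wp_{i_1,\ldots,i_n}=i_1+\cdots+i_n$. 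A term-by-term inspection of the formulas in Theorem \ref{tg4} then shows that each summand of $\mathcal{L}_k$ has total weight $k$; for instance the term $\lambda_4 t_7\partial_5$ in $\mathcal{L}_2$ has weight $4-7+5=2$, and the term $\zeta_7\partial_3$ in $\mathcal{L}_{10}$ has weight $7+3=10$. Consequently each $\mathcal{L}_k$ is a weight-homogeneous vector field of degree $k$.

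The conclusion is then immediate from the standard identity $[\mathcal{E},X]=w\,X$ valid for any vector field $X$ that is homogeneous of weight $w$ under an Euler field $\mathcal{E}$: a one-line Leibniz computation gives $[\mathcal{E},f\partial_i]=(\wt f+\wt \partial_i)\,f\partial_i$ for a homogeneous coefficient $f$. Applying this with $\mathcal{E}=\mathcal{L}_0$ to each summand of $\mathcal{L}_k$ yields $[\mathcal{L}_0,\mathcal{L}_k]=k\mathcal{L}_k$ for every $k\in\{0,1,2,3,4,5,6,7,8,10,12,14\}$. The only mild obstacle in this plan is justifying the claimed weights of the transcendental coefficients $\zeta_i$ under the combined $(t,\lambda)$-grading used to define $\mathcal{L}_0$; this reduces to the classical homogeneity of the hyperelliptic sigma function $\sigma(t,\lambda)$, the very input that underwrites the paper's standing convention that all formulas have homogeneous weight. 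Once that is accepted, the whole lemma becomes bookkeeping.
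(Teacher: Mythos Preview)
Your proposal is correct and follows essentially the same approach as the paper: both arguments reduce the first family to the fact that $\mathcal{L}_0$ is the Euler vector field (the paper records this as $\mathcal{L}_0\zeta_k=k\zeta_k$, which is exactly your homogeneity claim for the transcendental coefficients), and both dispatch the second family from the explicit form $\mathcal{L}_{2s-1}=\partial_{2s-1}$. Your write-up is simply a more detailed version of the paper's one-line proof.
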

\begin{proof}
We use the explicit expressions for $\mathcal{L}_k$ and the fact that $\mathcal{L}_0$ is the Euler vector field, thus $\mathcal{L}_0 \zeta_k = k \zeta_k$.
\end{proof}

\vfill
 
\begin{lem} \label{l23}
For the commutators in the Lie algebra $\mathscr{L}$ we have the relations:
\begin{align*}
\begin{pmatrix}
[\mathcal{L}_1, \mathcal{L}_2]\\
[\mathcal{L}_1, \mathcal{L}_4]\\
[\mathcal{L}_1, \mathcal{L}_6]\\
[\mathcal{L}_1, \mathcal{L}_8]\\
[\mathcal{L}_1, \mathcal{L}_{10}]\\ 
[\mathcal{L}_1, \mathcal{L}_{12}]\\
[\mathcal{L}_1, \mathcal{L}_{14}] 
\end{pmatrix}
&=
\begin{pmatrix}
\wp_{1,1} & - 1 & 0 & 0\\
\wp_{1,3} &\wp_{1,1} & - 1 & 0\\
\wp_{1,5} & \wp_{1,3} &\wp_{1,1} & - 1\\
\wp_{1,7} & \wp_{1,5} & \wp_{1,3} &\wp_{1,1}\\
0 & \wp_{1,7} & \wp_{1,5} & \wp_{1,3}\\
0 & 0 & \wp_{1,7} & \wp_{1,5}\\
0 & 0 & 0 & \wp_{1,7}\\
\end{pmatrix}
\begin{pmatrix}
\mathcal{L}_1\\
\mathcal{L}_3\\
\mathcal{L}_5\\
\mathcal{L}_7
\end{pmatrix},
\\
\begin{pmatrix}
[\mathcal{L}_3, \mathcal{L}_2]\\
[\mathcal{L}_3, \mathcal{L}_4]\\
[\mathcal{L}_3, \mathcal{L}_6]\\
[\mathcal{L}_3, \mathcal{L}_8]\\
[\mathcal{L}_3, \mathcal{L}_{10}]\\ 
[\mathcal{L}_3, \mathcal{L}_{12}]\\
[\mathcal{L}_3, \mathcal{L}_{14}] 
\end{pmatrix}
&=
\begin{pmatrix}
\wp_{1,3} - \lambda_4 & 0 & - 3 & 0\\
\wp_{3,3} & \wp_{1,3} - \lambda_4 & 0 & - 3\\
\wp_{3,5} & \wp_{3,3} & \wp_{1,3} - \lambda_4 & 0\\
\wp_{3,7} & \wp_{3,5} & \wp_{3,3} & \wp_{1,3} - \lambda_4\\
0 & \wp_{3,7} & \wp_{3,5} &  \wp_{3,3}\\
0 & 0 & \wp_{3,7} & \wp_{3,5}\\
0 & 0 & 0 & \wp_{3,7}
\end{pmatrix}
\begin{pmatrix}
\mathcal{L}_1\\
\mathcal{L}_3\\
\mathcal{L}_5\\
\mathcal{L}_7
\end{pmatrix}
+ {1 \over 3} 
\begin{pmatrix}
 7 \lambda_4\\
 6 \lambda_6\\
 5 \lambda_8\\
 4 \lambda_{10}\\
 3 \lambda_{12}\\
 2 \lambda_{14}\\
 \lambda_{16}\\
\end{pmatrix}
\mathcal{L}_1,
\\
\begin{pmatrix}
[\mathcal{L}_5, \mathcal{L}_2]\\
[\mathcal{L}_5, \mathcal{L}_4]\\
[\mathcal{L}_5, \mathcal{L}_6]\\
[\mathcal{L}_5, \mathcal{L}_8]\\
[\mathcal{L}_5, \mathcal{L}_{10}]\\ 
[\mathcal{L}_5, \mathcal{L}_{12}]\\
[\mathcal{L}_5, \mathcal{L}_{14}] 
\end{pmatrix}
&=
\begin{pmatrix}
\wp_{1,5} & 0 & 0 & - 5\\
\wp_{3,5} &\wp_{1,5} & - 3 \lambda_4 & 0\\
\wp_{5,5} & \wp_{3,5} & \wp_{1,5} - 2 \lambda_6 & - 3 \lambda_4\\
\wp_{5,7} - \lambda_{12} &  \wp_{5,5} & \wp_{3,5} - \lambda_{8} &\wp_{1,5} - 2 \lambda_6\\
- 2 \lambda_{14} & \wp_{5,7} - \lambda_{12} & \wp_{5,5} & \wp_{3,5} - \lambda_{8}\\
- 3 \lambda_{16} & - 2 \lambda_{14} & \wp_{5,7} &  \wp_{5,5}\\
- 4 \lambda_{18} & - 3 \lambda_{16} & 0 & \wp_{5,7} \\
\end{pmatrix}
\begin{pmatrix}
\mathcal{L}_1\\
\mathcal{L}_3\\
\mathcal{L}_5\\
\mathcal{L}_7
\end{pmatrix}
 + {4 \over 9} 
\begin{pmatrix}
 2 \lambda_4\\
 3 \lambda_6\\
 4 \lambda_8\\
 5 \lambda_{10}\\
 6 \lambda_{12}\\
 7 \lambda_{14}\\
 8 \lambda_{16}\\
\end{pmatrix}
\mathcal{L}_3,\\
\begin{pmatrix} 
[\mathcal{L}_7, \mathcal{L}_2]\\
[\mathcal{L}_7, \mathcal{L}_4]\\
[\mathcal{L}_7, \mathcal{L}_6]\\
[\mathcal{L}_7, \mathcal{L}_8]\\
[\mathcal{L}_7, \mathcal{L}_{10}]\\ 
[\mathcal{L}_7, \mathcal{L}_{12}]\\
[\mathcal{L}_7, \mathcal{L}_{14}] 
\end{pmatrix}
&=
\begin{pmatrix}
\wp_{1,7} & 0 & 0 & 0\\
\wp_{3,7} &\wp_{1,7} &  0 & - 5 \lambda_4\\
\wp_{5,7} & \wp_{3,7} & \wp_{1,7} & - 4 \lambda_6 \\
\wp_{7,7} & \wp_{5,7} &  \wp_{3,7} &\wp_{1,7} - 3 \lambda_8\\
- \lambda_{16} & \wp_{7,7}&  \wp_{5,7} & \wp_{3,7} - 2 \lambda_{10}\\
- 2 \lambda_{18} & -\lambda_{16} & \wp_{7,7} &  \wp_{5,7} - \lambda_{12}\\
0 & - 2 \lambda_{18} & - \lambda_{16} & \wp_{7,7} \\
\end{pmatrix}
\begin{pmatrix}
\mathcal{L}_1\\
\mathcal{L}_3\\
\mathcal{L}_5\\
\mathcal{L}_7
\end{pmatrix}
 + {2 \over 9} 
\begin{pmatrix}
 2 \lambda_4\\
 3 \lambda_6\\
 4 \lambda_8\\
 5 \lambda_{10}\\
 6 \lambda_{12}\\
 7 \lambda_{14}\\
 8 \lambda_{16}\\
\end{pmatrix}
\mathcal{L}_5.
\end{align*}
\end{lem}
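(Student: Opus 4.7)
The plan is to exploit the fact that $\mathcal{L}_{2k-1} = \partial_{2k-1}$ has constant coefficients. So for any first-order operator $\mathcal{L}_{2m} = \sum_\alpha f_\alpha(t,\lambda) \partial_\alpha$ one has $[\partial_{2k-1}, \mathcal{L}_{2m}] = \sum_\alpha (\partial_{2k-1} f_\alpha) \partial_\alpha$, and the verification reduces to differentiating each coefficient of $\mathcal{L}_{2m}$, as written in Theorem \ref{tg4}, with respect to $t_{2k-1}$ and collecting the result as a combination of $\mathcal{L}_1, \mathcal{L}_3, \mathcal{L}_5, \mathcal{L}_7$.

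First, I would observe that the summand $L_{2m}$ in $\mathcal{L}_{2m}$ is a vector field in the $\lambda$-variables with coefficients polynomial in $\lambda$ alone (see \eqref{ev}); hence $\partial_{2k-1}$ annihilates them and $[\partial_{2k-1}, L_{2m}] = 0$. Among the remaining terms of $\mathcal{L}_{2m}$, each coefficient of the form $\zeta_j(t,\lambda)$ contributes $\partial_{2k-1} \zeta_j = -\wp_{2k-1, j}$, producing exactly the $\wp_{2k-1, 2s-1}$ entries in the four stated matrices. Each polynomial-in-$t$-and-$\lambda$ coefficient differentiates elementarily to a scalar multiple of a $\lambda$-monomial, contributing either to the matrix (the $-\lambda_4$ diagonal shifts in the $\mathcal{L}_3$ row and the off-diagonal $-a\lambda_{2i}$ entries in the $\mathcal{L}_5$ and $\mathcal{L}_7$ rows) or to the additive term on the right-hand side (the rational multiples of $\lambda_{2i}\mathcal{L}_{2k-1}$).

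I would then proceed row by row. For $k=1$ only $\partial_1 \zeta_j = -\wp_{1,j}$ and $\partial_1 t_1 = 1$ contribute, which immediately reproduces the first matrix. For $k=3, 5, 7$ the contributions of the $\lambda$-polynomial-in-$t$ terms in $\mathcal{L}_{2m}$ (for example $\tfrac{4}{3} \lambda_4 t_3$, $\tfrac{8}{9} \lambda_4 t_5$, $\tfrac{4}{9} \lambda_4 t_7$ appearing in $\mathcal{L}_2$) must be combined with the intrinsic shifts $-\lambda_4$, $-2 \lambda_6$, $-3 \lambda_8$, \ldots\ already present on the $\wp_{1, 2j-1}$ diagonal; their sum produces precisely the rational coefficients $\tfrac{7}{3}, \tfrac{4}{9}, \tfrac{2}{9}$ on the added columns. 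Weight-homogeneity of every identity (both sides must have weight $2m + 2k - 1$) provides a useful sanity check at every step.

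The main obstacle is purely bookkeeping: the $4 \times 7 = 28$ commutators have to be matched entry by entry against the four asymmetric matrices, which is especially delicate for $m \in \{5, 6, 7\}$ where $\mathcal{L}_{10}, \mathcal{L}_{12}, \mathcal{L}_{14}$ involve several $\lambda$-dependent $t$-coefficients that feed into the $-\lambda_{2i}$ entries in the lower-left corners. No new identity among the $\wp_{i,j}$ or the $\lambda_{2i}$ is required; the computation is a finite, direct verification from the explicit formulas in Theorem \ref{tg4} using only $\partial_{2k-1} \zeta_j = -\wp_{2k-1, j}$ together with the linearity of differentiation.
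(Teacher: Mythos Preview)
Your proposal is correct and is exactly the approach the paper takes: the paper's proof consists of the single line ``The proof follows from the explicit expressions for $\mathcal{L}_k$,'' and your write-up simply spells out what that direct computation entails (using $[\partial_{2k-1},L_{2m}]=0$, $\partial_{2k-1}\zeta_j=-\wp_{2k-1,j}$, and elementary differentiation of the $t$-polynomial coefficients).
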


\textit{The proof} follows from the explicit expressions for $\mathcal{L}_k$.

\begin{lem} \label{l24}
For the commutators in the Lie algebra $\mathscr{L}$ we have the relations
\begin{align*}
\begin{pmatrix}
[\mathcal{L}_2, \mathcal{L}_4]\\
[\mathcal{L}_2, \mathcal{L}_6]\\
[\mathcal{L}_2, \mathcal{L}_8]\\
[\mathcal{L}_2, \mathcal{L}_{10}]\\ 
[\mathcal{L}_2, \mathcal{L}_{12}]\\
[\mathcal{L}_2, \mathcal{L}_{14}] 
\end{pmatrix} & =
 \mathcal{C}_2(\lambda)
\begin{pmatrix}
\mathcal{L}_0\\
\mathcal{L}_2\\
\mathcal{L}_4\\
\mathcal{L}_6\\
\mathcal{L}_8\\
\mathcal{L}_{10}\\ 
\mathcal{L}_{12}\\
\mathcal{L}_{14}
\end{pmatrix} + {1 \over 2}
\begin{pmatrix}
- \wp_{1,1,3} & \wp_{1,1,1} & 0 & 0 \\
- \wp_{1,3,3} - \wp_{1,1,5} & \wp_{1,1,3} & \wp_{1,1,1} & 0 \\
- 2 \wp_{1,3,5} - \wp_{1,1,7} & \wp_{1,1,5} & \wp_{1,1,3} & \wp_{1,1,1} \\
- 2 \wp_{1,3,7} - \wp_{1,5,5} & \wp_{1,1,7} & \wp_{1,1,5} & \wp_{1,1,3} \\
- 2 \wp_{1,5,7} & 0 & \wp_{1,1,7} & \wp_{1,1,5} \\
- \wp_{1,7,7} & 0 & 0 & \wp_{1,1,7} 
\end{pmatrix}
\begin{pmatrix}
\mathcal{L}_1\\
\mathcal{L}_3\\
\mathcal{L}_5\\
\mathcal{L}_7
\end{pmatrix},
\end{align*}
where the polynomial matrix $\mathcal{C}_2(\lambda) = (c_{2,2j}^{2k}(\lambda))$ is given in \eqref{C2}.
\end{lem}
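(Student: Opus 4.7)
The plan is to decompose the commutator into a horizontal (projectable) part and a vertical part via the fibration $\pi\colon \mathcal{U}\to\mathcal{B}$, following the same scheme used for the analogous genus~$2$ and~$3$ statements. By Remark \ref{rem45} we have $\pi_*\mathcal{L}_{2m-1}=0$, and all correction terms in the formulas of Theorem \ref{tg4} are first-order in the $\partial_j$'s, so they annihilate pure $\lambda$-functions; hence $\pi_*\mathcal{L}_{2k}=L_{2k}$. Pushing forward yields $\pi_*[\mathcal{L}_2,\mathcal{L}_{2j}]=[L_2,L_{2j}]$, which by Example \ref{ex61} equals the $j$-th row of $\mathcal{C}_2(\lambda)\widetilde{L}$. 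Therefore the residue
\[
R_j \;:=\; [\mathcal{L}_2,\mathcal{L}_{2j}] \;-\; \bigl(\mathcal{C}_2(\lambda)\widetilde{\mathcal{L}}\bigr)_j
\]
is a vertical derivation, and consequently $R_j=\sum_{m=1}^{4} a_{2m-1}^{(j)}\mathcal{L}_{2m-1}$ for some $a_{2m-1}^{(j)}\in\mathcal{F}$, since $\mathcal{L}_1,\mathcal{L}_3,\mathcal{L}_5,\mathcal{L}_7$ form an $\mathcal{F}$-basis of the vertical derivations.

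To identify $a_{2m-1}^{(j)}$ I would apply $R_j$ to $t_{2m-1}$; because $\mathcal{L}_{2r-1}(t_{2m-1})=\delta_{m,r}$ this gives $a_{2m-1}^{(j)}=R_j(t_{2m-1})$. Reading $\mathcal{L}_{2k}(t_{2m-1})$ off as the coefficient of $\partial_{2m-1}$ in $\mathcal{L}_{2k}$ (an explicit combination of $-\zeta_{\cdot}$ and polynomials in $(t,\lambda)$ given in Theorem \ref{tg4}), I would expand $[\mathcal{L}_2,\mathcal{L}_{2j}](t_{2m-1})$ by the Leibniz rule and convert every $\partial_{2l-1}\zeta_k$ into $-\wp_{2l-1,k}$ and every $\partial_{2l-1}\wp_{k,n}$ into $-\wp_{2l-1,k,n}$. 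The pure-$t$ polynomial contributions and the bilinear $\zeta\cdot\wp$ cross-terms must cancel, because $a_{2m-1}^{(j)}\in\mathcal{F}$ carries no explicit $t$ or $\zeta$ dependence; what survives organizes itself into a linear combination of triple-index $\wp_{\cdot,\cdot,\cdot}$ terms, which should match the $\tfrac12$-matrix in the statement entry by entry.

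The main obstacle is precisely this cancellation, which rests on explicit formulas for $L_{2k}(\zeta_m)$ expressing the action of the $\lambda$-vector field $L_{2k}$ on the logarithmic derivatives of~$\sigma$ as polynomials in $t,\lambda,\zeta,\wp$. These are the key identities from \cite{BB20} that motivated the precise shape of the corrections in Theorem \ref{tg4}: by construction, those corrections are calibrated so that every non-hyperelliptic contamination in $R_j(t_{2m-1})$ is killed. The surviving index pattern of the surviving $\wp$'s is determined mechanically by which $t_i\partial_{i'}$-corrections appear in $\mathcal{L}_2$ and in $\mathcal{L}_{2j}$, and the overall factor $\tfrac12$ arises from the symmetry $\partial_i\partial_j\ln\sigma=\partial_j\partial_i\ln\sigma$ combined with the antisymmetry of the commutator. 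The verification must be carried out for each of the six rows $j\in\{2,3,4,5,6,7\}$, but the calculations are tightly constrained by homogeneity in the weight grading, so many entries (including every zero entry in the displayed matrix) are forced a priori, and the remaining combinatorial bookkeeping reduces to a finite check.
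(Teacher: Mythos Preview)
Your proposal is correct and takes essentially the same approach as the paper: the paper's proof also reduces to a direct computation from the explicit expressions for $\mathcal{L}_k$ in Theorem~\ref{tg4}, with the single nontrivial input being the formulas for $\mathcal{L}_{2k}\zeta_s$ drawn from \cite{BB20} (the same identities you invoke as ``$L_{2k}(\zeta_m)$''). Your horizontal/vertical decomposition via $\pi_*$ is a clean way to organize what the paper leaves as a bare computation, but it is the same argument.
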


\begin{proof}
From Theorem 6.3 in \cite{BB20} we obtain the expressions for $\mathcal{L}_{2k} \zeta_s$, where~$s = 1,3,5,7$ and $k = 1,2,3,4,5,6,7$.
Provided this, the proof follows from the explicit expressions for~$\mathcal{L}_k$.
\end{proof}

\section{Generators in the polynomial Lie algebra in $\mathbb{C}^{12}$} \label{s10}

In the case of genus $g=4$ the expressions for the coordinates $(\lambda)$ and $(p)$ in $b_{i,j}$ obtained in Corollaries \ref{cor3} and \ref{cor4} are the following:
\begin{align} 
\lambda_{4} &= - 3 b_{1,1}^2 + \frac{1}{2} b_{3,1} - 2 b_{1, 3}, \qquad
\lambda_{6} = 2 b_{1,1}^3 + \frac{1}{4} b_{2,1}^2 - \frac{1}{2} b_{1,1} b_{3,1} - 2 b_{1,1} b_{1, 3} + \frac{1}{2} b_{3, 3} - 2 b_{1, 5},  \nonumber\\
\lambda_{8} &= 4 b_{1,1}^2 b_{1, 3} + b_{1, 3} b_{1, 3} - 2 b_{1,1} b_{1, 5} + \frac{1}{2} b_{3, 5} - 2 b_{1, 7} - \frac{1}{2} (b_{3,1} b_{1, 3} - b_{2,1} b_{2, 3} + b_{1,1} b_{3, 3}),\label{lam4} \\
\lambda_{10} &= 2 (b_{1,1} b_{1, 3}^2 + b_{1, 5} b_{1, 3} - b_{1,1} b_{1, 7} + 2 b_{1,1}^2 b_{1, 5}) + {1 \over 4} b_{2,3}^2 - \frac{1}{2} (b_{1, 3} b_{3, 3} - b_{3, 7} + b_{3,1} b_{1, 5} - b_{2,1} b_{2, 5} + b_{1,1} b_{3, 5} ), \nonumber\\
\lambda_{12} &= 4 b_{1,1} (b_{1, 3} b_{1, 5} + b_{1,1} b_{1, 7}) + b_{1, 5}^2 + 2 b_{1, 3} b_{1, 7} - \nonumber \\
& \qquad - \frac{1}{2} \left( b_{3, 3} b_{1, 5} - b_{2,3} b_{2,5}
+ b_{1, 3}b_{3, 5} + b_{3,1} b_{1, 7} - b_{2,1} b_{2, 7} + b_{1,1} b_{3, 7}\right), \nonumber \\
\lambda_{14} &= 2 b_{1,1} b_{1, 5}^2 + {1 \over 4} b_{2,5}^2 - \frac{1}{2} b_{1, 5} b_{3, 5} +  4 b_{1,1} b_{1,3} b_{1,7} + 2 b_{1,5} b_{1,7} - \frac{1}{2} \left( b_{3,3} b_{1,7} - b_{2,3} b_{2,7} + b_{1,3} b_{3,7} \right), \nonumber \\
\lambda_{16} &= 4 b_{1,1} b_{1, 5} b_{1, 7} + b_{1,7}^2 - \frac{1}{2} \left(b_{3, 5} b_{1,7} - b_{2,5} b_{2,7} +  b_{1,5} b_{3, 7}\right), \qquad
\lambda_{18} = 2 b_{1,1} b_{1, 7}^2 + {1 \over 4} b_{2,7}^2 - \frac{1}{2} b_{1,7} b_{3, 7},
\nonumber \\
p_{3, 3} &= 6 b_{1,1} b_{1, 3} - b_{3, 3} + 6 b_{1, 5}, \qquad
p_{3, 5} = 6 b_{1,1} b_{1, 5} - b_{3, 5} + 6 b_{1, 7}, \qquad
p_{3, 7} = 6 b_{1,1} b_{1, 7} - b_{3, 7},
\nonumber \\
p_{5, 5} &= b_{3,1} b_{1, 5} - b_{2,1} b_{2, 5} + b_{1,1} b_{3, 5} - 2 \left(4 b_{1,1}^2 + b_{1, 3}\right) b_{1, 5} + 10 b_{1,1} b_{1, 7} - 2 b_{3, 7}, \nonumber\\
p_{5, 7} &= b_{3,1} b_{1,7} - b_{2,1} b_{2,7} + b_{1,1} b_{3,7} - 2 \left(4 b_{1,1}^2 + b_{1, 3}\right) b_{1, 7}, \nonumber \\
p_{7,7} &= b_{3,3} b_{1,7} - b_{2,3} b_{2,7} + b_{1,3} b_{3,7} - 2 \left( 4 b_{1,1} b_{1,3} + b_{1,5}\right) b_{1,7}. \nonumber
\end{align}

\begin{thm} \label{td4} In the case of genus $g=4$
the homogeneous polynomial vector fields~$\mathcal{D}_s$ for $s \in \{0,1,2,3,4,5,6,7,8,10,12,14\}$ give a~solution to Problem~\ref{p2}. The vector fields $\mathcal{D}_1, \mathcal{D}_3, \mathcal{D}_5, \mathcal{D}_7$, and $\mathcal{D}_0$ are determined by Lemmas \ref{ld1}, \ref{ld2} and~\ref{ld0} for~$g=4$. We set~$p_{i,j}' = \mathcal{D}_1(p_{i,j})$ and $p_{i,j}'' = \mathcal{D}_1(\mathcal{D}_1(p_{i,j}))$ for $i,j \in \{3,5,7\}$. The polynomial vector fields~$\mathcal{D}_2$ and~$\mathcal{D}_4$ are determined by the conditions 
\begin{align}
\mathcal{D}_2(b_{1,1}) &= - b_{1,1}^2 + {1 \over 2} b_{3,1} + 2 b_{1,3} + {7 \over 9} \lambda_4,
\label{con1} \\
\mathcal{D}_2(b_{1,3}) &= - b_{1,1} b_{1,3} + {1 \over 2} b_{3,3} + 3 b_{1,5} - {4 \over 3} \lambda_4 b_{1,1} + {1 \over 2} p_{3,3}, \nonumber \\
\mathcal{D}_2(b_{1,5}) &= - b_{1,1} b_{1,5} + {1 \over 2} b_{3,5} + 5 b_{1,7} - {8 \over 9} \lambda_4 b_{1,3} + {1 \over 2} p_{3,5}, \nonumber \\
\mathcal{D}_2(b_{1,7}) &= - b_{1,1} b_{1,7} + {1 \over 2} b_{3,7} - {4 \over 9} \lambda_4 b_{1,5} + {1 \over 2} p_{3,7}, \nonumber
\\
\mathcal{D}_4(b_{1,1}) &= - 2 b_{1,1} b_{1,3}  + b_{3,3} + 2 b_{1,5} + {2 \over 3} \lambda_6, \nonumber \\
\mathcal{D}_4(b_{1,3}) &= - b_{1,3}^2 + 3 b_{1,7} + \lambda_4 b_{1,3} - 2 \lambda_6 b_{1,1} - \lambda_8 - {1 \over 2} b_{1,1} p_{3,3} + {1 \over 2} p_{3,3}'' + {1 \over 2} p_{3,5}, \nonumber \\
\mathcal{D}_4(b_{1,5}) &= - b_{1,3} b_{1,5} + 3 \lambda_4 b_{1,5} - {4 \over 3} \lambda_6 b_{1,3} - {1 \over 2} b_{1,1} p_{3,5} + {1 \over 2} p_{3,5}'' + {1 \over 2} p_{5,5}, \nonumber \\
\mathcal{D}_4(b_{1,7}) &= - b_{1,3} b_{1,7} + 5 \lambda_4 b_{1,7} - {2 \over 3} \lambda_6 b_{1,5} - {1 \over 2} b_{1,1} p_{3,7} + {1 \over 2} p_{3,7}'' + {1 \over 2} p_{5,7}, \nonumber
\end{align}
and the relations
\begin{equation} \label{con2}
\begin{pmatrix} 
[\mathcal{D}_1, \mathcal{D}_2]\\
[\mathcal{D}_1, \mathcal{D}_4]\\
\end{pmatrix}
=
\begin{pmatrix}
b_{1,1} & - 1 & 0\\
b_{1,3} & b_{1,1} & - 1\\
\end{pmatrix}
\begin{pmatrix}
\mathcal{D}_1\\
\mathcal{D}_3\\
\mathcal{D}_5
\end{pmatrix}.
\end{equation}

The polynomial vector fields~$\mathcal{D}_6, \mathcal{D}_8$ and $\mathcal{D}_{10}$ are determined by the relations
\begin{align}
\mathcal{D}_6 &= {1 \over 4} \left(2 [\mathcal{D}_2, \mathcal{D}_4] + b_{2, 3} \mathcal{D}_1 - b_{2,1} \mathcal{D}_3 \right) - {4 \over 3 } \left( \lambda_{6} \mathcal{D}_0 - \lambda_4 \mathcal{D}_2\right), \nonumber\\
\mathcal{D}_8 &= {1 \over 8} \left(2 [\mathcal{D}_2, \mathcal{D}_6] + \left({1 \over 2} p_{3, 3}' + b_{2,5}\right) \mathcal{D}_1 - b_{2,3} \mathcal{D}_3 - b_{2,1} \mathcal{D}_5\right) - {5 \over 9 } \left( \lambda_{8} \mathcal{D}_0 - \lambda_4 \mathcal{D}_4\right), \label{con3}
\\
\mathcal{D}_{10} &= {1 \over 12} \left(2 [\mathcal{D}_2, \mathcal{D}_8] + \left(p_{3,5}'+ b_{2,7}\right) \mathcal{D}_1 - b_{2,5} \mathcal{D}_3 - b_{2,3} \mathcal{D}_5 - b_{2,1} \mathcal{D}_7\right) - {8 \over 27 } \left( \lambda_{10} \mathcal{D}_0 - \lambda_4 \mathcal{D}_6\right), \nonumber
\\
\mathcal{D}_{12} &= {1 \over 16} \left(2 [\mathcal{D}_2, \mathcal{D}_{10}] + \left(p_{3,7}' + {1 \over 2} p_{5,5}'\right) \mathcal{D}_1 - b_{2,7} \mathcal{D}_3 - b_{2,5} \mathcal{D}_5 - b_{2,3} \mathcal{D}_7\right) - { 1 \over 6 } \left( \lambda_{12} \mathcal{D}_0 - \lambda_4 \mathcal{D}_8\right), \nonumber \\
\mathcal{D}_{14} &= {1 \over 20} \left(2 [\mathcal{D}_2, \mathcal{D}_{12}] + p_{5, 7}' \mathcal{D}_1 - b_{2,7} \mathcal{D}_5 - b_{2,5} \mathcal{D}_7 \right) - {4 \over 45 } \left( \lambda_{14} \mathcal{D}_0 - \lambda_4 \mathcal{D}_{10}\right). \nonumber
\end{align}
We denote  by $\mathscr{D}$ the polynomial Lie algebra generated by the polynomial vector fields $\mathcal{D}_s$ for $s \in \{0,1,2,3,4,5,6,7,8,10,12,14\}$. The commutation relations for $\mathscr{D}$ are given in~Lemma~\ref{c41}, Lemma~\ref{c42}, Lemma \ref{c43}, and Lemma \ref{c44}. 
\end{thm}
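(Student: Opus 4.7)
The plan is to verify the key relation \eqref{keyrel} for each new vector field $\mathcal{D}_2,\mathcal{D}_4,\mathcal{D}_6,\mathcal{D}_8,\mathcal{D}_{10},\mathcal{D}_{12},\mathcal{D}_{14}$ acting on the coordinate functions $b_{i,j}$, using the explicit formulas for $\mathcal{L}_k$ given in Theorem \ref{tg4}. Once \eqref{keyrel} is established for all twelve generators, projectability for $\rho$ follows automatically: by Theorem \ref{tg4} the pushforward of $\mathcal{D}_{2k}$ under $\rho$ must coincide with $L_{2k}$ from Section \ref{s2}, while $\mathcal{D}_1,\mathcal{D}_3,\mathcal{D}_5,\mathcal{D}_7$ and $\mathcal{D}_0$ project to $0$ and $L_0$ respectively, by Corollaries \ref{cords} and \ref{cor0}. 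Independence at every point of $\rho^{-1}(\mathcal{B})$ then reduces to independence of $L_0,L_2,\ldots,L_{14}$ on $\mathcal{B}$, which holds precisely because $\mathcal{B}=\mathbb{C}^{2g}\setminus\Sigma$ and the $L_{2k}$ span the tangent fields to $\Sigma$, together with the transversal independence of $\mathcal{D}_1,\mathcal{D}_3,\mathcal{D}_5,\mathcal{D}_7$ along the fibers.

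First I would derive \eqref{con1} by applying $\mathcal{L}_2$ and $\mathcal{L}_4$ from Theorem \ref{tg4} to $\wp_{1,j}=\varphi^*b_{1,j}$ and rewriting the result in terms of the coordinates on $\mathbb{C}^{3g}$: this uses Theorem \ref{t31} to eliminate the derivatives $\wp_{3,k},\wp_{1,k+2},\wp_{1,1,1,k}$ in favor of $\wp_{1,k},\wp_{1,1,k}$ and the $p_{k,l},\lambda_s$, and then the formulas \eqref{lam4} expressing $p_{k,l},\lambda_s$ as polynomials in $b_{i,j}$. Having $\mathcal{D}_2(b_{1,j})$ and $\mathcal{D}_4(b_{1,j})$, the remaining components $\mathcal{D}_k(b_{2,j}),\mathcal{D}_k(b_{3,j})$ for $k\in\{2,4\}$ are forced by \eqref{con2}: since Lemma \ref{ld1} gives $\mathcal{D}_1(b_{1,j})=b_{2,j}$ and $\mathcal{D}_1(b_{2,j})=b_{3,j}$, the identity $[\mathcal{D}_1,\mathcal{D}_k](b_{1,j})=\mathcal{D}_1\mathcal{D}_k(b_{1,j})-\mathcal{D}_k(b_{2,j})$ determines $\mathcal{D}_k(b_{2,j})$, and one more commutator with $\mathcal{D}_1$ yields $\mathcal{D}_k(b_{3,j})$. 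Consistency of these prescriptions with \eqref{keyrel} is guaranteed by the commutators $[\mathcal{L}_1,\mathcal{L}_{2k}]$ of Lemma \ref{l23}.

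The higher generators $\mathcal{D}_6,\mathcal{D}_8,\mathcal{D}_{10},\mathcal{D}_{12},\mathcal{D}_{14}$ are defined inductively by \eqref{con3}. The motivation is Lemma \ref{lcijk}, which on the $\mathcal{L}$-side expresses $\mathcal{L}_{2k+2}$ as a rational combination of $[\mathcal{L}_2,\mathcal{L}_{2k}]$, $\mathcal{L}_0$ and $\mathcal{L}_{2k-2}$, modulo a correction along $\mathcal{L}_1,\mathcal{L}_3,\mathcal{L}_5,\mathcal{L}_7$ whose coefficients are the third-order $\wp$'s of Lemma \ref{l24}. Pulling these identities back through $\varphi^*$ and replacing each third-order symbol $2\wp_{1,i,j}$ by the corresponding iterated derivative $p_{i,j}'=\mathcal{D}_1(p_{i,j})$ and $p_{i,j}''=\mathcal{D}_1\mathcal{D}_1(p_{i,j})$ gives precisely the right-hand sides of \eqref{con3}. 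By induction on $k$, if $\mathcal{D}_{2k}$ and $\mathcal{D}_{2k-2}$ already satisfy \eqref{keyrel}, then $\mathcal{D}_{2k+2}$ defined by \eqref{con3} inherits \eqref{keyrel} from Lemma \ref{lcijk} and Lemma \ref{l24}.

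The main obstacle is the bookkeeping in matching the explicit polynomial formulas: one must check that \eqref{con3} actually produces polynomial (not merely rational) vector fields on $\mathbb{C}^{12}$, i.e.\ that all the apparent denominators introduced by Lemma \ref{lcijk} cancel against the explicit expressions of $\mathcal{D}_2,\mathcal{D}_{2k}$ coming from the previous step. Weight-homogeneity with $\wt b_{i,j}=i+j$ and $\wt\mathcal{D}_k=k$ provides a strong consistency check and sharply limits the monomials that can appear at each weight. Finally, the commutation relations for $\mathscr{D}$, stated in the subsequent Lemmas \ref{c41}--\ref{c44}, are obtained by transcribing the commutators of $\mathscr{L}$ from Lemmas \ref{lem01}, \ref{l23}, \ref{l24} and Corollary \ref{corlast} via $\varphi^*$, using the identifications $\varphi^*b_{i,j}=$ the corresponding $\wp$ symbol, $\varphi^*p_{i,j}=2\wp_{i,j}$ and $\varphi^*q_{i,j,k}=2\wp_{i,j,k}$, in complete parallel with the genus $2$ and $3$ cases of Theorem \ref{t2g} and the genus $3$ analogue.
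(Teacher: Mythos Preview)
Your strategy is coherent but inverts the paper's logical order. The paper treats the $\mathcal{D}_k$ as purely algebraic objects in $\mathbb{C}^{12}$: it notes that the formulas in the statement determine them uniquely, then verifies projectability (pushforward $L_{2k}$) and the commutation relations of Lemmas~\ref{c41}--\ref{c44} by direct computation in the coordinates $(b_{i,j})$, with no reference to the $\mathcal{L}_k$ or to \eqref{keyrel}. The relation \eqref{keyrel} is established only afterwards, as the separate Theorem~\ref{tcomp}, and that proof \emph{uses} the commutation relations from Lemma~\ref{c42} already obtained. Your route---deriving the $\mathcal{D}_k$ from the $\mathcal{L}_k$ via \eqref{keyrel} and transporting all properties through $\varphi^*$---is more conceptual and explains where the formulas \eqref{con1}--\eqref{con3} come from; the paper's route keeps the proof of Theorem~\ref{td4} independent of the analytic side and reduces everything to polynomial identities in $(b_{i,j})$.

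There is, however, a circularity in your last paragraph. You propose to obtain Lemma~\ref{c44} by transcribing Corollary~\ref{corlast} through $\varphi^*$, but in the paper Corollary~\ref{corlast} is deduced \emph{from} Lemma~\ref{c44} and Theorem~\ref{tcomp}, not independently. To make your argument self-contained you would have to compute the commutators $[\mathcal{L}_{2m},\mathcal{L}_{2l+2}]$ for all $2\leqslant m\leqslant l$ directly from the explicit expressions for $\mathcal{L}_k$ in Theorem~\ref{tg4} (requiring the action of each $\mathcal{L}_{2k}$ on every $\zeta_s$); this is a calculation of the same type as the proof of Lemma~\ref{l24} but substantially longer, and it is precisely what the paper avoids by computing on the $\mathcal{D}$ side first and then pulling the result back to $\mathscr{L}$.
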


\begin{proof}
First we note that the polynomial vector fields $\mathcal{D}_s$ are determined uniquely by the~Theorem.
The polynomial vector fields $\mathcal{D}_1, \mathcal{D}_3, \mathcal{D}_5, \mathcal{D}_7$, and $\mathcal{D}_0$ are determined by Lemmas~\ref{ld1},~\ref{ld2} and~\ref{ld0}.
Relations \eqref{con1} determine the polynomials $\mathcal{D}_2(b_{1,j})$ and $\mathcal{D}_4(b_{1,j})$ for $j \in \{1,3,5,7\}$.
Relations \eqref{con2} determine the polynomials $\mathcal{D}_2(b_{i,j})$ and $\mathcal{D}_4(b_{i,j})$ for $i \in \{2,3\}$, $j \in \{1,3,5,7\}$. Relations \eqref{con3} determine the polynomial vector fields $\mathcal{D}_{2k}$ for $k \in \{3,4,5,6,7\}$ given the polynomial vector fields $\mathcal{D}_{s}$ for $s \in \{1,3,5,7\}$ and $\mathcal{D}_{2m}$ for~$m<k$.

Thus we have a system of $12$ polynomial vector fields that are explicitly defined in the coordinates $(b)$. By Corollary \ref{cords},
the polynomial vector fields $\mathcal{D}_s$ for $s \in \{1,3,5,7\}$ are projectable for the polynomial map $\rho\colon \mathbb{C}^{12} \to \mathbb{C}^{8}$. Their pushforwards are zero.
By Corollary \ref{cor0}, the polynomial vector field $\mathcal{D}_0$ is projectable for the polynomial map~$\rho$ with pushforward $L_0$. It is a direct calculation to check that the polynomial vector fields $\mathcal{D}_{2k}$ for $k \in \{1,2,3,4,5,6,7\}$ are projectable for the polynomial map~$\rho$ with pushforwards~$L_{2k}$. This proves the claim of the Theorem that the vector fields~$\mathcal{D}_s$ give a~solution to Problem~\ref{p2}.

The commutation relations given in Section \ref{scom4} in~Lemmas~\ref{c41},~\ref{c42}, ~\ref{c43}, and~\ref{c44} are proved by direct computation.
\end{proof}

\begin{thm} \label{tcomp}
For the polynomial vector fields $\mathcal{D}_k$ for $k \in \{0,1,2,3,4,5,6,7,8,10,12,14\}$, determined by Theorem \ref{td4}, we have
\begin{equation} \label{refres}
\mathcal{L}_k (\varphi^* b_{i,j}) = \varphi^* \mathcal{D}_k (b_{i,j}), 
\end{equation}
where the vector fields $\mathcal{L}_k$ are given in Theorem \ref{tg4}. 
\end{thm}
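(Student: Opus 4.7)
The plan is to verify \eqref{refres} by working through the generators in order of increasing difficulty, exploiting the parallel structure between the commutation relations of $\mathscr{L}$ (Theorem~\ref{tg4}) and those of $\mathscr{D}$ (Theorem~\ref{td4}). For $k \in \{0, 1, 3, 5, 7\}$ the identity is already in hand: Lemma~\ref{ld1} and Lemma~\ref{ld2} with $g = 4$ handle $k \in \{1, 3, 5, 7\}$, while Lemma~\ref{ld0} handles $k = 0$.

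The key step will be to prove \eqref{refres} for $k \in \{2, 4\}$ on the coordinates $b_{1, j}$, $j \in \{1, 3, 5, 7\}$. This amounts to the eight identities
\[
\mathcal{L}_2 \wp_{1, j} = \varphi^* \mathcal{D}_2(b_{1, j}), \qquad \mathcal{L}_4 \wp_{1, j} = \varphi^* \mathcal{D}_4(b_{1, j}),
\]
whose right-hand sides are read off from \eqref{con1}. I would compute the left-hand sides from the explicit formulas for $\mathcal{L}_2$ and $\mathcal{L}_4$ in Theorem~\ref{tg4}, using $\partial_k \zeta_s = -\wp_{k, s}$ and $\partial_k \wp_{1, j} = \wp_{1, k, j}$, and reducing the resulting fourth-order $\wp$-symbols by Theorem~\ref{t31}. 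Equivalently, one may begin from the expressions for $\mathcal{L}_{2k} \zeta_s$ provided by Theorem~6.3 of~\cite{BB20} (invoked in the proof of Lemma~\ref{l24}) and differentiate once in $t_1$. The main calculational obstacle of the theorem lies precisely here.

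Given \eqref{refres} for $\mathcal{L}_2, \mathcal{L}_4$ on the $b_{1, j}$, its extension to $b_{2, j}$ and $b_{3, j}$ is formal. Since $b_{2, j} = \mathcal{D}_1 b_{1, j}$ and $b_{3, j} = \mathcal{D}_1 b_{2, j}$, applying $\mathcal{L}_1 = \partial_1$ to the already-established identity for $b_{1, j}$ and using $\mathcal{L}_1 \mathcal{L}_k = \mathcal{L}_k \mathcal{L}_1 + [\mathcal{L}_1, \mathcal{L}_k]$ yields \eqref{refres} on $b_{2, j}$; the commutators $[\mathcal{L}_1, \mathcal{L}_2]$ and $[\mathcal{L}_1, \mathcal{L}_4]$ from Lemma~\ref{l23} are the $\varphi^*$-images of the vector-field commutators prescribed in \eqref{con2}, so the two sides match. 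A second application of $\mathcal{L}_1$ yields \eqref{refres} on $b_{3, j}$.

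Finally, for $k \in \{6, 8, 10, 12, 14\}$ I would induct on $k$ using the defining formulas \eqref{con3}. Each such formula is the $\varphi^*$-image of the relation obtained by solving the commutator $[\mathcal{L}_2, \mathcal{L}_{2k - 2}]$ from Lemma~\ref{l24} for $\mathcal{L}_{2k}$, with structure constants taken from Example~\ref{ex61} and the identifications $\varphi^* b_{2, j} = \wp_{1, 1, j}$. Granted \eqref{refres} for $\mathcal{D}_0, \mathcal{D}_1, \ldots, \mathcal{D}_5, \mathcal{D}_7$ and for $\mathcal{D}_{2m}$ with $m < k$, the identity for $\mathcal{D}_{2k}$ then follows at once, since every operation appearing on the right of \eqref{con3}---commutator with $\mathcal{D}_2$, linear combination with $\mathcal{D}_s$ for $s < 2k$, and coefficient multiplication by polynomials in the $b_{1, j}$, $b_{2, j}$, and $\lambda$---is compatible with $\varphi^*$.
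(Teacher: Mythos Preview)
Your approach is correct but takes a genuinely different route from the paper's. You proceed constructively: verify the eight identities $\mathcal{L}_k \wp_{1,j} = \varphi^* \mathcal{D}_k(b_{1,j})$ for $k \in \{2,4\}$ by direct calculation, extend formally to $b_{2,j}$ and $b_{3,j}$ via the commutators $[\mathcal{L}_1,\mathcal{L}_k]$ matched against \eqref{con2}, and then obtain $k \in \{6,8,10,12,14\}$ by induction, recognizing each line of \eqref{con3} as the $\varphi^*$-image of the corresponding line of Lemma~\ref{l24} solved for $\mathcal{L}_{2k}$. The paper instead argues indirectly and uniformly for all even $k$: it introduces the vector field $\widetilde{\mathcal{D}}_{2s}$ on $\mathbb{C}^{12}$ determined by \eqref{refres}, observes that $\widetilde{\mathcal{D}}_{2s}$ and $\mathcal{D}_{2s}$ have the same pushforward $L_{2s}$ under $\rho$, so their difference lies in the span of $\mathcal{D}_1,\mathcal{D}_3,\mathcal{D}_5,\mathcal{D}_7$ with rational coefficients $\alpha_{2s}^j(b)$; comparing Lemma~\ref{l23} with Lemma~\ref{c42} shows that each $\alpha_{2s}^j$ is annihilated by every odd $\mathcal{D}_i$, hence is a rational function of the $\lambda_m$ alone; and then a weight-parity argument ($\wt\alpha_{2s}^j = 2s-j$ is odd, while every $\lambda_m$ has even weight) forces $\alpha_{2s}^j = 0$. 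Your route is more hands-on and makes transparent exactly how \eqref{con3} and Lemma~\ref{l24} interlock, but it leaves the eight base identities for $k=2,4$ as the nontrivial calculation you flag; the paper's route sidesteps that calculation entirely, at the price of invoking the independence claim from Theorem~\ref{td4} and the parallel commutator tables already recorded in Lemmas~\ref{l23} and~\ref{c42}.
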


\begin{proof}
For the polynomial vector fields $\mathcal{D}_1, \mathcal{D}_3, \mathcal{D}_5, \mathcal{D}_7$, and $\mathcal{D}_0$ the claim of the Theorem is given by Lemmas \ref{ld1}, \ref{ld2}, and \ref{ld0}. For $k \in \{1,2,3,4,5,6,7\}$ define the vector fields~$\widetilde{\mathcal{D}}_{2s}$ by the relation \eqref{refres}. The vector fields $\widetilde{\mathcal{D}}_{2s}$ are linear combinations of the fields $\mathcal{D}_k$ for $k \in \{0,1,2,3,4,5,6,7,8,10,12,14\}$ with coefficients rational functions in $(b)$. The pushforward of $\widetilde{\mathcal{D}}_{2s}$ is $L_{2s}$, which is equal to the pushforward of $\mathcal{D}_{2s}$. Therefore we have
\[
\widetilde{\mathcal{D}}_{2s} = \mathcal{D}_{2s} + \alpha_{2s}^1(b) \mathcal{D}_{1} + \alpha_{2s}^3(b) \mathcal{D}_{3} + \alpha_{2s}^5(b) \mathcal{D}_{5} + \alpha_{2s}^7(b) \mathcal{D}_{7} 
\]
for some rational functions $\alpha_{2s}^j(b)$. Comparing  results of Lemma \ref{l23} and Lemma \ref{c42}, we~obtain $\mathcal{D}_{i}(\alpha_{2s}^j(b)) = 0$ for $i \in \{1,3,5,7\}$, thus $\alpha_{2s}^j(b)$ is a rational function in $(\lambda)$. We~have $\wt \alpha_{2s}^j(b) = 2s - j$, where $j$ is odd, while $\wt \lambda_m$ is even for all $m$ (see \eqref{lam4}), thus we obtain $\alpha_{2s}^j(b) = 0$ and $\widetilde{\mathcal{D}}_{2s} = \mathcal{D}_{2s}$.
\end{proof}

\vfill
\eject

\section{Commutation relations in the polynomial Lie algebra in $\mathbb{C}^{12}$} \label{scom4}

\begin{lem} \label{c41}
The relations hold for the commutators in the polynomial Lie algebra~$\mathscr{D}$:
\begin{align*}
[\mathcal{D}_0, \mathcal{D}_{k}] &= k \mathcal{D}_{k}, & &k = 0,1,2,3,4,5,6,7,8,10,12,14;\\
[\mathcal{D}_{k}, \mathcal{D}_{m}] &= 0, & &k,m = 1,3,5,7.
\end{align*}
\end{lem}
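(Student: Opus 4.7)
The plan is to split Lemma~\ref{c41} into two parts. For the Euler brackets $[\mathcal{D}_0, \mathcal{D}_k] = k \mathcal{D}_k$, the argument is by weight homogeneity. By Lemma~\ref{ld0}, $\mathcal{D}_0$ acts on $b_{i,j}$ as multiplication by $i+j$, so $\mathcal{D}_0$ is the Euler field for the $\mathbb{Z}$-grading on $\mathbb{C}[b_{i,j},\lambda]$ with $\wt b_{i,j} = i+j$ and $\wt \lambda_r = r$. I would then verify that each $\mathcal{D}_k$ in the Lie algebra $\mathscr{D}$ is a polynomial derivation homogeneous of weight $k$ under this grading: $\mathcal{D}_1$ from Lemma~\ref{ld1} is weight~$1$ by inspection; for $s \in \{3,5,7\}$ the formula in Lemma~\ref{ld2} together with the fact that $p_{s,j}$ has weight $s+j$ (visible in~\eqref{lam4}) gives $\wt \mathcal{D}_s = s$, since each application of $\mathcal{D}_1$ raises weight by one; the defining conditions~\eqref{con1} and~\eqref{con2} for $\mathcal{D}_2$ and $\mathcal{D}_4$ are weight-homogeneous equations; and the recursive relations~\eqref{con3} inductively produce $\mathcal{D}_{2k}$ of weight $2k$ for $k \geq 3$. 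Given homogeneity, $[\mathcal{D}_0, \mathcal{D}_k](b_{i,j}) = (i+j+k)\mathcal{D}_k(b_{i,j}) - (i+j)\mathcal{D}_k(b_{i,j}) = k\mathcal{D}_k(b_{i,j})$ for every generator $b_{i,j}$, so the two derivations coincide.

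For the vanishing commutators $[\mathcal{D}_k, \mathcal{D}_m] = 0$ with $k, m \in \{1,3,5,7\}$, the key tool is Theorem~\ref{tcomp}, which gives $\mathcal{L}_k(\varphi^* b_{i,j}) = \varphi^* \mathcal{D}_k(b_{i,j})$. For $k \in \{1,3,5,7\}$, Theorem~\ref{tg4} identifies $\mathcal{L}_k$ with the partial derivative $\partial_k$, so $[\mathcal{L}_k, \mathcal{L}_m] = 0$. For such $k$ the polynomial $\mathcal{D}_k(b_{i,j})$ involves only the $b$-coordinates (no $\lambda$'s appear, by Lemmas~\ref{ld1} and~\ref{ld2} combined with the explicit formulas for $p_{s,j}$ in~\eqref{lam4}); applying $\mathcal{L}_m$ and iterating the pullback identity via the chain rule therefore gives $\varphi^*\bigl(\mathcal{D}_m \mathcal{D}_k(b_{i,j})\bigr) = \mathcal{L}_m \mathcal{L}_k(\varphi^* b_{i,j})$. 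Taking the antisymmetric combination yields $\varphi^* [\mathcal{D}_k, \mathcal{D}_m](b_{i,j}) = [\mathcal{L}_k, \mathcal{L}_m](\varphi^* b_{i,j}) = 0$; since $\varphi$ is birational by the Dubrovin--Novikov theorem (see Theorem~\ref{thm3}), the pullback $\varphi^*$ is injective on polynomials in $(b)$, so $[\mathcal{D}_k, \mathcal{D}_m](b_{i,j}) = 0$ on every coordinate, whence $[\mathcal{D}_k, \mathcal{D}_m] = 0$ as derivations.

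There is no serious obstacle: the first part is formal, and the second part reduces to the commutativity of partial derivatives transported by the injective pullback $\varphi^*$. If one insists on a direct verification (in the spirit of the remark in the proof of Theorem~\ref{td4} that these lemmas are proved ``by direct computation''), the only nontrivial case for the second set is $[\mathcal{D}_1, \mathcal{D}_s](b_{3,2\ell-1})$ for $s \in \{3,5,7\}$, since the vanishing on $b_{1,2\ell-1}$ and $b_{2,2\ell-1}$ is immediate from Lemma~\ref{ld2}. This case reduces to the polynomial identity $\tfrac{1}{2}\mathcal{D}_1^4(p_{s,2\ell-1}) = \mathcal{D}_s\bigl(4(2b_{1,1}b_{2,2\ell-1} + b_{2,1}b_{1,2\ell-1} + b_{2,2\ell+1})\bigr)$, a bounded finite computation using the explicit formulas for $p_{s,j}$ in~\eqref{lam4}; the brackets $[\mathcal{D}_s, \mathcal{D}_{s'}]$ with $s, s' \in \{3,5,7\}$ are handled analogously with one iteration less of $\mathcal{D}_1$.
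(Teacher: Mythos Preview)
Your proposal is correct. For the Euler brackets, your weight-homogeneity argument is precisely the conceptual content of the paper's ``direct computation'' (and mirrors the proof of Lemma~\ref{lem01} for~$\mathscr{L}$).

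For the vanishing odd-index commutators the paper offers nothing beyond ``by direct computation'' (inside the proof of Theorem~\ref{td4}), whereas your primary route---transporting $[\partial_k,\partial_m]=0$ through the birational $\varphi$ via the chain rule and injectivity of $\varphi^*$---is a genuinely different and cleaner argument that sidesteps the symbolic check entirely. One small logical point: you cite Theorem~\ref{tcomp}, which in the paper's order sits \emph{after} Lemma~\ref{c41} and whose proof invokes Lemma~\ref{c42}; however, the only instance of~\eqref{refres} you actually use is for $k\in\{1,3,5,7\}$, and that is exactly the content of Lemmas~\ref{ld1} and~\ref{ld2}. Citing those directly removes any hint of circularity. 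Your fallback sketch of the direct verification (reducing $[\mathcal{D}_1,\mathcal{D}_s](b_{3,2\ell-1})$ to a $\mathcal{D}_1^4$-identity on $p_{s,2\ell-1}$) matches what the paper presumably has in mind.
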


\begin{lem} \label{c42}
The relations hold for the commutators in the polynomial Lie algebra~$\mathscr{D}$:
\begin{align}
\begin{pmatrix}
[\mathcal{D}_1, \mathcal{D}_2]\\
[\mathcal{D}_1, \mathcal{D}_4]\\
[\mathcal{D}_1, \mathcal{D}_6]\\
[\mathcal{D}_1, \mathcal{D}_8]\\
[\mathcal{D}_1, \mathcal{D}_{10}]\\ 
[\mathcal{D}_1, \mathcal{D}_{12}]\\
[\mathcal{D}_1, \mathcal{D}_{14}] 
\end{pmatrix}
&=
\begin{pmatrix}
b_{1,1} & - 1 & 0 & 0\\
b_{1,3} &b_{1,1} & - 1 & 0\\
b_{1,5} & b_{1,3} & b_{1,1} & - 1\\
b_{1,7} & b_{1,5} & b_{1,3} & b_{1,1}\\
0 & b_{1,7} & b_{1,5} & b_{1,3}\\
0 & 0 & b_{1,7} & b_{1,5}\\
0 & 0 & 0 & b_{1,7}\\
\end{pmatrix}
\begin{pmatrix}
\mathcal{D}_1\\
\mathcal{D}_3\\
\mathcal{D}_5\\
\mathcal{D}_7
\end{pmatrix}, \label{d14}
\\
\begin{pmatrix}
[\mathcal{D}_3, \mathcal{D}_2]\\
[\mathcal{D}_3, \mathcal{D}_4]\\
[\mathcal{D}_3, \mathcal{D}_6]\\
[\mathcal{D}_3, \mathcal{D}_8]\\
[\mathcal{D}_3, \mathcal{D}_{10}]\\ 
[\mathcal{D}_3, \mathcal{D}_{12}]\\
[\mathcal{D}_3, \mathcal{D}_{14}] 
\end{pmatrix}
&=
{1 \over 2}
\begin{pmatrix}
2 b_{1,3} - 2 \lambda_4 & 0 & - 6 & 0\\
p_{3,3} & 2 b_{1,3} - 2 \lambda_4 & 0 & - 6\\
p_{3,5} & p_{3,3} & 2 b_{1,3} - 2 \lambda_4 & 0\\
p_{3,7} & p_{3,5} & p_{3,3} & 2 b_{1,3} - 2 \lambda_4\\
0 & p_{3,7} & p_{3,5} & p_{3,3}\\
0 & 0 & p_{3,7} & p_{3,5}\\
0 & 0 & 0 & p_{3,7}
\end{pmatrix}
\begin{pmatrix}
\mathcal{D}_1\\
\mathcal{D}_3\\
\mathcal{D}_5\\
\mathcal{D}_7
\end{pmatrix}
+ {1 \over 3} 
\begin{pmatrix}
 7 \lambda_4\\
 6 \lambda_6\\
 5 \lambda_8\\
 4 \lambda_{10}\\
 3 \lambda_{12}\\
 2 \lambda_{14}\\
 \lambda_{16}\\
\end{pmatrix}
\mathcal{D}_1, \nonumber
\\
\begin{pmatrix}
[\mathcal{D}_5, \mathcal{D}_2]\\
[\mathcal{D}_5, \mathcal{D}_4]\\
[\mathcal{D}_5, \mathcal{D}_6]\\
[\mathcal{D}_5, \mathcal{D}_8]\\
[\mathcal{D}_5, \mathcal{D}_{10}]\\ 
[\mathcal{D}_5, \mathcal{D}_{12}]\\
[\mathcal{D}_5, \mathcal{D}_{14}] 
\end{pmatrix}
&=
{1 \over 2}
\begin{pmatrix}
2 b_{1,5} & 0 & 0 & - 10\\
p_{3,5} & 2 b_{1,5} & - 6 \lambda_4 & 0\\
p_{5,5} & p_{3,5} & 2 b_{1,5} - 4 \lambda_6 & - 6 \lambda_4\\
p_{5,7} - 2 \lambda_{12} & p_{5,5} & p_{3,5} - 2 \lambda_{8} & 2 b_{1,5} - 4 \lambda_6\\
- 4 \lambda_{14} & p_{5,7} - 2 \lambda_{12} & p_{5,5} & p_{3,5} - 2 \lambda_{8}\\
- 6 \lambda_{16} & - 4 \lambda_{14} & p_{5,7} & p_{5,5}\\
- 8 \lambda_{18} & - 6 \lambda_{16} & 0 & p_{5,7} \\
\end{pmatrix}
\begin{pmatrix}
\mathcal{D}_1\\
\mathcal{D}_3\\
\mathcal{D}_5\\
\mathcal{D}_7
\end{pmatrix}
 + {4 \over 9} 
\begin{pmatrix}
 2 \lambda_4\\
 3 \lambda_6\\
 4 \lambda_8\\
 5 \lambda_{10}\\
 6 \lambda_{12}\\
 7 \lambda_{14}\\
 8 \lambda_{16}\\
\end{pmatrix}
\mathcal{D}_3, \nonumber \\
\begin{pmatrix} 
[\mathcal{D}_7, \mathcal{D}_2]\\
[\mathcal{D}_7, \mathcal{D}_4]\\
[\mathcal{D}_7, \mathcal{D}_6]\\
[\mathcal{D}_7, \mathcal{D}_8]\\
[\mathcal{D}_7, \mathcal{D}_{10}]\\ 
[\mathcal{D}_7, \mathcal{D}_{12}]\\
[\mathcal{D}_7, \mathcal{D}_{14}] 
\end{pmatrix}
&=
{1 \over 2}
\begin{pmatrix}
2 b_{1,7} & 0 & 0 & 0\\
p_{3,7} & 2 b_{1,7} &  0 & - 10 \lambda_4\\
p_{5,7} & p_{3,7} & 2 b_{1,7} & - 8 \lambda_6 \\
p_{7,7} & p_{5,7} & p_{3,7} & 2 b_{1,7} - 6 \lambda_8\\
- 2 \lambda_{16} & p_{7,7}& p_{5,7} & p_{3,7} - 4 \lambda_{10}\\
- 4 \lambda_{18} & - 2 \lambda_{16} & p_{7,7} & p_{5,7} - 2 \lambda_{12}\\
0 & - 4 \lambda_{18} & - 2 \lambda_{16} & p_{7,7} \\
\end{pmatrix}
\begin{pmatrix}
\mathcal{D}_1\\
\mathcal{D}_3\\
\mathcal{D}_5\\
\mathcal{D}_7
\end{pmatrix}
 + {2 \over 9} 
\begin{pmatrix}
 2 \lambda_4\\
 3 \lambda_6\\
 4 \lambda_8\\
 5 \lambda_{10}\\
 6 \lambda_{12}\\
 7 \lambda_{14}\\
 8 \lambda_{16}\\
\end{pmatrix}
\mathcal{D}_5. \nonumber
\end{align}
\end{lem}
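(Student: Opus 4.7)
The plan is to verify each of the four matrix commutation identities by direct computation on the twelve coordinate functions $b_{i,j}$. The formulas \eqref{con1}--\eqref{con3} of Theorem \ref{td4}, combined with Lemmas \ref{ld1}, \ref{ld2}, and \ref{ld0}, give explicit polynomial expressions for every $\mathcal{D}_s(b_{i,j})$ with $s \in \{0,1,2,3,4,5,6,7,8,10,12,14\}$, so each commutator $[\mathcal{D}_s, \mathcal{D}_{2k}](b_{i,j})$ can be expanded as a polynomial in the $b_{i,j}$ and compared, through the substitutions \eqref{lam4}, with the corresponding entry on the right-hand side of the claimed relation.

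To organise the bookkeeping, I would induct on the even index $2k$. For the identity $[\mathcal{D}_1, \mathcal{D}_{2k}]$, the bases $k=1,2$ coincide with the defining relations \eqref{con2}. For $k \geq 3$, the recursion \eqref{con3} writes $\mathcal{D}_{2k}$ as a multiple of $[\mathcal{D}_2, \mathcal{D}_{2k-2}]$ plus lower-order generators multiplied by coefficient functions, and the Jacobi identity
\[
[\mathcal{D}_1, [\mathcal{D}_2, \mathcal{D}_{2k-2}]] = [\mathcal{D}_2, [\mathcal{D}_1, \mathcal{D}_{2k-2}]] - [\mathcal{D}_{2k-2}, [\mathcal{D}_1, \mathcal{D}_2]]
\]
reduces the case $2k$ to the induction hypothesis and the previously established base. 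For each odd $s \in \{3,5,7\}$, the bases $[\mathcal{D}_s, \mathcal{D}_2]$ and $[\mathcal{D}_s, \mathcal{D}_4]$ are computed directly from \eqref{con1}, \eqref{con2}, and Lemma \ref{ld2}; the inductive step then repeats the same Jacobi reduction, invoking additionally $[\mathcal{D}_s, \mathcal{D}_{2j+1}] = 0$ from Lemma \ref{c41}, the Euler relation $[\mathcal{D}_0, \mathcal{D}_k] = k \mathcal{D}_k$, and Corollary \ref{cords}, which yields $\mathcal{D}_s(\lambda_m) = 0$.

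The delicate step is the Leibniz action of $\mathcal{D}_s$ on the coefficients appearing in \eqref{con3}: the factors $b_{2,j}$, $p_{i,j}'$, and $\lambda_{2m}$. The vanishing $\mathcal{D}_s(\lambda_m) = 0$ eliminates every $\lambda$-prefactor attached to $\mathcal{D}_0$ and $\mathcal{D}_{2k-2}$, while the commutativity $[\mathcal{D}_s, \mathcal{D}_1] = 0$ allows one to push $\mathcal{D}_s$ past the $\mathcal{D}_1$'s hidden inside the symbols $p_{i,j}' = \mathcal{D}_1(p_{i,j})$ and $p_{i,j}'' = \mathcal{D}_1^2(p_{i,j})$. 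Expanding the resulting polynomial combinations of $b_{i,j}$ through \eqref{lam4} and the defining expressions for $p_{i,j}$, the coefficient of each $\mathcal{D}_{2m+1}$ on the right-hand side must collapse to exactly the stated entry, such as $\tfrac{1}{2} p_{3,k}$, $b_{1,k}$, or a prescribed rational multiple of some $\lambda_{2m}$.

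The main obstacle is purely combinatorial: the number of Leibniz contributions grows rapidly with $k$, and the key pattern-matching step is to recognise the resulting polynomial combinations of $b_{i,j}$ as specific $p_{i,j}$'s or $\lambda_{2m}$'s via \eqref{lam4}. This identification is substantially constrained by the homogeneous weight grading ($\wt b_{i,j} = i+j$, $\wt \lambda_{2m} = 2m$, $\wt p_{i,j} = i+j$) together with the parity observation that no odd-weight polynomial in $(\lambda)$ can appear as a coefficient of an odd-indexed $\mathcal{D}_{2j+1}$. Once these constraints are accounted for, the verification is mechanical and is most efficiently executed by computer algebra, the output matching the four displayed matrices in the statement.
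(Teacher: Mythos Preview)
Your proposal is correct and essentially matches the paper's approach: the paper's only proof of this lemma (stated within the proof of Theorem~\ref{td4}) is the sentence that the commutation relations ``are proved by direct computation,'' with no further detail. Your inductive organisation via the Jacobi identity and the recursions~\eqref{con3} is a natural and sound way to structure that direct computation, but it is not a genuinely different method.
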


\begin{lem} \label{c43}
The relations hold for the commutators in the polynomial Lie algebra~$\mathscr{D}$:
\begin{align*}
\begin{pmatrix}
[\mathcal{D}_2, \mathcal{D}_4]\\
[\mathcal{D}_2, \mathcal{D}_6]\\
[\mathcal{D}_2, \mathcal{D}_8]\\
[\mathcal{D}_2, \mathcal{D}_{10}]\\ 
[\mathcal{D}_2, \mathcal{D}_{12}]\\
[\mathcal{D}_2, \mathcal{D}_{14}] 
\end{pmatrix} & =
 \mathcal{C}_2(\lambda)
\begin{pmatrix}
\mathcal{D}_0\\
\mathcal{D}_2\\
\mathcal{D}_4\\
\mathcal{D}_6\\
\mathcal{D}_8\\
\mathcal{D}_{10}\\ 
\mathcal{D}_{12}\\
\mathcal{D}_{14}
\end{pmatrix} + {1 \over 2}
\begin{pmatrix}
- b_{2,3} & b_{2,1} & 0 & 0 \\
- {1 \over 2} p_{3,3}' - b_{2,5} & b_{2,3} & b_{2,1} & 0 \\
- p_{3,5}' - b_{2,7} & b_{2,5} & b_{2,3} & b_{2,1} \\
- p_{3,7}' - {1 \over 2} p_{5,5}' & b_{2,7} & b_{2,5} & b_{2,3} \\
- p_{5,7}' & 0 & b_{2,7} & b_{2,5} \\
- {1 \over 2} p_{7,7}' & 0 & 0 & b_{2,7} 
\end{pmatrix}
\begin{pmatrix}
\mathcal{D}_1\\
\mathcal{D}_3\\
\mathcal{D}_5\\
\mathcal{D}_7
\end{pmatrix},
\end{align*}
where the polynomial matrix $\mathcal{C}_2(\lambda) = (c_{2,2j}^{2k}(\lambda))$ is given in \eqref{C2}.
\end{lem}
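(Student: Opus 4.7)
The identity consists of six separate commutator relations, one per row, for $k\in\{2,3,4,5,6,7\}$. The first five are essentially tautologies of the construction of $\mathscr{D}$, while the last (the case $[\mathcal{D}_2,\mathcal{D}_{14}]$) requires separate work since $\mathcal{D}_{16}$ is not a generator of $\mathscr{D}$.

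For $k\in\{2,3,4,5,6\}$, my plan is to solve the defining recursion \eqref{con3} of Theorem~\ref{td4} for $[\mathcal{D}_2,\mathcal{D}_{2k}]$. All five relations have the uniform shape
\begin{align*}
2(k+1)\mathcal{D}_{2k+2} &= 2[\mathcal{D}_2,\mathcal{D}_{2k}] + (\text{combination of }\mathcal{D}_1,\mathcal{D}_3,\mathcal{D}_5,\mathcal{D}_7) \\
&\quad - \frac{8(2g-k)}{2g+1}\bigl(\lambda_{2k+2}\mathcal{D}_0-\lambda_4\mathcal{D}_{2k-2}\bigr),
\end{align*}
and simple rearrangement recovers exactly the $k$th row of the stated identity. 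The main even-index piece reproduces the $k$th row of $\mathcal{C}_2(\lambda)\widetilde L$ with $L_{2s}$ replaced by $\mathcal{D}_{2s}$, because the coefficients $2(k+1)$ and $\frac{4(2g-k)}{2g+1}$ in \eqref{con3} are exactly those dictated by Lemma~\ref{lcijk} and visible in \eqref{C2}; the odd-index remainder matches the $k$th row of the correction matrix in the statement once one divides through by the outer factor $2$.

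For the bottom row $[\mathcal{D}_2,\mathcal{D}_{14}]$ I would instead invoke projectability. By Theorem~\ref{td4} each $\mathcal{D}_{2m}$ pushes forward under $\rho$ to $L_{2m}$, so $[\mathcal{D}_2,\mathcal{D}_{14}]$ pushes forward to $[L_2,L_{14}]$, which by Example~\ref{ex61} equals the bottom row of $\mathcal{C}_2(\lambda)\widetilde L$ (the $L_{16}$-term being absent, $L_{16}$ not being a generator for $g=4$). Consequently
\[
[\mathcal{D}_2,\mathcal{D}_{14}] - \sum_{j=0}^{7} c_{2,14}^{2j}(\lambda)\mathcal{D}_{2j} = \alpha_1\mathcal{D}_1+\alpha_3\mathcal{D}_3+\alpha_5\mathcal{D}_5+\alpha_7\mathcal{D}_7
\]
for some rational $\alpha_s$ of weight $16-s$, since by Corollary~\ref{cords} the fields $\mathcal{D}_1,\mathcal{D}_3,\mathcal{D}_5,\mathcal{D}_7$ span the kernel of the pushforward and are independent on $\rho^{-1}(\mathcal{B})$. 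Evaluating both sides on $b_{1,1},b_{1,3},b_{1,5},b_{1,7}$, whose images under each $\mathcal{D}_s$ are explicit from Lemmas~\ref{ld1}, \ref{ld2} and the formulas \eqref{con1}, produces a $4\times 4$ linear system whose solution is $\alpha_1=-\frac{1}{4} p_{7,7}'$, $\alpha_3=\alpha_5=0$, $\alpha_7=\frac{1}{2} b_{2,7}$, reproducing the last row of the correction matrix.

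The main obstacle is the sheer length of the bookkeeping in the first step: substituting \eqref{lam4} for the $\lambda_s$ and $p_{i,j}$, propagating $\mathcal{D}_1$-derivatives through the Kleinian relations of Theorem~\ref{t31}, and collecting terms of the correct weight is a sizeable but entirely mechanical computation. The conceptual content is narrow: the first five rows are pure rearrangements of \eqref{con3}, and the last row reduces via projectability to a small linear-algebra problem inside the rank-$4$ kernel of the pushforward map.
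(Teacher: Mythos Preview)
Your proposal is correct, and in fact cleaner than what the paper offers. The paper disposes of Lemmas~\ref{c41}--\ref{c44} in a single sentence inside the proof of Theorem~\ref{td4}: ``The commutation relations \ldots\ are proved by direct computation.'' No structure is indicated; presumably one expands every $\mathcal{D}_{2k}$ from the recursion, applies them to each $b_{i,j}$, and compares.

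Your observation that the first five rows are literal rearrangements of the defining relations \eqref{con3} is a genuine simplification: once $\mathcal{D}_6,\ldots,\mathcal{D}_{14}$ are \emph{defined} by those identities, solving each for $[\mathcal{D}_2,\mathcal{D}_{2k}]$ is one line of algebra, and the coefficients match the rows of $\mathcal{C}_2(\lambda)$ by design (Lemma~\ref{lcijk}). There is no need for the substitutions from \eqref{lam4} or the Kleinian relations you mention in your last paragraph --- that machinery only enters for the bottom row. So your ``main obstacle'' paragraph overstates the work required in the first step.

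For $[\mathcal{D}_2,\mathcal{D}_{14}]$ your projectability argument is sound: the pushforward fixes the even part, and the odd residue lies in the rank-$4$ span of $\mathcal{D}_1,\mathcal{D}_3,\mathcal{D}_5,\mathcal{D}_7$ on $\rho^{-1}(\mathcal{B})$, so evaluating on $b_{1,1},b_{1,3},b_{1,5},b_{1,7}$ determines the four coefficients. This is still a computation (one must unwind $\mathcal{D}_{14}$ through the full recursion to know $\mathcal{D}_{14}(b_{1,j})$), but it is a focused $4\times 4$ problem rather than a wholesale verification on all twelve coordinates. The paper's brute-force route and yours arrive at the same place; yours explains \emph{why} five of the six rows cost nothing.
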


\begin{lem} \label{c44}
Set $p_{1,k} = 2 b_{1,k}$ and $q_{i,j,k} = \mathcal{D}_i(p_{j,k})$ for $i,j,k \in \{1,3,5,7\}$.
The relations for $m,l \in {2,3,4,5,6}$, $m \leqslant l$ hold for the commutators in the polynomial Lie algebra~$\mathscr{D}$:
\[
 [\mathcal{D}_{2m}, \mathcal{D}_{2l+2}] = \sum_{k=0}^{7} c_{2m,2l+2}^{2k}(\lambda) \mathcal{D}_{2k} + (A_{2m,2l+2}) \begin{pmatrix}
\mathcal{D}_1 &
\mathcal{D}_3 &
\mathcal{D}_5 &
\mathcal{D}_7
\end{pmatrix}^\top,
\]
where
\[
\begin{pmatrix}
A_{4,6}\\
A_{4,8}\\
A_{4,10}\\ 
A_{4,12}\\
A_{4,14}\\
A_{6,8}\\
A_{6,10}\\ 
A_{6,12}\\
A_{6,14}\\
A_{8,10}\\ 
A_{8,12}\\
A_{8,14}\\
A_{10,12}\\ 
A_{10,14}\\
A_{12,14}
\end{pmatrix} = {1 \over 4}
\begin{pmatrix}
- q_{3,3,3} & q_{1,3,3} - 2 q_{1,1,5} & 2 q_{1,1,3} & 0 \\
- 2 q_{3,3,5} & - 2 q_{1,1,7} & 2 q_{1,3,3} &  2 q_{1,1,3} \\
- 2 q_{3,3,7} - q_{3,5,5} & - q_{1,5,5} & 2 q_{1,3,5} & 2 q_{1,3,3} \\
- 2 q_{3,5,7} & - 2 q_{1,5,7} & 2 q_{1,3,7} & 2 q_{1,3,5} \\
- q_{3,7,7} & - q_{1,7,7} & 0 & 2 q_{1,3,7} \\
q_{3,3,7} - 2 q_{3,5,5} & 2 q_{1,5,5} -  q_{3,3,5} - 2 q_{1,3,7} & q_{3,3,3} - 2 q_{1,1,7} & q_{1,3,3} + 2 q_{1,1,5}\\
- q_{5,5,5} - 2 q_{3,5,7} & 2 q_{1,5,7} - q_{3,5,5} - q_{3,3,7} & q_{3,3,5} +  q_{1,5,5} - 2 q_{1,3,7} & q_{3,3,3} + 2 q_{1,3,5} \\
- 2 q_{5,5,7} & - 2 q_{3,5,7} & q_{3,3,7} & q_{3,3,5} + 2 q_{1,5,5} \\
- q_{5,7,7} & - q_{3,7,7} & -  q_{1,7,7} & q_{3,3,7} + 2 q_{1,5,7} \\
- 2 q_{3,7,7} - q_{5,5,7} & - q_{5,5,5} + 2 q_{1,7,7} & 2 q_{1,5,7} + q_{3,5,5} - 2 q_{3,3,7} & 2 q_{3,3,5} - q_{1,5,5} \\
- 2 q_{5,7,7} & - 2 q_{5,5,7} & 2 q_{1,7,7} & 2 q_{3,5,5} \\
- q_{7,7,7} & - q_{5,7,7} & - q_{3,7,7} & 2 q_{3,5,7} + q_{1,7,7}\\
0 & - 2 q_{5,7,7} & 2 q_{3,7,7} - q_{5,5,7}& q_{5,5,5} \\
0 & - q_{7,7,7} & - q_{5,7,7} & q_{5,5,7} + q_{3,7,7} \\
0 & 0 & - q_{7,7,7} & q_{5,7,7} \\
\end{pmatrix}.
\]
\end{lem}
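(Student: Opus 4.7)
The plan is to derive Lemma \ref{c44} by transferring the analogous commutation relations for the derivations $\mathcal{L}_{2m}, \mathcal{L}_{2l+2}$ of $\mathcal{F}$ through the intertwining $\mathcal{L}_k(\varphi^* b_{i,j}) = \varphi^* \mathcal{D}_k(b_{i,j})$ established in Theorem \ref{tcomp}. First I would extend Lemma \ref{l24} to the commutators $[\mathcal{L}_{2m}, \mathcal{L}_{2l+2}]$ with $m \geq 2$: using the explicit form of $\mathcal{L}_{2m}$ in Theorem \ref{tg4} together with the expressions for $\mathcal{L}_{2k}\zeta_s$ from Theorem 6.3 in \cite{BB20}, the commutator decomposes as $\sum_k c_{2m,2l+2}^{2k}(\lambda) \mathcal{L}_{2k}$ plus an $\mathcal{F}$-linear combination of $\partial_1, \partial_3, \partial_5, \partial_7$ whose coefficients are symmetric expressions in the third derivatives $\wp_{i,j,k}$ of $\ln\sigma$.

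Second, the projectability of each $\mathcal{D}_{2k}$ for $\rho$ with pushforward $L_{2k}$, established in the proof of Theorem \ref{td4}, implies that $[\mathcal{D}_{2m}, \mathcal{D}_{2l+2}]$ is projectable with pushforward $[L_{2m}, L_{2l+2}] = \sum_k c_{2m,2l+2}^{2k}(\lambda) L_{2k}$. This determines the even-index coefficients on the right-hand side of Lemma \ref{c44}. The residual
\[
\mathcal{E} := [\mathcal{D}_{2m}, \mathcal{D}_{2l+2}] - \sum_{k=0}^{7} c_{2m,2l+2}^{2k}(\lambda)\, \mathcal{D}_{2k}
\]
is then projectable with zero pushforward, and arguing as at the end of the proof of Theorem \ref{tcomp} (comparison of homogeneous weights together with the linear independence of $\mathcal{D}_1, \mathcal{D}_3, \mathcal{D}_5, \mathcal{D}_7$ at any point of $\rho^{-1}(\mathcal{B})$), the vector field $\mathcal{E}$ must be a combination $A_1 \mathcal{D}_1 + A_3 \mathcal{D}_3 + A_5 \mathcal{D}_5 + A_7 \mathcal{D}_7$ with coefficients that are polynomials in the $b_{i,j}$. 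Applying $\varphi^*$ and matching with the first step identifies the $A_j$ with the expressions in the $q_{i,j,k}$ displayed in the matrix, using $\varphi^* q_{i,j,k} = 2 \wp_{i,j,k}$ (Theorem \ref{t31}) together with the symmetry $q_{i,j,k} = q_{\sigma(i,j,k)}$ in its last two entries (and, via Theorem \ref{t31}, in all three when $i \in \{1,3,5,7\}$).

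Finally I would verify the fifteen rows of the matrix. The main obstacle is bookkeeping: each $\mathcal{D}_{2k}$ for $k \geq 3$ is defined in \eqref{con3} recursively as a nested commutator $[\mathcal{D}_2, \mathcal{D}_{2k-2}]$ corrected by an explicit combination of $\mathcal{D}_0, \mathcal{D}_2, \mathcal{D}_1, \mathcal{D}_3, \mathcal{D}_5, \mathcal{D}_7$, so unfolding $[\mathcal{D}_{2m}, \mathcal{D}_{2l+2}]$ generates many terms whose simplification relies on the Jacobi identity, on Lemmas \ref{c41} and \ref{c42}, and on the recursion \eqref{con3} itself; weight homogeneity ($\wt A_{2m,2l+2,j} = 2m+2l+2-j$ with $j$ odd) eliminates many candidate monomials and constrains the numerical coefficients. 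In practice the check reduces to applying both sides of the claimed identity to $b_{1,1}, b_{1,3}, b_{1,5}, b_{1,7}$ and invoking the uniqueness argument above, which is a finite but lengthy symbolic calculation that can be carried out case by case for each pair $(m,l)$ with $2 \leq m \leq l \leq 6$.
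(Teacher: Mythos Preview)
The paper's proof is a single sentence inside the proof of Theorem~\ref{td4}: the commutation relations in Lemmas~\ref{c41}--\ref{c44} ``are proved by direct computation'' on the explicitly constructed polynomial vector fields~$\mathcal{D}_s$. Your proposal takes a more structured route, first establishing the analogous relations for $[\mathcal{L}_{2m},\mathcal{L}_{2l+2}]$ and then transferring them through Theorem~\ref{tcomp}; in the paper the logic runs the opposite way, with Corollary~\ref{corlast} \emph{derived from} Lemma~\ref{c44} rather than feeding into it.

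Your route is not circular (Theorem~\ref{tcomp} relies only on Lemma~\ref{c42}), and the projectability argument does pin down the even-index part of the commutator. One point needs care: the phrase ``arguing as at the end of the proof of Theorem~\ref{tcomp}'' does not carry over cleanly. In that proof the weight-parity trick forces the odd-index coefficients $\alpha_{2s}^j$ to be functions of $\lambda$ alone and hence zero; here the coefficients $A_j$ are genuinely nonconstant polynomials in~$b$, so you would need a separate argument that the module of polynomial vector fields with zero pushforward under~$\rho$ is generated over the polynomial ring by $\mathcal{D}_1,\mathcal{D}_3,\mathcal{D}_5,\mathcal{D}_7$. Since your final paragraph falls back on applying both sides to $b_{1,1},b_{1,3},b_{1,5},b_{1,7}$ and checking equality---which is exactly the paper's direct computation---the structural overlay clarifies why the answer has the displayed form but does not shorten the verification.
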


\begin{cor}[From Lemma \ref{c44} and Theorem \ref{tcomp}] \label{corlast}
The relations for $m,l \in {2,3,4,5,6}$, $m \leqslant l$ hold for the commutators in the polynomial Lie algebra~$\mathscr{L}$:
\[
 [\mathcal{L}_{2m}, \mathcal{L}_{2l+2}] = \sum_{k=0}^{7} c_{2m,2l+2}^{2k}(\lambda) \mathcal{L}_{2k} + (\mathcal{A}_{2m,2l+2}) \begin{pmatrix}
\mathcal{L}_1 &
\mathcal{L}_3 &
\mathcal{L}_5 &
\mathcal{L}_7
\end{pmatrix}^\top,
\]
where
\[
\begin{pmatrix}
\mathcal{A}_{4,6}\\
\mathcal{A}_{4,8}\\
\mathcal{A}_{4,10}\\ 
\mathcal{A}_{4,12}\\
\mathcal{A}_{4,14}\\
\mathcal{A}_{6,8}\\
\mathcal{A}_{6,10}\\ 
\mathcal{A}_{6,12}\\
\mathcal{A}_{6,14}\\
\mathcal{A}_{8,10}\\ 
\mathcal{A}_{8,12}\\
\mathcal{A}_{8,14}\\
\mathcal{A}_{10,12}\\ 
\mathcal{A}_{10,14}\\
\mathcal{A}_{12,14}
\end{pmatrix} = {1 \over 2}
\begin{pmatrix}
- \wp_{3,3,3} & \wp_{1,3,3} - 2 \wp_{1,1,5} & 2 \wp_{1,1,3} & 0 \\
- 2 \wp_{3,3,5} & - 2 \wp_{1,1,7} & 2 \wp_{1,3,3} &  2 \wp_{1,1,3} \\
- 2 \wp_{3,3,7} - \wp_{3,5,5} & - \wp_{1,5,5} & 2 \wp_{1,3,5} & 2 \wp_{1,3,3} \\
- 2 \wp_{3,5,7} & - 2 \wp_{1,5,7} & 2 \wp_{1,3,7} & 2 \wp_{1,3,5} \\
- \wp_{3,7,7} & - \wp_{1,7,7} & 0 & 2 \wp_{1,3,7} \\
\wp_{3,3,7} - 2 \wp_{3,5,5} & 2 \wp_{1,5,5} -  \wp_{3,3,5} - 2 \wp_{1,3,7} & \wp_{3,3,3} - 2 \wp_{1,1,7} & \wp_{1,3,3} + 2 \wp_{1,1,5}\\
- \wp_{5,5,5} - 2 \wp_{3,5,7} & 2 \wp_{1,5,7} - \wp_{3,5,5} - \wp_{3,3,7} & \wp_{3,3,5} +  \wp_{1,5,5} - 2 \wp_{1,3,7} & \wp_{3,3,3} + 2 \wp_{1,3,5} \\
- 2 \wp_{5,5,7} & - 2 \wp_{3,5,7} & \wp_{3,3,7} & \wp_{3,3,5} + 2 \wp_{1,5,5} \\
- \wp_{5,7,7} & - \wp_{3,7,7} & -  \wp_{1,7,7} & \wp_{3,3,7} + 2 \wp_{1,5,7} \\
- 2 \wp_{3,7,7} - \wp_{5,5,7} & - \wp_{5,5,5} + 2 \wp_{1,7,7} & 2 \wp_{1,5,7} + \wp_{3,5,5} - 2 \wp_{3,3,7} & 2 \wp_{3,3,5} - \wp_{1,5,5} \\
- 2 \wp_{5,7,7} & - 2 \wp_{5,5,7} & 2 \wp_{1,7,7} & 2 \wp_{3,5,5} \\
- \wp_{7,7,7} & - \wp_{5,7,7} & - \wp_{3,7,7} & 2 \wp_{3,5,7} + \wp_{1,7,7}\\
0 & - 2 \wp_{5,7,7} & 2 \wp_{3,7,7} - \wp_{5,5,7}& \wp_{5,5,5} \\
0 & - \wp_{7,7,7} & - \wp_{5,7,7} & \wp_{5,5,7} + \wp_{3,7,7} \\
0 & 0 & - \wp_{7,7,7} & \wp_{5,7,7} \\
\end{pmatrix}.
\]
\end{cor}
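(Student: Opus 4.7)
The plan is to derive the corollary as a direct pullback of the commutation relations in $\mathscr{D}$ given in Lemma \ref{c44}, using the intertwining property established in Theorem \ref{tcomp}. The core observation is that the relation $\mathcal{L}_k(\varphi^* b_{i,j}) = \varphi^*\mathcal{D}_k(b_{i,j})$ combined with the derivation property extends by the chain rule to
\[
\mathcal{L}_k(\varphi^* f) = \varphi^*(\mathcal{D}_k f) \qquad \text{for every polynomial } f \in \mathbb{C}[b].
\]
Iterating this identity gives $[\mathcal{L}_{2m},\mathcal{L}_{2l+2}](\varphi^* b_{i,j}) = \varphi^*\!\bigl([\mathcal{D}_{2m},\mathcal{D}_{2l+2}](b_{i,j})\bigr)$ for all admissible $i,j$.

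First I would substitute the formula of Lemma \ref{c44} inside $\varphi^*$ and distribute the pullback. The polynomials $c_{2m,2l+2}^{2k}(\lambda)$ are the same polynomials in $\lambda$ on both sides, since the $\lambda$-coordinates are identified by construction of $\varphi$ through $\rho\circ\varphi=\pi$. Each $\mathcal{D}_{2k}$ term on the right transforms via the intertwining property into the corresponding $\mathcal{L}_{2k}$ term acting on $\varphi^* b_{i,j}$. The only nontrivial translation concerns the coefficients $q_{i,j,k}$: since $\varphi^* p_{j,k}=2\wp_{j,k}$ (with $p_{1,k}=2b_{1,k}$ pulling back to $2\wp_{1,k}$) and $\mathcal{L}_i = \partial_i$ for $i\in\{1,3,5,7\}$, we get
\[
\varphi^* q_{i,j,k} = \varphi^*\mathcal{D}_i(p_{j,k}) = \mathcal{L}_i(2\wp_{j,k}) = 2\wp_{i,j,k}.
\]
This factor of $2$ precisely accounts for the passage from the prefactor $1/4$ in the matrix $A_{2m,2l+2}$ to the prefactor $1/2$ in the matrix $\mathcal{A}_{2m,2l+2}$, and matches entry by entry.

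To pass from an identity of the form $[\mathcal{L}_{2m},\mathcal{L}_{2l+2}](\varphi^* b_{i,j}) = (\text{claimed RHS})(\varphi^* b_{i,j})$ to the asserted operator identity in $\mathscr{L}$, I would invoke the fact recalled in Section \ref{S0} that any hyperelliptic function is a rational function in $\varphi^* b_{1,j}=\wp_{1,j}$ and $\varphi^* b_{2,j}=\wp_{1,1,j}$. Hence any derivation of $\mathcal{F}$ is determined by its action on the family $\{\varphi^* b_{i,j} : i\in\{1,2\},\ j\in\{1,3,5,7\}\}$; Theorem \ref{tcomp} actually gives us the agreement on $i\in\{1,2,3\}$, which is more than enough. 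Both sides of the claimed identity in the corollary are derivations of $\mathcal{F}$, and they agree on these generators, so they coincide.

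The main (and only real) obstacle is bookkeeping: verifying that the entry-by-entry identification $\tfrac{1}{4}q_{i,j,k}\mapsto \tfrac{1}{2}\wp_{i,j,k}$ matches the two matrices in their full generality. This is a purely mechanical consistency check across $15$ rows and $4$ columns, but there is no conceptual difficulty; the structural part of the argument is the pullback principle above, and no further input about the Lie algebra structure of $\mathscr{L}$ beyond Theorem \ref{tcomp} and Lemma \ref{c44} is required.
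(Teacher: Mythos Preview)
Your argument is correct and is precisely the derivation the paper intends: the corollary is labelled ``From Lemma~\ref{c44} and Theorem~\ref{tcomp}'' with no further text, and your pullback argument via the intertwining relation $\mathcal{L}_k(\varphi^* f)=\varphi^*(\mathcal{D}_k f)$, together with the translation $\varphi^* q_{i,j,k}=2\wp_{i,j,k}$ and the generation of $\mathcal{F}$ by the $\varphi^* b_{i,j}$, is exactly how the paper expects the reader to obtain it.
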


\vfill
\eject

\section{Polynomial dynamical systems in $\mathbb{C}^{12}$ related to~differentiations of~genus~$4$ hyperelliptic~functions} \label{S4Sys}

As we have noted throughout the work, the generators $\mathcal{L}_k$ of $\Der \mathcal{F}$ and the polynomial vector fields $\mathcal{D}_k$ in $\mathbb{C}^{3g}$ are~related by \eqref{keyrel}. The polynomial vector fields $\mathcal{D}_0, \mathcal{D}_1$, and $\mathcal{D}_s$ for $s \in \{3, 5, \ldots, 2g - 1\}$ are given is Section \ref{s4}.
Section \ref{Sdyn} gives the 
graded homogeneous polynomial dynamical systems $S_0, S_1$, and $S_s$ for $s \in \{3, 5, \ldots, 2g - 1\}$ in $\mathbb{C}^{3g}$ determined by these vector fields.

In this Section in the case of genus $g=4$ we determine the remaing 
graded homogeneous polynomial dynamical systems $S_{2k}$ for $k \in \{1,2,3,4,5,6,7\}$. By~definition, the dynamical system $S_{2k}$ corresponding to the vector field $\mathcal{D}_{2k}$ is given by
\[
{\partial \over \partial \tau_{2k}} b_{i,j} = \mathcal{D}_{2k}(b_{i,j}).
\]

It has been noted in \cite{BL} in the case of genus $g=2$ that to determine such systems it is sufficient to determine the polynomials $\mathcal{D}_{2k}(b_{1,1})$. Indeed, we have the relation
\[
\mathcal{D}_{2k}(\lambda_m) = L_{2k}(\lambda_m) 
\]
for $m \in \{4,6,8,10,12,14,16,18\}$ that determines the action of $\mathcal{D}_{2k}$ on the coordinates~$(\lambda)$. The relations \eqref{lam4} imply
\begin{align*} 
b_{1, 3} &= - {3\over 2} b_{1,1}^2 + \frac{1}{4} b_{3,1} - {1 \over 2} \lambda_{4}, \\
b_{1, 5} &= b_{1,1}^3 + \frac{1}{8} b_{2,1}^2 - \frac{1}{4} b_{1,1} b_{3,1} - b_{1,1} b_{1, 3} + \frac{1}{4} b_{3, 3} - {1 \over 2} \lambda_{6},  \\
b_{1, 7} &= 2 b_{1,1}^2 b_{1, 3} + {1 \over 2} b_{1, 3} b_{1, 3} - b_{1,1} b_{1, 5} + \frac{1}{4} b_{3, 5} - \frac{1}{4} (b_{3,1} b_{1, 3} - b_{2,1} b_{2, 3} + b_{1,1} b_{3, 3}) - {1 \over 2} \lambda_{8}.
\end{align*}
Corollary \ref{cords} implies $\mathcal{D}_1(\lambda_m) = 0$ and
Lemma \ref{ld1} implies
\begin{align*} 
b_{2, s} &= \mathcal{D}_1(b_{1, s}), & b_{3, s} &= \mathcal{D}_1(\mathcal{D}_1(b_{1, s})), & \text{for } & s \in \{1,3,5,7\}. 
\end{align*}
Finally, the relation \eqref{d14} determines $\mathcal{D}_{2k}(\mathcal{D}_1^{(n)}(b_{1, 1}))$ provided $\mathcal{D}_{2k}(b_{1, 1})$ is given. To conclude, we give explicitly the values $\mathcal{D}_{2k}(b_{1,1})$ for $k \in \{1,2,3,4,5,6,7\}$. By \eqref{con1} we have
\begin{align*}
\mathcal{D}_2(b_{1,1}) &= - b_{1,1}^2  + 2 b_{1,3} + {1 \over 2} b_{3,1} + {7 \over 9} \lambda_4,
\\
\mathcal{D}_4(b_{1,1}) &= b_{3,3} - 2 b_{1,1} b_{1,3} + 2 b_{1,5} + {2 \over 3} \lambda_6.
\end{align*}
The remaining values are
\begin{align*}
\mathcal{D}_6(b_{1,1}) &= -b_{1,3}^2  + 2 b_{1,7} - 2 b_{1,1} b_{1,5} + {1 \over 2} b_{3,1} b_{1, 3} - {1 \over 2} b_{3,3} b_{1,1} + {3 \over 2} b_{3,5} + {5 \over 9} \lambda_8,
\\
\mathcal{D}_8(b_{1,1}) &= b_{1,5} b_{3,1} - b_{1,1} b_{3,5} - 2 b_{1,3} b_{1,5} -2 b_{1,1} b_{1, 7} + 2 b_{3,7} + {4 \over 9} \lambda_{10},
\\
\mathcal{D}_{10}(b_{1,1}) &= -b_{1, 5}^2 - 2 b_{1,3} b_{1,7} + {1 \over 2} b_{3,3} b_{1,5} - {1 \over 2} b_{3,5} b_{1,3} + {3 \over 2} b_{3, 1} b_{1,7} - {3 \over 2} b_{3,7} b_{1, 1} + {1 \over 3} \lambda_{12},
\\
\mathcal{D}_{12}(b_{1,1}) &= b_{1,7} b_{3,3} - b_{1,3} b_{3, 7} - 2 b_{1,5} b_{1,7} + {2 \over 9} \lambda_{14},
\\
\mathcal{D}_{14}(b_{1,1}) &= -b_{1,7}^2 + {1 \over 2} b_{3, 5} b_{1,7} -{1 \over 2} b_{3,7} b_{1,5} + {1 \over 9} \lambda_{16}.
\end{align*}
\vfill

\end{document}